\documentclass[aps,prl,twocolumn,superscriptaddress,longbibliography,reprint]{revtex4-2}

\usepackage[T1]{fontenc}
\usepackage{amsmath}
\usepackage{amssymb}
\usepackage{amsfonts}
\usepackage{amsthm}
\usepackage{mathtools}
\usepackage{bm}
\usepackage{bbm}
\usepackage{graphicx}
\usepackage[dvipsnames]{xcolor}

\usepackage{times}

\definecolor{myrefcolor}{rgb}{0.067,0.5,0.5}

\newtheorem{theorem}{Theorem}
\newtheorem{corollary}{Corollary}
\newtheorem{proposition}{Proposition}
\newtheorem{lemma}{Lemma}

\newtheorem{conjecture}{Conjecture}

\newcommand{\CC}{\mathbb C}
\newcommand{\E}{\mathbb E}
\newcommand{\id}{\mathbbm 1}
\newcommand{\Schur}{\mathbb S}
\newcommand{\A}{\mathcal A}
\newcommand{\CN}{\mathcal C}
\newcommand{\GB}{G_{\mathrm B}}
\newcommand{\tr}{\operatorname{Tr}}
\newcommand{\ket}[1]{\lvert #1\rangle}
\newcommand{\bra}[1]{\langle #1\rvert}
\newcommand{\braket}[2]{\langle #1\,\vert\,#2\rangle}
\newcommand{\matel}[3]{\langle #1\vert\,#2\,\vert #3\rangle}
\newcommand{\wedgepow}[1]{\textstyle\bigwedge^{#1}}

\newcommand{\Gr}{\mathrm{Gr}}
\newcommand{\Flag}{\mathrm{Fl}}
\newcommand{\dTV}{d_{\mathrm{TV}}}
\newcommand{\Tw}{\mathcal T}

\DeclareMathOperator{\Var}{Var}
\DeclareMathOperator{\SU}{SU}
\DeclareMathOperator{\U}{U}

\usepackage[
  breaklinks=true,
  colorlinks=true,
  linkcolor=myrefcolor,
  citecolor=myrefcolor,
  urlcolor=myrefcolor
]{hyperref}

\hypersetup{
  pdftitle={The magic cost of fermionic randomness},
  pdfauthor={Chenfeng Cao, Yifan Tang, and Jens Eisert}
}

\begin{document}

\title{The magic cost of fermionic randomness}
\author{Chenfeng Cao}
\email{chenfeng.cao@fu-berlin.de}
\affiliation{HK Institute of Quantum Science \& Technology, The University of Hong Kong, Hong Kong, China}
\affiliation{Dahlem Center for Complex Quantum Systems, Freie Universit\"{a}t Berlin, 14195 Berlin, Germany}

\author{Yifan Tang}
\affiliation{Dahlem Center for Complex Quantum Systems, Freie Universit\"{a}t Berlin, 14195 Berlin, Germany}

\author{Jens Eisert}
\email{jense@zedat.fu-berlin.de}
\affiliation{Dahlem Center for Complex Quantum Systems, Freie Universit\"{a}t Berlin, 14195 Berlin, Germany}
\affiliation{Helmholtz-Zentrum Berlin f\"{u}r Materialien und Energie, 14109 Berlin, Germany}

\begin{abstract}
Sampling-based quantum advantage raises a basic resource question: how much non-Gaussianity is needed to reproduce the random-matrix statistics often associated with quantum chaos? Passive fermionic linear optics preserves Gaussian structure, and even Haar-random mode mixing fails to anticoncentrate for Slater inputs. We show that overcoming this obstruction requires an extensive input resource. For any pure half-filled input state, the one-body defect $\nu$, defined as the summed occupation variance of the natural orbitals, must scale linearly with system size. For products of bounded-size blocks on $4N$ modes, this condition is also sufficient, with normalized collision moment asymptotic to $2N/\nu$, compared with the Porter-Thomas value 2. For four-mode products, the full Porter-Thomas moment hierarchy is recovered precisely when the defect density tends to its maximum. Assuming average-case hardness, approximate classical sampling remains hard near maximal density, with an error tolerance that increases with the density. Two-copy number fluctuations measure $\nu$ directly, and the same quantity lower-bounds the combined cost of non-Gaussian resource modes and parity-preserving local gates in deterministic exact preparation. Fermionic randomness therefore carries an extensive magic cost, with a quantitative link between sampling statistics, preparation resources, and an experimentally accessible witness.
\end{abstract}

\maketitle

Anticoncentration is a key statistical ingredient in standard hardness arguments for approximate quantum sampling~\cite{Hangleiter2023Computational,Aaronson2011The,Bremner2016Average,Bouland2019On}. It requires the probability of a fixed outcome, rescaled by the number of allowed outcomes, to exceed a positive threshold with probability bounded away from zero as the system grows~\cite{Hangleiter2018Anticoncentration}. Haar-random complex states satisfy the stronger benchmark of asymptotically exponential rescaled probabilities, with Porter-Thomas moments $\ell!$. These statistics resemble the behavior one expects in suitable regimes of quantum-chaotic dynamics and provide a precise benchmark for the emergence of randomness. How much non-Gaussianity must an input supply for random passive fermionic dynamics to exhibit either behavior? We address this resource question while keeping the evolution entirely Gaussian and placing all non-Gaussianity in the input.

Passive \emph{fermionic linear optics} (FLO) shows that an exponentially large Hilbert space is not enough. It acts on the full fixed-number sector, but it only rotates the modes: every Slater determinant is mapped to another Slater determinant, and the resulting output distribution can be sampled classically~\cite{Valiant2002Quantum,Terhal2002Classical}. For a Slater-determinant input at half filling of $4N$ modes and a Haar-random mode rotation, the normalized second moment, the collision moment used below, grows as $\Theta(N)$ instead of approaching $2$. Relative to Porter--Thomas statistics, probability remains concentrated by a growing factor. The obstruction is therefore the Gaussian structure preserved by the dynamics. The fermion-sampling scheme of Ref.\ \cite{Oszmaniec2022Fermion} has shown that a particular four-mode non-Gaussian input can restore anticoncentration and support sampling-hardness arguments, but has left open how densely that resource must be supplied.

Classical simulability alone does not explain this obstruction. Clifford circuits are classically simulable, yet the multiqubit Clifford group is a unitary
$3$-design~\cite{Webb2016Clifford,Zhu2017Multiqubit} and therefore anticoncentrates. Polylogarithmically many
non-Clifford gates can already restore Haar overlap statistics in doped Clifford
circuits~\cite{Magni2025Anticoncentration,Magni2025QuantumComplexity,
Leone2026NonClifford}. Non-Gaussianity is the magic
resource for matchgate computation~\cite{Hebenstreit2019All} and is quantified by fermionic resource theories~\cite{Sierant2026Fermionic}. 
Deciding stabilizer-polytope membership for general mixed qubit states is computationally hard under reasonable assumptions~\cite{HardnessMagic}, motivating accessible resource bounds in structured settings.
Can a vanishing density of it likewise suffice? Other fermionic architectures distribute non-Gaussianity throughout the evolution through gates or adaptive
measurements~\cite{Paviglianiti2026Emergence,Cao2026Measurement}. Recent work bounds the non-Gaussian gate count needed for approximate 2-designs~\cite{Trigueros2026Unitary,Leone2026FermionicEntropy}, while pseudorandom fermionic states require an extensive number of local non-Gaussian gates~\cite{Haug2026Practical}. In our setting the
evolution remains passive and the entire resource lies in the input.

We show that a vanishing density of input magic cannot suffice. Even anticoncentration, a much weaker requirement than Porter--Thomas statistics, requires an extensive resource for every pure half-filled input. With one-body correlation matrix $\Gamma$, define the defect $\nu=\tr[\Gamma(\id-\Gamma)]$. It sums the natural-orbital occupation variances and measures the one-body activity available to passive mixing. We prove that every anticoncentrating pure input has $\nu=\Omega(N)$. For products of bounded-size blocks the converse holds: their normalized collision moment obeys $\CN_N=(2N/\nu)[1+o(1)]$ as $\nu\to\infty$, so they anticoncentrate precisely when $\nu=\Theta(N)$, and for four-mode products the full hierarchy of fixed-order Porter--Thomas moments is recovered if and only if $\nu/N\to1$. Beyond products, defect alone is not sufficient. A two-copy beam splitter makes $\nu$ directly visible as half the measured number-difference variance, while the weight of the doubled input in the fully Pauli-blocked (replica-singlet) sector fixes the collision moment exactly. The same defect lower-bounds exact preparation cost, so every anticoncentrating pure input requires an extensive combined supply of non-Gaussian resource modes and parity-preserving gates of bounded support.

\smallskip
\paragraph*{Setup.}
A Haar-random mode unitary $u\in\U(4N)$ induces the passive-FLO circuit $V=\wedgepow{2N}u$ on $2N$ fermions, followed by a measurement of all mode occupations [Fig.~\ref{fig:witness}(a)]. The $D_N=\binom{4N}{2N}$ outcomes $x$ are the half-filled occupation patterns, and an input state $\Psi$ yields the Born probabilities
$p_V(x)=|\matel{x}{V}{\Psi}|^2$. For a fixed outcome, we write $X_x:=D_N\,p_V(x)$ for the rescaled outcome probability, viewed as a random variable over the circuit ensemble. Mode-permutation invariance makes $\E_Vp_V(x)$ independent of $x$, and normalization gives $\E_VX_x=1$. The normalized collision moment is
\begin{equation}
  \CN_N:=\E_V[X_x^2]
  =D_N\,\E_V\!\left[\sum_y p_V(y)^2\right],
  \label{eq:collisiondef}
\end{equation}
where the equality follows from the same symmetry.

With $c_a$ the annihilation operator for mode $a$, set
$\Gamma_{a,b}=\langle c_a^\dagger c_b\rangle$ and define the one-body defect $\nu:=\tr[\Gamma(\id-\Gamma)]$, the trace idempotency defect of $\Gamma$. The eigenvalues of $\Gamma$ are the natural-orbital occupations~\cite{Lowdin1955Quantum}, so the trace formula sums their occupation variances and measures their departure from empty or full. Passive FLO conjugates $\Gamma$ and leaves $\nu$ invariant. For a pure
fixed-number state, $\nu=0$ exactly when $\Gamma$ is a projector, equivalently when the state is a Slater determinant. At half filling, $0\le\nu\le N$, with $\nu=N$ exactly when every natural orbital is half occupied. The defect density is
\begin{equation}
  \gamma:=\frac{\nu}{N}\in[0,1].
  \label{eq:defectdensity}
\end{equation}

The defect also counts active particles. The mean number of particles occupying the $2N$ least occupied natural orbitals, equivalently the mean number of holes in the $2N$ most occupied ones, lies between $\nu/2$ and $2\nu$ (Supplemental Material, SM). Thus $\nu$ measures, within a factor of two, the particle--hole content above the
natural-orbital Fermi sea. It also recovers existing quantifiers: for a fixed-number state, $4\nu$ is the first fermionic antiflatness of Ref.~\cite{Sierant2026Fermionic} and, up 
to normalization, the quadratic one-body-entanglement monotone of Ref.~\cite{Gigena2020OneBody}. At half filling, $\nu/N=1-(4N)^{-1}\tr[(\id-2\Gamma)^2]$ is the one-body-purity deficit in the convention of Ref.~\cite{Swietek2026OneBody}.

For the explicit formulas below, we specialize to products of four-mode blocks. Group the $4N$ modes into $N$ blocks of four [Fig.~\ref{fig:witness}(a)]. By the Youla normal
form~\cite{Youla1961Normal}, equivalently the two-fermion Slater decomposition~\cite{Schliemann2001Quantum}, every pure two-fermion state vector on four modes can be brought by a block-local passive-FLO unitary to the form
\begin{equation}
\ket{\psi_{\lambda_i}}
=
\sqrt{1-\lambda_i}\,\ket{0,0,1,1}
+
\sqrt{\lambda_i}\,\ket{1,1,0,0},
\label{eq:blocknormal}
\end{equation}
where $0\le\lambda_i\le1/2$ after choosing a phase convention and ordering the two Slater pairs (SM). Since $u$ is Haar-random, these local rotations can be absorbed into it, and we use the product representative $\ket{\Psi_{\bm\lambda}}
=\bigotimes_{i=1}^{N}\ket{\psi_{\lambda_i}}$ throughout. Each block contributes $4\lambda_i(1-\lambda_i)\in[0,1]$ to the defect, so $\nu=4\sum_{i=1}^{N}\lambda_i(1-\lambda_i)$. Within this family, $\nu=N$ exactly when $\lambda_i=1/2$ for every block. The uniform family has $\lambda_i=\lambda$ for all blocks.

\begin{figure}[t]
\centering
\includegraphics[width=0.93\columnwidth]{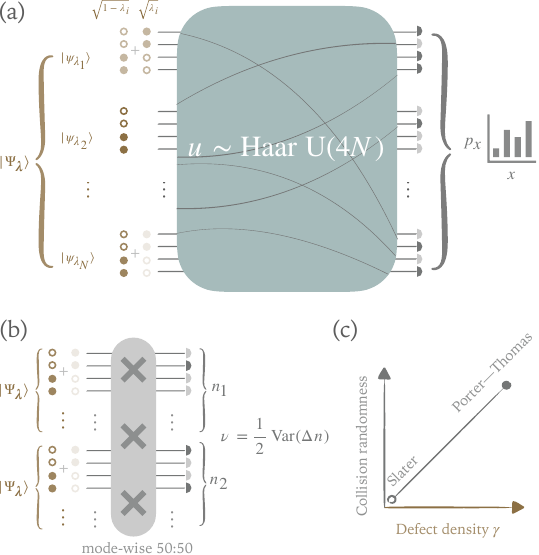}
\caption{\textbf{Fermion sampling and its two-copy witness.}
(a)~A Haar-random mode unitary $u\in\U(4N)$ induces the passive-FLO circuit $V=\wedgepow{2N}u$, which acts on $N$ four-mode blocks $\ket{\psi_{\lambda_i}}$ before an occupation measurement with outcome $x$.
(b)~Two identical copies are mixed by a mode-wise $50{:}50$ beam splitter, followed by occupation measurements. For every pure fixed-number input, $\Var(\Delta n)=2\nu$ [Eq.~\eqref{eq:casimirHOM}]. The full number-difference histogram additionally determines the replica-singlet weight $\A_2$ and hence the exact collision moment. (c)~The asymptotic collision randomness $2/\CN_N$ along the uniform four-mode family, plotted against defect density $\gamma=\nu/N$.
The open circle is the Slater endpoint, where $\CN_N=\Theta(N)$. At fixed $\gamma>0$, $\CN_N\to2/\gamma$, and the Porter--Thomas point
is reached only at $\gamma=1$.}
\label{fig:witness}
\end{figure}

Because $\E_VX_x=1$, $\CN_N$ measures the collision enhancement over
a flat distribution: $\CN_N=1$ for perfectly uniform probabilities. For a Haar-random state the benchmark is $2D_N/(D_N+1)\to2$, equivalently the second moment of
an exponential Porter--Thomas law. We call $\CN_N=O(1)$ collision
anticoncentration~\cite{Dalzell2022Random}. This second-moment
condition is stronger than the tail condition used in
sampling-hardness arguments~\cite{Hangleiter2018Anticoncentration,
Hangleiter2023Computational}: by Paley--Zygmund, it keeps
$\Pr_V[X_x\ge\alpha]$ bounded away from zero for every fixed
$0<\alpha<1$.
Unless stated otherwise, all asymptotic estimates below for four-mode products are uniform over system sizes and block parameters $0\le\lambda_i\le1/2$ with total defect
$4\sum_i\lambda_i(1-\lambda_i)=\nu$. Since $\nu\le N$, the limit $\nu\to\infty$ entails $N\to\infty$ while allowing $\gamma\to0$.

\smallskip
\paragraph*{Two copies on a beam splitter.}
Two-copy interference measures $\nu$ without reconstructing $\Gamma$. Take two identical copies of a pure fixed-number input and mix corresponding modes with identical beam splitters~\cite{Pichler2013Thermal} [Fig.~\ref{fig:witness}(b)]. This uniform mixing is unchanged by a common orbital-basis rotation. For a Slater determinant, every natural orbital is filled in both copies or in neither, so Pauli blocking forbids any transfer at any transmissivity. Fractional occupations allow transfer, and the defect quantifies the resulting number fluctuations. Let $J_+=\sum_jc_j^{(1)\dagger}c_j^{(2)}$ move a particle from copy $2$ to copy $1$, let $J_-=J_+^\dagger$ move it back, and let
$J_z=(\hat n^{(1)}-\hat n^{(2)})/2$, where $\hat n^{(a)}$ is the number operator of
copy $a$. Together with $J_x=(J_++J_-)/2$ and $J_y=(J_+-J_-)/(2i)$ these
operators form a replica angular momentum. The number-conserving Gaussian rotation $e^{-i\varphi J_y}$ is a beam splitter with transmissivity $\cos^2(\varphi/2)$. In the doubled input,
$\langle J_+J_-\rangle=\langle J_-J_+\rangle=\tr[\Gamma(\id-\Gamma)]=\nu$ (SM).

The doubled input has $J_z=0$, so its total replica angular momentum $J^2=J_+J_-+J_z^2-J_z$ has expectation value $\nu$. Rotation symmetry about the $z$ axis gives $\langle J_x\rangle=0$ and $\langle J_x^2\rangle=\langle J_y^2\rangle=\nu/2$. At $50{:}50$ ($\varphi=\pi/2$), the number difference $\Delta n=\hat n^{(1)}-\hat n^{(2)}$ is rotated into $-2J_x$, giving
\begin{equation}
  \nu=\matel{\Psi^{\otimes2}}{J^{2}}{\Psi^{\otimes2}}
  =\frac12\,\Var(\Delta n)\big|_{50{:}50}.
  \label{eq:casimirHOM}
\end{equation}
Related two-copy interference protocols have been realized with bosonic atoms~\cite{Islam2015Measuring,Kaufman2016Quantum}. Bell-sampling estimators access the same one-body defect~\cite{Haug2026Practical,Tarabunga2026Fermionic}.

The component unchanged by these rotations connects this readout to collision statistics. Let $\Pi_2$ project onto the replica-singlet sector and $\A_2(\Psi):=\matel{\Psi^{\otimes2}}{\Pi_2}{\Psi^{\otimes2}}$ be its weight. It equals $1$ exactly for Slater determinants, and $1-\A_2\le\nu/6$ controls its deficit at small $\nu$ (SM). The full $50{:}50$ number-difference histogram determines $\A_2$ and hence the exact collision moment below. The SM gives the reconstruction and treats mixed or mismatched copies.

\smallskip
\paragraph*{Collision statistics from two-copy interference.}
The second moment of a fixed outcome, $\E_V|\matel{x}{V}{\Psi}|^4$, is the expectation value in $\Psi^{\otimes2}$ of the Haar-averaged doubled projector $\E_V[(V^\dagger)^{\otimes2}
(\ket{x}\bra{x})^{\otimes2}V^{\otimes2}]$.
The doubled outcome $\ket{x}^{\otimes2}$ is a doubled Slater determinant and hence a singlet, and $V^{\otimes2}$, acting identically on both copies, commutes with all replica rotations. The singlet sector is irreducible and occurs with multiplicity one under physical mode
rotations (End Matter). Schur's lemma therefore gives the averaged projector $\kappa_N\Pi_2/D_N^2$~\cite{Oszmaniec2022Fermion}, where
$D_N^2/\kappa_N$ is the sector dimension and
$\kappa_N:=(2N+1)^{2}/(4N+1)=N\,[1+O(N^{-1})]$.

The singlet weight is itself an interference signal. The singlet projector is the average over replica $\SU(2)$ rotations. In the Euler decomposition, rotations about $z$ act trivially on the doubled input because $J_z=0$, leaving only the beam-splitter angle $\varphi$
\begin{equation}
  \begin{aligned}
  \A_2(\Psi)
  &=\frac12\int_0^{\pi}
  \mathcal F_{\Psi}(\varphi)\,\sin\varphi\,d\varphi,\\
  \mathcal F_{\Psi}(\varphi)
  &:=\matel{\Psi^{\otimes2}}{e^{-i\varphi J_y}}{\Psi^{\otimes2}}.
  \end{aligned}
  \label{eq:polaraverage}
\end{equation}
The return amplitude $\mathcal F_\Psi$ equals $1$ at $\varphi=0$ and again at $\varphi=\pi$: exchanging identical pure fixed-number copies returns the doubled state with cancelling fermionic phases (SM).

\begin{theorem}[Collision-moment identity]\label{thm:collision}
For every pure half-filled input $\Psi$,
\begin{equation}
  \CN_N(\Psi)=\kappa_N\,\A_2(\Psi),
  \qquad \kappa_N=\frac{(2N+1)^2}{4N+1}.
  \label{eq:order}
\end{equation}
For the four-mode inputs $\ket{\Psi_{\bm\lambda}}$, the exact block-profile integral is
\begin{equation}
  \CN_N(\bm\lambda)=\kappa_N
  \int_0^1
  \prod_{i=1}^{N}
  \Big(1-\lambda_i(1-\lambda_i)(1-z^{2})\Big)\,dz .
  \label{eq:integral}
\end{equation}
Here $z=|\cos\varphi|$ folds the two angular endpoints onto $z=1$.
For this four-mode family, uniformly over system sizes and block profiles,
\begin{equation}
  \CN_N=\frac{2}{\gamma}\,\big[1+O(\nu^{-1})\big]
  \qquad (\nu\to\infty).
  \label{eq:hardedge}
\end{equation}
The leading asymptotic remains uniform for products of blocks whose sizes are bounded independently of $N$ (SM).
\end{theorem}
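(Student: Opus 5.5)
The proof has three parts: the general identity, the block-profile integral, and the asymptotics.

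\emph{General identity.} I use the Schur-lemma statement preceding the theorem: the Haar average of $(V^\dagger)^{\otimes2}(\ket{x}\bra{x})^{\otimes2}V^{\otimes2}$ equals $\kappa_N\Pi_2/D_N^2$. Taking the expectation in $\Psi^{\otimes2}$ gives $\E_V p_V(x)^2=\kappa_N\A_2/D_N^2$. Multiplying by $D_N^2$ and using the definition~\eqref{eq:collisiondef} yields $\CN_N=\kappa_N\A_2$. Two ingredients from the End Matter are taken as given: the singlet sector is irreducible with multiplicity one, and $\kappa_N$ comes from its dimension.

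\emph{Block integral.} I start from Eq.~\eqref{eq:polaraverage}. The generator $J_y$ is a sum over modes, so $e^{-i\varphi J_y}$ factorizes over the four-mode blocks, up to fermionic reordering signs. These signs cancel because each block carries even particle number in each copy. Hence $\mathcal F_{\Psi_{\bm\lambda}}(\varphi)=\prod_i f_{\lambda_i}(\varphi)$, where $f_\lambda$ is the two-copy return amplitude of a single block.

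To compute $f_\lambda$, write $\ket{\psi_\lambda}^{\otimes2}$ as a sum of four terms, each a product of pairs of doubly occupied modes. In each mode pair $(a^{(1)},a^{(2)})$, occupations $(1,1)$ and $(0,0)$ are invariant under the rotation. The rotation acts nontrivially only on the two cross terms, where one copy holds $\ket{0011}$ and the other $\ket{1100}$. There each mode carries a spin-$1/2$, and the amplitude becomes a product of $\cos(\varphi/2)$ factors. Collecting terms, I expect
\[
f_\lambda=(1-\lambda)^2+\lambda^2+2\lambda(1-\lambda)\cos^4(\varphi/2)+2\lambda(1-\lambda)\sin^4(\varphi/2)\cdot(\pm1).
\]
The swap sign should be $+$, consistent with $f=1$ at $\varphi=\pi$. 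Then $\cos^4+\sin^4=(1+\cos^2\varphi)/2$, which gives
\[
f_\lambda=1-\lambda(1-\lambda)(1-\cos^2\varphi).
\]
This expression is even under $\varphi\mapsto\pi-\varphi$. Substituting $z=|\cos\varphi|$ turns $\tfrac12\sin\varphi\,d\varphi$ on $[0,\pi]$ into $dz$ on $[0,1]$, and this gives Eq.~\eqref{eq:integral}. The main bookkeeping risk is the relative fermionic sign of the cross terms. I would check it against the swap property and against $\lambda=0$, where the integrand must equal $1$.

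\emph{Asymptotics.} Set $t=1-z^2$ and $\mu_i=\lambda_i(1-\lambda_i)\le1/4$, so that $\sum_i\mu_i=\nu/4$. Write the integrand as $\exp\bigl(\sum_i\log(1-\mu_i t)\bigr)$ and use
\[
-\mu t-\mu^2t^2/(1-\mu t)\le\log(1-\mu t)\le-\mu t .
\]
The upper bound gives $\int_0^1 e^{-(\nu/4)(1-z^2)}\,dz$. The correction satisfies $\sum_i\mu_i^2t^2\le(\nu/4)t^2\cdot\tfrac14$, which is relatively $O(\nu^{-1})$ in the region that dominates, $t=O(\nu^{-1})$. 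Laplace's method near $z=1$ with $1-z^2\approx2(1-z)$ then gives $\int_0^1\approx 2/\nu\,[1+O(\nu^{-1})]$, uniformly in the profile because only $\nu$ enters at leading order. Multiplying by $\kappa_N=N[1+O(N^{-1})]$, and using $N\ge\nu$ so that $O(N^{-1})\subset O(\nu^{-1})$, yields $2N/\nu=2/\gamma$.

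The hardest step is the extension to general bounded-size blocks. There the single-block amplitude $f(\varphi)$ satisfies $f=1-\tfrac{\nu_i}{4}\sin^2\varphi+O(\sin^4\varphi)$ near both endpoints. This expansion follows from $\langle J^2\rangle=\nu_i$ and the second-order expansion of the return amplitude. One also needs $|f|\le1-c\,\nu_i\sin^2\varphi$ in the bulk, with $c$ depending only on the block size, to control the region away from the endpoints. I would prove this bulk bound by compactness over the finite-dimensional set of block states, normalized by $\nu_i$. Laplace's method at both endpoints then gives the same leading term $2/\nu$.
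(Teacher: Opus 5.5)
Your general identity, your single-block computation and your four-mode asymptotics are correct and follow the paper. The Schur step is the End Matter argument. Fixing the cross-term sign by $\mathcal F(\pi)=1$ reproduces $\mathcal F_{\psi_\lambda}(\varphi)=1-\lambda(1-\lambda)\sin^2\varphi$. Your logarithmic bounds plus the endpoint Laplace estimate give the same profile-uniform $O(\nu^{-1})$ control as the paper's rescaling $\tau=\nu(1-z)$ with a mean-value bound.

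The gap is in the bounded-block extension, at the step you flag as hardest. Uniformity over profiles has to cover many weak blocks, e.g.\ $S$ blocks with $\nu_s=\nu/S\to0$. Neither of your two ingredients survives that regime as stated.

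\emph{The remainder.} In $f_s=1-\tfrac{\nu_s}{4}\sin^2\varphi+O(\sin^4\varphi)$ the error term must carry a factor $\nu_s$. With a state-independent $O(\varphi^4)$, summing $\log f_s$ at $\varphi=y/\sqrt\nu$ leaves an error bound $O(Sy^4/\nu^2)$. That bound is useless when, for instance, $\nu=N^{1/4}$ and $S=N$.

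\emph{The bulk bound.} Compactness cannot give $|f_s|\le1-c\,\nu_s\sin^2\varphi$ directly. The ratio $(1-|f_s(\varphi)|)/(\nu_s\sin^2\varphi)$ is undefined on the Slater locus $\nu_s=0$ and at $\varphi\in\{0,\pi\}$, so the relevant domain is not compact. A priori the infimum could degenerate exactly for the near-Slater blocks that matter.

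\emph{The missing idea.} The paper uses the replica-spin decomposition of each block, $f_s(\varphi)=\sum_{j\ \mathrm{even}}\omega_{s,j}P_j(\cos\varphi)$ with $j\le b_*/2$ and $\nu_s=\sum_j j(j+1)\omega_{s,j}$. Only even $j$ occur because the $\pi$-rotation phase $(-1)^{k_s}$ cancels the reordering sign $(-1)^{k_s^2}$. Then both quantities you need are linear in the same finitely many weights:
\begin{itemize}
\item $1-|f_s|\ge\sum_{j\ge2}\omega_{s,j}\bigl(1-|P_j(\cos\varphi)|\bigr)$;
\item $1-f_s-\tfrac{\nu_s}{4}\sin^2\varphi=\sum_{j\ge2}\omega_{s,j}\bigl[1-P_j(\cos\varphi)-\tfrac{j(j+1)}{4}\sin^2\varphi\bigr]$.
\end{itemize}
In addition, $\nu_s\le\tfrac{b_*}{2}\bigl(\tfrac{b_*}{2}+1\bigr)(1-\omega_{s,0})$. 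So the uniform bulk bound and the $O_{b_*}(\nu_s\sin^4\varphi)$ remainder both reduce to elementary facts about finitely many Legendre polynomials. Evenness also gives the symmetry $f_s(\pi-\varphi)=f_s(\varphi)$, which you use implicitly when you claim the same expansion at both endpoints.
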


\begin{figure}[t]
\centering
\includegraphics[width=1\columnwidth]{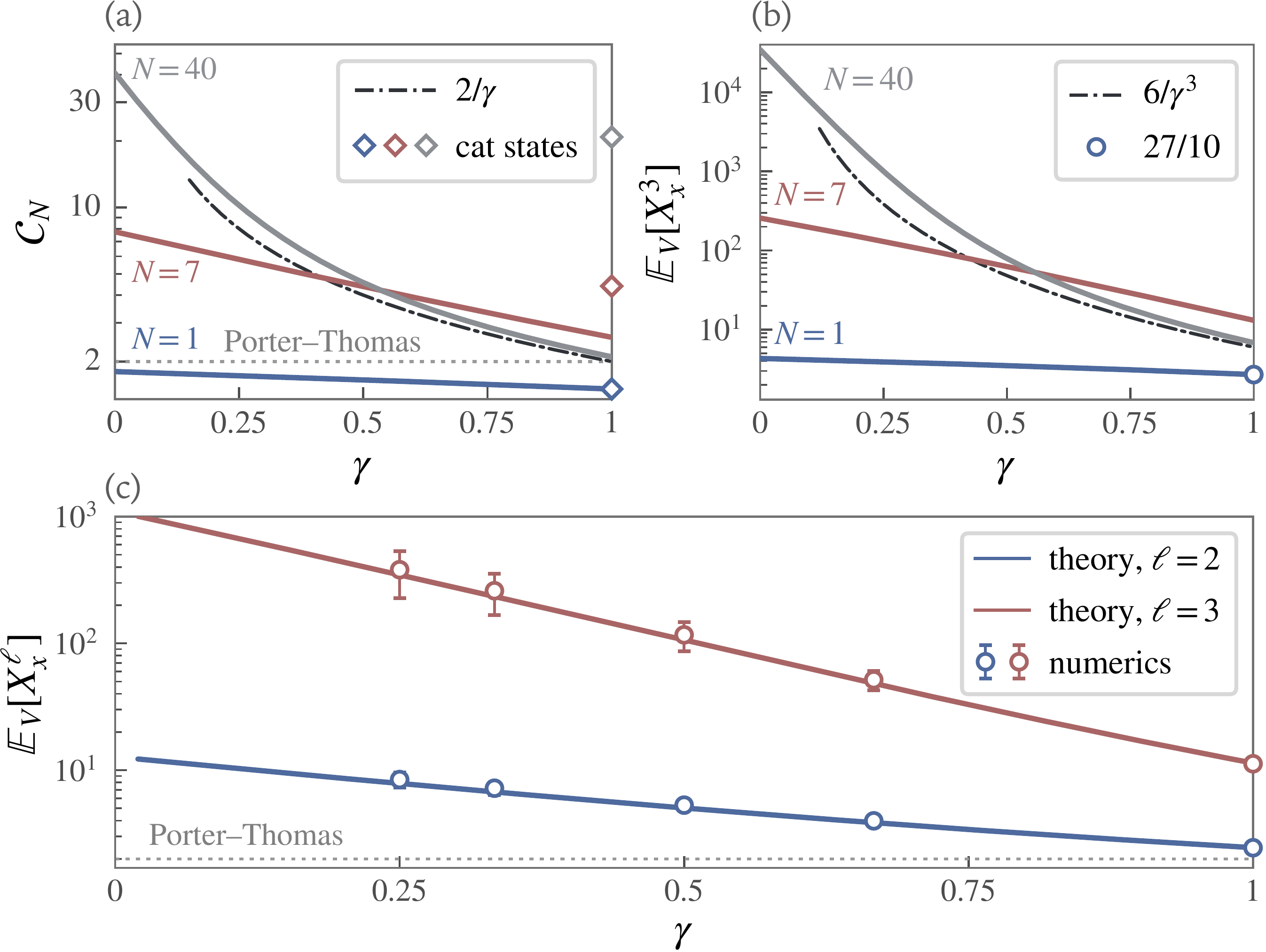}
\caption{\textbf{Collision and third moments for uniform profiles.}
(a) Exact collision moments and the asymptote $2/\gamma$; diamonds show complementary two-Slater cats with $\nu=N$. (b)~Third moments and the asymptote $6/\gamma^3$. The open circle marks the exact $N=1$, $\gamma=1$ value $27/10$ derived in the SM. (c) Theory at $N=12$ compared with direct Haar sampling; the dotted line marks the Porter--Thomas value $2$ for $\ell=2$; numerical protocols and uncertainty estimates are given in the SM.}
\label{fig:cost}
\end{figure}

Equation~\eqref{eq:order} compares the input weight in the selected two-copy sector with that sector's fraction of the doubled Hilbert space. For Haar mixing on the full $D_N$-dimensional sector, every doubled pure input has unit weight in the symmetric sector, whose fraction $(D_N+1)/(2D_N)\to1/2$ gives the value $2$. Passive FLO selects the smaller fraction $1/\kappa_N\sim1/N$, with all input dependence in $\A_2$. Slater inputs saturate the bound $\CN_N\le\kappa_N$. Collision anticoncentration therefore requires $\A_2=O(N^{-1})$, and matching the Porter--Thomas value specifically requires $\A_2\sim2/N$.

For the four-mode product inputs, only the two endpoint peaks contribute at leading order as $\nu\to\infty$. Near $\varphi=0$, $\mathcal F_\Psi(\varphi)=\exp[-\nu\varphi^2/4+O(\nu\varphi^4)]$, so the identity peak has angular width $O(\nu^{-1/2})$, and the polar measure $\sin\varphi\,d\varphi\simeq\varphi\,d\varphi$ turns this width into a contribution $1/\nu$ to $\A_2$. The exchange peak near $\varphi=\pi$ contributes the same. Together they give $\A_2\simeq2/\nu$ and hence $\CN_N\simeq2N/\nu$. At leading order only the total defect $\nu=4\sum_i\lambda_i(1-\lambda_i)$ enters. Slater blocks add none, and the defect may be spread over many weakly non-Gaussian blocks. Profile dependence first enters the relative $O(\nu^{-1})$ correction (SM). The two peaks are the $\ell=2$ precursors of the replica permutations that govern all fixed moments.

The asymptotic law~\eqref{eq:hardedge} holds whenever $\nu\to\infty$. For example, $\nu=\sqrt N$ gives $\CN_N\sim2\sqrt N$. Together with the necessary condition in Theorem~\ref{thm:noroute}, its bounded-block extension shows that these products anticoncentrate, $\CN_N=O(1)$, if and only if $\nu=\Theta(N)$. If the density has a positive limit, $\gamma\to\bar\gamma\in(0,1]$, then $\CN_N\to2/\bar\gamma$ [Fig.~\ref{fig:cost}(a)], reaching the Porter--Thomas value only at $\bar\gamma=1$. Paley--Zygmund then gives, for every fixed outcome, $\Pr_V[X_x\ge\alpha]\ge(\bar\gamma/2)(1-\alpha)^{2}+o(1)$ for $0<\alpha<1$. Its per-circuit form is given in SM Sec.~\ref{sm:typical}.

\smallskip
\paragraph*{All moments and the Porter--Thomas threshold.}
Collision anticoncentration bounds the second moment, whereas Porter--Thomas statistics require every fixed moment to approach $\ell!$. The two-copy picture extends to $\ell$ copies: the singlet sector of collective $\SU(\ell)$ rotations is spanned by $\ell$-fold copies of Slater determinants, and the physical Haar average spreads $\ket{x}^{\otimes\ell}$ uniformly over it (SM). Thus $\E_V[X_x^\ell]=R_\ell(N)\A_\ell(\Psi)$ for every
pure input, where $\A_\ell$ is the singlet weight and $R_\ell(N)$ is the normalized $\ell$th moment for a Slater input, with $R_2(N)=\kappa_N$. For four-mode products, $\A_\ell$ reduces to an integral over replica mixing angles (SM), giving the following law. Figure~\ref{fig:cost}(b),(c) shows the third moment
and direct-Haar checks.

\begin{theorem}[Porter--Thomas moment threshold]\label{thm:PT}
Fix an integer $\ell\ge2$. For the four-mode product family,
\begin{equation}
  \E_V\big[X_x^{\ell}\big]
  =
  \ell!\,
  \gamma^{-\binom{\ell}{2}}
  \big[1+o_\ell(1)\big]
  \qquad (\nu\to\infty),
  \label{eq:PTlaw}
\end{equation}
where the error depends on $\ell$. If $\gamma\to\bar\gamma\in(0,1]$, then
$\E_V[X_x^{\ell}]\to\ell!\,\bar\gamma^{-\binom{\ell}{2}}$. The fixed-outcome Porter--Thomas moment hierarchy, $\E_V[X_x^{\ell}]\to\ell!$ for all fixed $\ell$, holds if and only if $\gamma\to1$.
\end{theorem}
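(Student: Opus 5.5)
We start from the factorization stated just before the theorem, $\E_V[X_x^\ell]=R_\ell(N)\,\A_\ell(\Psi)$. Here $\A_\ell$ is the weight of $\Psi^{\otimes\ell}$ in the singlet sector of collective replica $\SU(\ell)$ rotations, and $R_\ell(N)$ is the Slater value. The proof then has three steps: asymptotics of $R_\ell(N)$, Laplace-type asymptotics of $\A_\ell$ for four-mode products, and assembly.

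\textbf{Step 1: the Slater moment.} By Schur's lemma, as for $\ell=2$, $R_\ell(N)=D_N^\ell/\dim(\text{singlet sector})$. The singlet sector carries the irreducible mode-rotation representation whose Young diagram is the $2N\times\ell$ rectangle (transposed), and $\ket{x}^{\otimes\ell}$ is its highest-weight vector. We would compute its dimension with the hook-content formula for $\U(4N)$ and expand by Stirling. The result we aim for is
\[
R_\ell(N)=c_\ell\,N^{\binom{\ell}{2}}\,[1+O(N^{-1})],
\qquad c_2=1 .
\]
The constant $c_\ell$ has Barnes-$G$ type form, $\prod_{j<\ell} j!$ times a power of $2$, and is pinned down by consistency with the $\ell=2$ value $\kappa_N\sim N$.

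\textbf{Step 2: the singlet weight.} $\A_\ell$ is the Haar average over $g\in\SU(\ell)$ of the return amplitude $\matel{\Psi^{\otimes\ell}}{g}{\Psi^{\otimes\ell}}$. For products this amplitude factorizes into block amplitudes $f_{\lambda_i}(g)$, which are explicit polynomials in the matrix entries of $g$ restricted to the two Slater pairs of Eq.~\eqref{eq:blocknormal}. This generalizes the integrand $1-\lambda_i(1-\lambda_i)(1-z^2)$ of Theorem~\ref{thm:collision}.

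The product $\prod_i f_{\lambda_i}(g)$ has modulus $1$ exactly on the $\ell!$ permutation matrices (times diagonal phases). These act trivially up to cancelling fermionic signs, because each $\Psi$ has fixed even particle number per block. Near each such point $\sigma$, write $g=\sigma e^{iH}$ with $H$ traceless Hermitian. The off-diagonal part of $H$ has $\binom{\ell}{2}$ complex coordinates, and the second-order expansion is
\[
\prod_i f_{\lambda_i}\approx\exp\!\Big[-\tfrac{\nu}{4}\sum_{a<b}|H_{ab}|^2\Big],
\]
which reduces to the $\ell=2$ expansion $\exp(-\nu\varphi^2/4)$. The diagonal (Cartan) directions act trivially because $J_z$-type charges vanish on the replicated input.

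Integrating the Gaussian over the $2\binom{\ell}{2}$ real transverse directions, against Haar measure normalized on $\SU(\ell)$ modulo the torus, gives $d_\ell\,\nu^{-\binom{\ell}{2}}$ per peak. All $\ell!$ peaks are equivalent by conjugation, so
\[
\A_\ell\sim \ell!\,d_\ell\,\nu^{-\binom{\ell}{2}} .
\]

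\textbf{Step 3: assembly and the threshold.} Multiplying, $\E_V[X_x^\ell]\sim \ell!\,c_\ell d_\ell\,(N/\nu)^{\binom{\ell}{2}}$. The constant $c_\ell d_\ell=1$ is forced without separate computation: for $\gamma=1$ the law must reduce to the Haar benchmark, or equivalently one checks it against the $\ell=2$ case and the flag-manifold volume formula. The limit statement follows directly. For the iff, fixed $\ell=2$ gives $2/\gamma\to2$, which forces $\gamma\to1$. Conversely, $\gamma\to1$ with $\nu\to\infty$ gives $\ell!$ for every $\ell$. The case $\nu$ bounded cannot occur under $\gamma\to1$, since then $N\to\infty$.

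\textbf{Main obstacle.} The hardest step is making the Laplace analysis uniform over block profiles. Away from the permutation points we need $\prod_i|f_{\lambda_i}(g)|\le\exp[-c\,\nu\,\mathrm{dist}(g,S_\ell T)^2]$. The plan is to prove a per-block bound $|f_\lambda(g)|\le 1-c\,\lambda(1-\lambda)\,\mathrm{dist}^2$ on the compact group, then multiply over blocks. This makes the tails $O(e^{-c\nu\delta^2})$ and the quartic corrections $O(\nu^{-1})$ relative. The existence of such a constant $c$ independent of $\lambda\in[0,1/2]$ is where the real work lies. It requires identifying $f_\lambda$ explicitly as a mixture $(1-\lambda)^2 F_1+\lambda^2 F_2+2\lambda(1-\lambda)F_{12}$ of Slater-pair overlaps.
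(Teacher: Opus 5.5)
Your architecture is the paper's: the factorization $\E_V[X_x^\ell]=R_\ell(N)\A_\ell$, a torus-invariant integral over $\U(\ell)/\U(1)^\ell$ of a product of block symbols, Laplace's method at the $\ell!$ Weyl points, and uniformity from the fact that $d_1=\ell-\sum_{a,b}|g_{a,b}|^4$ vanishes exactly on monomial unitaries. The genuine gap is Step~3. That the limit is exactly $\ell!\,\gamma^{-\binom{\ell}{2}}$ rests entirely on the cancellation $c_\ell d_\ell=1$, and your justification for it is circular. Saying that at $\gamma=1$ the law ``must reduce to the Haar benchmark'' assumes the conclusion. Nothing forces a passive-FLO ensemble at maximal defect to be Porter--Thomas: the complementary Slater cat has $\nu=N$ yet $\CN_N>N/2$. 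Checking against $\ell=2$ fixes only $c_2d_2$. As written, you obtain only $\E_V[X_x^\ell]\sim K_\ell\,\gamma^{-\binom{\ell}{2}}$ with an unknown $K_\ell$. That does not give the ``if'' direction of the hierarchy statement.

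You must compute both constants. The hook-content ratio is
\[
R_\ell(N)=\prod_{j=1}^{\ell}\frac{[(2N+1)\cdots(2N+j-1)]^2}{(4N+1)\cdots(4N+j-1)\,(j-1)!},
\]
and each factor is $\sim N^{j-1}/(j-1)!$. Hence $c_\ell=1/\GB(\ell+1)$, the reciprocal of your guess, with no power of $2$.

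For $\A_\ell$, use coordinates $g=Pe^{K(Z)}$ at a permutation matrix $P$. The normalized measure on $\U(\ell)/\U(1)^\ell$ has density $\GB(\ell+1)\pi^{-\binom{\ell}{2}}$ there, which follows by induction from the Fubini--Study factor $(\ell-1)!/\pi^{\ell-1}$. Each block symbol is $1-m\|Z\|^2+O(m\|Z\|^3)$ with $m=4\lambda(1-\lambda)$. Each peak therefore contributes $\GB(\ell+1)\,\nu^{-\binom{\ell}{2}}$, so $d_\ell=\GB(\ell+1)$ and the cancellation is exact.

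Your Hessian is off by a factor of $4$. With $g=\sigma e^{iH}$ one has $|H_{ab}|=|z_{ab}|$, and at $\ell=2$ the beam-splitter angle is $\varphi=2|H_{12}|$. The correct exponent is therefore $-\nu\sum_{a<b}|H_{ab}|^2$. Your $\nu/4$, with the correctly normalized measure, would give $\A_2\sim 8/\nu$, contradicting Theorem~\ref{thm:collision}.

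Two smaller repairs remain. First, for a uniform logarithmic expansion you need the block amplitude to be real and bounded below. On $\SU(\ell)$ the complementary-minor identity gives $\phi^{(\ell)}_\lambda(g)=\sum_j\lambda^j(1-\lambda)^{\ell-j}\sum_{|A|=|B|=j}|\det g_{B,A}|^4\ge2^{-\ell}$. Second, in the ``only if'' direction you must rule out bounded $\nu$ along a subsequence before invoking $2/\gamma$. The bracket $\CN_N=\Theta(N/(1+\nu))$ shows the second moment would then diverge.
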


After phase rotations are removed, the replica integrand has $\ell!$ equivalent maxima, one for each permutation of the copies, generalizing the identity and exchange peaks. The factor $\ell!$ is the same permutation count as in the Wick contractions of a complex
Gaussian amplitude. Near each maximum, each replica pair supplies two mixing directions of width $O(\nu^{-1/2})$, contributing an area $1/\nu$ to $\A_\ell$ (SM). The physical Haar factor $R_\ell(N)$ supplies one factor $N$ per pair, leaving $N/\nu$ for each of the $\binom{\ell}{2}$ pairs. All Porter--Thomas moments are recovered only when $\nu/N\to1$.

For $0<\bar\gamma<1$ the limiting moments admit a simple probabilistic reading. They are the moments of an exponential variable multiplied by an independent mean-one lognormal factor with log-variance $\log(1/\bar\gamma)$ (SM). The hierarchy thus has a lognormal broadening signature set by the inverse defect density. We do not claim
convergence to this product law, because for $\bar\gamma<1$ the lognormal moment problem is indeterminate. At $\bar\gamma=1$ the broadening vanishes, the moments determine the law, and $X_x$ converges to an exponential variable.

\begin{figure}[t]
\centering
\includegraphics[width=1\columnwidth]{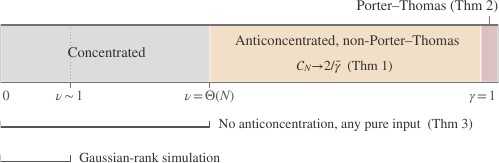}
\caption{\textbf{Sampling regimes versus defect density $\gamma=\nu/N$.}
Subextensive defect universally excludes anticoncentration, while bounded-block products anticoncentrate at extensive defect with $\CN_N\to2/\bar\gamma$ for $\gamma\to\bar\gamma>0$. The four-mode family reaches the Porter--Thomas hierarchy only as $\gamma\to1$. For $\nu=O(1)$, explicit truncated-rank descriptions also permit fixed-accuracy Gaussian-rank simulation (SM). The fermion-sampling input of Ref.~\cite{Oszmaniec2022Fermion} corresponds to $\gamma=1$.}
\label{fig:regimes}
\end{figure}

\smallskip
\paragraph*{A universal obstruction.}
The extensive threshold also constrains arbitrary pure inputs. Integer moments can be dominated by rare heavy outcomes. To test whether typical rescaled probabilities vanish, we instead bound fractional moments.

\begin{theorem}[Universal necessary condition for anticoncentration]\label{thm:noroute}
For every fixed $0<r<1$, there is a constant $C_r<\infty$ such that every pure half-filled input obeys
\begin{equation}
  \E_V\big[X_x^{r}\big]
  \le
  C_r\,
  \left(\frac{1+\nu}{N}\right)^{r(1-r)/2}.
  \label{eq:nogo}
\end{equation}
\end{theorem}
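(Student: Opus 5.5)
Throughout, $\E_VX_x=1$ and $\E_V[X_x^2]=\kappa_N\A_2$ by Theorem~\ref{thm:collision}. The natural first attempt is to interpolate $\E_V[X_x^r]$ between the first and second moments. Log-convexity of moments (Hölder) only gives an upper bound on $\E_V[X_x^r]$ by $(\E X_x)^{\cdots}$ and is useless in this direction. Small fractional moments must instead come from a decomposition
\[
X_x=D_N\,|\matel{x}{V}{\Psi}|^2 .
\]
The plan is therefore to split the input into a Slater-like part plus a small remainder controlled by $\nu$, and to use the fact that Slater inputs have tiny typical probabilities.

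\textbf{Step 1 (Slater inputs concentrate).} Show that for a Slater input $\Phi$,
\[
\E_V\bigl[X_x(\Phi)^{r}\bigr]\le C_r\,N^{-r(1-r)/2}.
\]
Here $X_x(\Phi)=D_N|\det(u_{xS})|^2$ for a $2N\times2N$ submatrix of a Haar unitary. The absolute value $|\det|^2$ is a product of independent beta-type factors, so $\log X_x$ is a sum of $2N$ nearly independent terms. Its mean is $-\tfrac12\log N+O(1)$ relative to $\log \E X_x=0$, and its variance is of order $\log N$. This is the lognormal picture: $\E X^r\approx e^{-r(1-r)\sigma^2/2}$ with $\sigma^2\approx\log N$. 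The value $\sigma^2\sim \log N$ is consistent with $\E X^2=\kappa_N\sim N$, i.e.\ $e^{\sigma^2}\sim N$, which gives exactly the exponent $r(1-r)/2$. I would make this rigorous via the Gaussian-matrix (Bartlett) representation and the Gamma-function formula for $\E|\det G|^{2r}$ together with Stirling's formula.

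\textbf{Step 2 (reduction of general inputs).} Diagonalize $\Gamma$ in natural orbitals; absorbing this rotation into Haar $u$ is free. Let $\Phi$ be the Slater determinant filling the $2N$ most occupied orbitals. By the particle–hole count quoted in the Setup (holes in the top $2N$ orbitals lie between $\nu/2$ and $2\nu$), the state $\Psi$ decomposes into particle–hole excitation sectors $\Psi=\sum_k\Psi_k$ over $\Phi$. The component $\Psi_k$ has $k$ holes, the weights $p_k=\|\Psi_k\|^2$ satisfy $\sum_k k\,p_k\le 2\nu$, and each $\Psi_k$ is a superposition of Slater determinants with $k$ excitations. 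By subadditivity of $t\mapsto t^r$ applied to amplitudes, after a Cauchy–Schwarz with weights $w_k$,
\[
|\langle x|V\Psi\rangle|^{2r}\le\Bigl(\sum_k w_k^{-1}|\langle x|V\Psi_k\rangle|^2\Bigr)^r\Bigl(\sum_k w_k\Bigr)^r .
\]
Then apply Jensen to $\E_V$ on each sector.

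\textbf{Step 3 (sector bounds).} For the $k$-excitation sector, Haar averaging again uses the singlet-weight formalism: $\E_V[X^2]$ restricted to $\Psi_k$ is $\kappa_N\A_2(\hat\Psi_k)$, and $\E X=1$. More usefully, the $k$-hole sector behaves like a Slater input of $2N-k$ "frozen" fermions times a $k$-body part. This should give
\[
\E_V\bigl[X_x(\hat\Psi_k)^r\bigr]\lesssim\bigl((1+k)/N\bigr)^{r(1-r)/2}.
\]
The argument is Step 1 applied to the frozen block via conditioning on the rows of $u$ associated with the active orbitals. Combining with Markov-type control $\Pr(k)\le 2\nu/k$ and concavity of $k\mapsto((1+k)/N)^{r(1-r)/2}$ then yields the claim with $1+\nu$.

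\textbf{Main obstacle.} I expect Step 3 to be the hardest: proving a uniform fractional-moment bound for superpositions of $k$-excitation Slater determinants. Cross terms spoil the naive splitting, and the constant $C_r$ must not grow with $k$. My first attempt would be a Laplace expansion of the $2N$-minor along the $k$ active rows. This exhibits the amplitude as a sum of at most $\binom{\cdot}{k}$ products "small $k$-minor $\times$ large complementary minor". One then uses that the large minor has the Step 1 log-normal concentration independently (up to a controlled dependence) of the small one.
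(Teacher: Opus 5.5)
Your Step~1 is the paper's free Slater law $F_r(n)=\Theta_r(n^{-\beta})$, with $\beta=r(1-r)/2$ and $n=2N$. Your natural-orbital split $F\oplus G$ with mean hole number $\mu\le2\nu$ is also exactly the paper's starting point. Steps~2--3, however, have two genuine gaps.

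\textbf{Cross terms.} The missing idea here is a twirl, not an expansion. The map $\rho\mapsto\E_V[X_x(\rho)^r]$ is concave, since $p_x$ is linear in $\rho$ and $t\mapsto t^r$ is concave. It is also invariant under $\rho\mapsto h\rho h^\dagger$ for $h\in H=\U(F)\times\U(G)$, because $h$ can be absorbed into the Haar mode unitary. Hence it is bounded by its value at $\Tw_H\rho=\sum_jw_jQ_j/d_j$, where $Q_j$ projects onto the inequivalent irreducible $H$-modules $\mathcal E_j=\wedgepow{n-j}F\otimes\wedgepow{j}G$. This removes every coherence within and between sectors at no cost, and reduces the problem to the nonnegative variables $Z_j(P)=D_N\matel{P}{Q_j}{P}/d_j$. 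These depend only on the principal angles between $P$ and $F$. Your Cauchy--Schwarz step is a lossy substitute for this.

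\textbf{Sector bound.} Your sector argument does not apply to general sector states, because a $k$-excitation state need not have a frozen core. For example, $n^{-1/2}\sum_{i=1}^nc^\dagger_{g_i}c_{f_i}\ket{F}$ lies in the $k=1$ sector and has $\nu\approx2$. Yet its natural occupations are $1-1/n$ and $1/n$, so no orbital is exactly filled. Its amplitude is a coherent sum of $n$ distinct $n\times n$ minors, with no common large minor to condition on. The frozen-core bound you describe is true: the paper proves it separately for $f\wedge\eta$, via the contraction $\|\iota_fP\|^2$ and a Selberg integral. But it covers only states with an exactly occupied core. For the twirled sector state the paper needs a genuine Jacobi-ensemble estimate, in these steps:
\begin{itemize}
\item write $Z_j\propto\prod_i(1-y_i)\,e_j\bigl(y_1/(1-y_1),\ldots,y_n/(1-y_n)\bigr)$ in the squared principal sines $y_i$;
\item tilt by $\prod_i(1-y_i)^r$;
\item apply a Poisson--binomial local bound;
\item control the tilted ensemble by loop equations.
\end{itemize}
The resulting factors $j^r$, $j^{-r/2}$, $j^{-r^2/2}$ combine to $j^\beta$.

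\textbf{Recombining sectors.} Even granting per-sector bounds $S_k\lesssim((1+k)/N)^\beta$, your combination step fails. Subadditivity of $t\mapsto t^r$ makes the sector weights enter as $p_k^r$, and Cauchy--Schwarz only adds a factor $(\sum_kw_k)^r$. No concavity argument in $k$ can restore linear weights. For nonnegative variables with $\sum_kp_k=1$ one has $\E(\sum_kp_kX_k)^r\ge\sum_kp_k\E X_k^r$, so mixing raises fractional moments. Concretely, $p_k\propto k^{-2}\log^{-2}(k+1)$ has $\mu=O(1)$. Yet $\sum_kp_k^r((1+k)/N)^\beta$ grows like $N^{1-2r}$, up to logarithms, for $r<1/2$. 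Even a maximal bound below a single cutoff $K$, with Markov on the tail, yields only the exponent $r\beta/(r+\beta)<\beta$.

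The paper instead proves the dyadic maximal estimate $\E_P[\sup_{L\le j<2L}Z_j^r]\le C_r(L/n)^\beta$ for $1\le L\le n/8$. Within a shell the positive terms $w_jZ_j$ are summed before the $r$th power is taken. So shell $L$ costs $\min(1,\mu/L)^r(L/n)^\beta$, and subadditivity is used only across $O(\log n)$ shells. Because $\beta<r$, the shell sum is geometric on both sides of $L\simeq1+\mu$. The $j=0$ term is the Slater law, and the sectors $j\ge n/8$ are handled by Jensen as $(8\mu/n)^r$. The twirl and this maximal Jacobi estimate are what your Steps~2--3 would have to supply.
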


For $\nu=o(N)$, Markov's inequality gives $\Pr_V[X_x\ge\alpha]\to0$ for every fixed $\alpha>0$. Thus
anticoncentration requires $\nu=\Omega(N)$ without a product or low-entanglement assumption. Physically, subextensive defect leaves a natural-orbital Fermi sea dressed by an excitation cloud of mean size $O(\nu)$. Correlations within and between excitation sectors remain
unrestricted in the proof (SM). The scaling is sharp up to a constant: products of maximally non-Gaussian four-mode blocks padded with Slater blocks attain it as $\nu\to\infty$ (SM). Together with Theorem~\ref{thm:collision}, this establishes the extensive threshold for bounded-block products [Fig.~\ref{fig:regimes}].

Extensive defect is nevertheless not sufficient. The equal superposition of two Slater determinants filling complementary orbital sets has $\nu=N$ yet $\CN_N>N/2$ and $X_x\to0$ in probability (SM). Each doubled branch is a replica singlet, so at least half the two-copy weight remains Pauli blocked. More generally, an expansion
$\ket{\Psi}=\sum_{k=1}^{\chi}c_k\ket{f_k}$ in pairwise orthonormal Slater determinants gives $\CN_N(\Psi)\ge\kappa_N\sum_k|c_k|^4\ge\kappa_N/\chi$ (SM).
Collision anticoncentration therefore requires $\chi=\Omega(N)$ in every such expansion. The defect sums the natural-orbital occupation variances, but the collision moment also depends on many-body correlations.

\smallskip
\paragraph*{Preparation cost.}
The defect also bounds the resources needed to prepare $\Psi$. For a pure fixed-parity state, let $M(\Psi)$ be the von Neumann entropy in bits of the Gaussian state with the same full Majorana covariance matrix. For fixed-number states, pairing correlators vanish and this covariance is determined by $\Gamma$, whose eigenvalues $\theta_a$ give $M(\Psi)=\sum_a h_2(\theta_a)\ge4\nu(\Psi)$, where $h_2$ is the binary
entropy and $h_2(\theta)\ge4\theta(1-\theta)$~\cite{Tarabunga2026Computable,
Turkeshi2026Williamson}. The quantity $M$ is additive and does not increase on average under pure-state Gaussian protocols~\cite{Turkeshi2026Williamson}, whereas a parity-preserving gate on $q$ modes changes it by at most
$2q$ bits (SM).

Consider a deterministic exact preparation of a pure fixed-number target state $\Psi$. It consumes a joint pure fixed-parity resource state on $q_0$ modes, applies parity-preserving non-Gaussian gates on $q_1,q_2,\ldots$ modes, and uses pure-state Gaussian protocols between them. Then
\begin{equation}
 q_0+2\sum_{t\ge1}q_t\ge M(\Psi)\ge4\nu(\Psi).
 \label{eq:cost}
\end{equation}
Each resource mode contributes at most one bit and gate $t$ at most $2q_t$. The SWAP-only case recovers Ref.~\cite{Tarabunga2026Computable}. If $\nu(\Psi)\ge cN$, the bound stays linear for pure fixed-number approximations within fixed trace distance $\epsilon<c/4$ (SM). Combined with Theorem~\ref{thm:noroute}, Eq.~\eqref{eq:cost} requires
$q_0+2\sum_tq_t=\Omega(N)$ for every anticoncentrating pure half-filled target.

\smallskip
\paragraph*{Sampling implications.}
For the uniform four-mode family, exact amplitudes and probabilities are $\#\mathrm P$-hard in the worst case at every fixed algebraic defect density $0<\gamma\le1$ (SM), by embedding the determinant-sum construction of
Ref.~\cite{Ivanov2017Computational}. At $\gamma=1$, the density of the fermion-sampling input of Ref.~\cite{Oszmaniec2022Fermion}, assume that estimating a fixed output probability to relative error $1/4$ with joint success probability at least $2/3$ is $\#\mathrm P$-hard in the sense specified in the SM. The full Porter--Thomas limit then rules out a polynomial-time classical sampler achieving ensemble-averaged $\ell_1$ error at most $1/64$ for all sufficiently large $N$, unless the polynomial hierarchy collapses to its third level (SM, Corollary~\ref{cor:hardness}). Under the fixed-density analogue of this assumption, the moment hierarchy certifies an $\ell_1$ sampling-error tolerance that increases strictly with $\gamma$ throughout $9/10\le\gamma\le1$ (SM, Proposition~\ref{smprop:densitytolerance}). Here $\ell_1$ distance is twice the total-variation distance.

\smallskip
\paragraph*{Discussion and outlook.}
Passive fermions contrast with doped Clifford circuits: for pure half-filled inputs, a vanishing defect density cannot generate even anticoncentration. Within the uniform four-mode family, increasing $\gamma$ lowers the limiting collision moment $2/\gamma$ and the limiting moments $\ell!\gamma^{-\binom{\ell}{2}}$, which reach their Porter--Thomas values only at $\gamma=1$. Near $\gamma=1$, increasing $\gamma$ also raises the sampling error that the hardness argument tolerates. Anticoncentration therefore carries an extensive magic cost, quantified by the same defect that lower-bounds preparation resources and is measured by two-copy number fluctuations. The constituent preparation, interference, and readout operations connect this setting to fermionic tweezer and ultracold-atom platforms~\cite{GonzalezCuadra2023Fermionic,
Jain2026Programmable,Islam2015Measuring,Kaufman2016Quantum}.

The conjectured fractional-moment law for many weak blocks (SM) and Slater-superposition models proposed for interacting integrable eigenstates~\cite{Swietek2026OneBody} offer tests of how correlations shape fermionic randomness. Beyond product inputs, it remains to identify sufficient conditions on many-body correlations for Porter-Thomas statistics.

\smallskip
\paragraph{Acknowledgments.}
\begin{acknowledgments}
This work has been supported by the BMFTR (RealistiQ, QSolid, MuniQC-Atoms, QuSol, Hybrid++, FermiQP, PasQuops), the Munich Quantum Valley (K-8), the Quantum Flagship (Millenion, PasQuans2), Berlin Quantum, the European Research Council (DebuQC), the Clusters of Excellence MATH+ and ML4Q, the DFG (CRC 183 and SPP 2514), and the Alexander von Humboldt Foundation.
\end{acknowledgments}

\bibliographystyle{apsrev4-2}
\bibliography{refs}

\clearpage
\section*{End Matter}

\subsection{Replica reduction and the collision law}
Ref.~\cite{Oszmaniec2022Fermion} introduced the two-copy Schur projector behind Eq.~\eqref{eq:order}. Related fermionic commutants and bosonic moment transitions appear in
Refs.~\cite{Braccia2026Commutant,Ehrenberg2025Transition}.
On two copies, the commuting physical and replica actions give
\begin{equation}
  \wedgepow{2N}\CC^{4N}\otimes\wedgepow{2N}\CC^{4N}
  \cong
  \bigoplus_{a=0}^{2N}
  \Schur_{(2^{a},1^{4N-2a})}\CC^{4N}.
  \label{eq:emdecomp}
\end{equation}
Each summand is the irreducible $\U(4N)$ module labelled by the indicated partition.
The doubled outcome is a replica singlet and selects the rectangle $a=2N$. Schur's lemma and the rectangular hook-content formula give
\begin{equation}
  \frac{D_N^2}{\dim\Schur_{(2^{2N})}\CC^{4N}}
  =\frac{(2N+1)^2}{4N+1}=\kappa_N.
  \label{eq:emdim}
\end{equation}
This proves Eq.~\eqref{eq:order}.

For $\ell=2$, replica averaging reduces to Eq.~\eqref{eq:polaraverage}. One canonical four-mode block has return amplitude
\begin{equation}
  \mathcal F_{\psi_\lambda}(\varphi)
  =1-\lambda(1-\lambda)\sin^2\varphi.
  \label{eq:emfringe}
\end{equation}
Multiplying the block overlaps and folding with $z=|\cos\varphi|$ gives Eq.~\eqref{eq:integral}.

With $\tau=\nu(1-z)$, the logarithm of the product is
$-\tau/2+O(\tau^2/\nu)$ and dominated convergence gives
\begin{equation}
  \nu\int_0^1\prod_i
  \left[1-\lambda_i(1-\lambda_i)(1-z^2)\right]dz
  \longrightarrow2,
\end{equation}
so $\CN_N\sim2N/\nu$ (Fig.~\ref{fig:maxmagic}). Section~\ref{sm:collision} proves the uniform relative error $O(\nu^{-1})$ and the bounded-block extension.

\subsection{Proof reduction for the universal obstruction}
\label{em:nogo}

In this subsection, we reduce the universal bound to excitation-sector weights and a uniform estimate for their Haar-random overlaps.
Set $n=2N$ and fix $0<r<1$ with
$\beta=r(1-r)/2$. For $\rho=\ket\Psi\bra\Psi$, let $F$ span the $n$ most occupied natural orbitals and $G=F^\perp$.
The sector $\mathcal E_j=\wedgepow{n-j}F\otimes\wedgepow jG$
has projector $Q_j$ and dimension $d_j=\binom nj^2$.
Its weight $w_j=\tr(\rho Q_j)$ obeys $\mu:=\sum_jjw_j\le2\nu$ [Eq.~\eqref{eq:smmunu}].
Thus the defect bounds the mean number of excitations out of $F$.

Average $\rho$ over $H=\U(F)\times\U(G)$. Concavity of the
fractional moment and Haar invariance give
\begin{equation}
 \begin{aligned}
  \Tw_H\rho&=\sum_{j=0}^n w_j\frac{Q_j}{d_j},\\
  \E_V[X_x^r]&\le\E_P\left(\sum_{j=0}^n w_jZ_j(P)\right)^r,
 \end{aligned}
 \label{eq:emtwirlsectors}
\end{equation}
where $P$ is Haar on $\Gr(n,2n)$ and
$Z_j(P)=D_N\matel{P}{Q_j}{P}/d_j$, with $\E_PZ_j=1$.
No restriction on correlations within or between sectors is needed. The key estimate, proved in Lemma~\ref{smlem:dyadic}, is
\begin{equation}
 \E_P\!\left[\sup_{L\le j<2L}Z_j^r\right]
 \le C_r(L/n)^{\beta},\qquad 1\le L\le n/8.
 \label{eq:emdyadic}
\end{equation}
For dyadic $L<n/8$, the sectors with $L\le j<\min(2L,n/8)$ have total weight at most $\min(1,\mu/L)$. The $j=0$ contribution is bounded by the Slater
law $F_r(n)\le C_rn^{-\beta}$ [Eq.~\eqref{eq:smfreelaw}].
The sectors $j\ge n/8$ have total weight at most $\min(1,8\mu/n)$, and Jensen's inequality bounds their contribution by $C_r(\mu/n)^r$. Subadditivity therefore gives
\begin{equation}
 \begin{aligned}
  \E_V[X_x^r]&\le C_r\bigg[n^{-\beta}+(\mu/n)^r\\
   &\quad+\sum_L\min(1,\mu/L)^r(L/n)^{\beta}\bigg]\\
  &\le C_r'\left(\frac{1+\mu}{n}\right)^{\beta}.
 \end{aligned}
 \label{eq:emnogoreduction}
\end{equation}
Splitting the sum at $L\simeq1+\mu$ gives two geometric sums
because $0<\beta<r$. For $n<8$, Jensen's bound
$\E_V[X_x^r]\le1$ is absorbed into $C_r$. Finally, $\mu\le2\nu$ and $n=2N$ yield Theorem~\ref{thm:noroute}.

\begin{figure}[!t]
\centering
\includegraphics[width=0.85 \columnwidth]{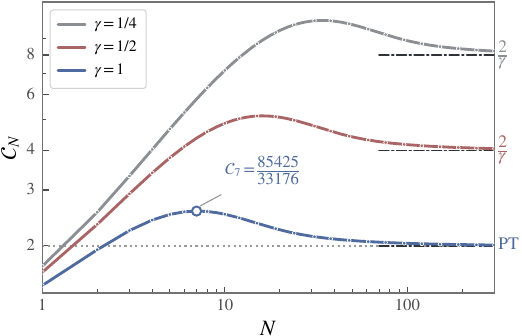}
\caption{\textbf{Uniform-profile collision moments.}
Exact $\CN_N$ for the uniform family
$4\lambda(1-\lambda)=\gamma$; dash-dotted: $2/\gamma$; the
$\gamma=1$ maximum occurs at $N=7$.}
\label{fig:maxmagic}
\end{figure}

\clearpage
\onecolumngrid

\begin{center}
\textbf{\large Supplemental Material for
``The magic cost of fermionic randomness''}
\end{center}

\setcounter{equation}{0}
\setcounter{figure}{0}
\setcounter{table}{0}
\setcounter{section}{0}
\setcounter{theorem}{0}
\setcounter{proposition}{0}
\setcounter{lemma}{0}
\setcounter{corollary}{0}
\setcounter{remark}{0}
\setcounter{conjecture}{0}
\renewcommand{\theequation}{S\arabic{equation}}
\renewcommand{\thefigure}{S\arabic{figure}}
\renewcommand{\theHfigure}{S\arabic{figure}}
\renewcommand{\thesection}{S\arabic{section}}
\makeatletter
\let\c@proposition\c@theorem
\let\c@lemma\c@theorem
\let\c@corollary\c@theorem
\let\c@conjecture\c@theorem
\let\c@remark\c@theorem
\makeatother
\renewcommand{\thetheorem}{S\arabic{theorem}}
\renewcommand{\theproposition}{S\arabic{theorem}}
\renewcommand{\thelemma}{S\arabic{theorem}}
\renewcommand{\thecorollary}{S\arabic{theorem}}
\renewcommand{\theconjecture}{S\arabic{theorem}}
\renewcommand{\theremark}{S\arabic{theorem}}

\renewcommand{\theHequation}{supp.\arabic{equation}}
\renewcommand{\theHsection}{supp.\arabic{section}}
\providecommand{\theHtheorem}{\arabic{theorem}}
\renewcommand{\theHtheorem}{supp.\arabic{theorem}}
\providecommand{\theHproposition}{\arabic{proposition}}
\renewcommand{\theHproposition}{supp.\arabic{theorem}}
\providecommand{\theHlemma}{\arabic{lemma}}
\renewcommand{\theHlemma}{supp.\arabic{theorem}}
\providecommand{\theHcorollary}{\arabic{corollary}}
\renewcommand{\theHcorollary}{supp.\arabic{theorem}}
\providecommand{\theHremark}{\arabic{remark}}
\renewcommand{\theHremark}{supp.\arabic{theorem}}
\providecommand{\theHconjecture}{\arabic{conjecture}}
\renewcommand{\theHconjecture}{supp.\arabic{theorem}}

\vspace{4pt}
\noindent
Throughout,
$4N$ modes at half filling host $n=2N$ particles,
$D_N=\binom{4N}{2N}$, $\kappa_N=(2N+1)^2/(4N+1)$, $X_x=D_N\,p_V(x)$ for
$V=\wedgepow{2N}u$, and
$\lVert p-q\rVert_1=\sum_x|p_x-q_x|$ for outcome distributions.
Replica copies are embedded in the fermionic
Fock space $\wedgepow{\bullet}(\CC^{4N}\otimes\CC^{\ell})$ over the
mode--replica pairs, so operators in distinct replicas anticommute and
$\Psi^{\otimes\ell}$ denotes the embedded product state.  For the four-mode product family, we use the block defects
$m_i:=4\lambda_i(1-\lambda_i)$, so $\nu=\sum_i m_i$ and $\gamma=\nu/N$. The uniformity convention of the main text remains in force, and subscripts on error terms record dependence on a fixed replica number or block-size bound.

\section{Canonical blocks and replica moment reduction}
\label{sm:replica}

The canonical form of a four-mode block isolates the non-Gaussian input parameter, while replica duality converts Haar-averaged output moments into sector weights. We first establish this reduction and then connect its two-copy instance to measurable number fluctuations, the singlet weight, and the collision law.
By the Youla decomposition, some $W\in\U(4)$ maps every pure
two-fermion state on four modes to $\ket{\psi_\lambda}$ of
Eq.~\eqref{eq:blocknormal}. Phases can be absorbed into the orbital
pairs, which can be ordered so that $0\le\lambda\le1/2$. Its natural
occupations are $\{\lambda,\lambda,1-\lambda,1-\lambda\}$, so
$\tr(\Gamma-\Gamma^2)=4\lambda(1-\lambda)=m$. For block products,
$\bigoplus_iW_i$ is absorbed into the subsequent Haar unitary, and the
output law depends on block $i$ only through $m_i$.

Skew Howe duality~\cite{Howe1995Perspectives} makes
$\ket{x}^{\otimes\ell}$ select, with multiplicity one, the rectangle
$\Schur_{(\ell^{2N})}\CC^{4N}$. Applying the hook-content formula gives
\begin{equation}
  \begin{aligned}
  \dim\Schur_{(\ell^{2N})}\CC^{4N}
  &=
    \prod_{j=1}^{\ell}
    \frac{(4N+j-1)!\,(j-1)!}{[(2N+j-1)!]^{2}},\\
  R_\ell(N)
  &:=
    \frac{D_N^{\ell}}
         {\dim\Schur_{(\ell^{2N})}\CC^{4N}},\\
  R_\ell(N)
  &=
    \frac{N^{\binom{\ell}{2}}}{\GB(\ell+1)}
    [1+O_\ell(N^{-1})],\\
  R_2(N)
  &=\frac{(2N+1)^2}{4N+1}=\kappa_N.
  \end{aligned}
  \label{eq:smRq}
\end{equation}
Here $\GB$ is the Barnes $G$-function, with
$\GB(\ell+1)=\prod_{j=1}^{\ell-1}j!$.
With projector $\Pi_\ell$ and
$\A_\ell(\Psi)=\matel{\Psi^{\otimes\ell}}{\Pi_\ell}{\Psi^{\otimes\ell}}$,
Schur averaging gives
$\E_V[X_x^\ell]=R_\ell(N)\A_\ell(\Psi)$.
A Slater determinant has $\A_\ell=1$, since every inter-replica
hopping operator is Pauli blocked. Thus $R_\ell(N)$ is its exact
normalized $\ell$th moment, and $\ell=2$ gives Eq.~\eqref{eq:order}.

\subsection{Casimir identity and the singlet deficit}
\label{sm:certificate}

The replica Casimir relates one-body occupations to two-copy interference and quantifies the departure from the singlet sector.

\begin{proposition}[Two-copy certificate]\label{smprop:certificate}
For every pure half-filled input,
\begin{equation}
  \A_2(\Psi)=1
  \quad\Longleftrightarrow\quad
  \nu(\Psi)=0
  \quad\Longleftrightarrow\quad
  \Psi\ \hbox{is a Slater determinant},
\end{equation}
and $1-\A_2(\Psi)\le\nu/6$.  For the four-mode product family,
$\nu/6-\nu^2/60\le1-\A_2\le\nu/6$, with $1-\A_2=m/6$ for one block.
\end{proposition}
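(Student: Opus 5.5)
The plan is to decompose the doubled input under the replica $\SU(2)$ generated by $J_\pm,J_z$. Since $\Psi^{\otimes2}$ has $J_z=0$, it lies in the span of states $\ket{j,0}$ with integer total replica spin $j$. We write its spin distribution as $p_j:=\matel{\Psi^{\otimes2}}{P_j}{\Psi^{\otimes2}}$, where $P_j$ projects onto total spin $j$. Then $\A_2=p_0$, and the Casimir identity of Eq.~\eqref{eq:casimirHOM} gives $\nu=\langle J^2\rangle=\sum_j j(j+1)p_j$.

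\emph{Equivalences.} If $\nu=0$, then $\Gamma$ is a projector, so $\Psi$ is a Slater determinant. In that case every $J_\pm$ is Pauli blocked and $\A_2=1$. Conversely, $\A_2=1$ forces $p_j=0$ for all $j\ge1$, hence $\nu=\langle J^2\rangle=0$. The link $\nu=0\Leftrightarrow$ Slater is the standard idempotency criterion already quoted in the Setup.

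\emph{Upper bound $1-\A_2\le\nu/6$.} A naive estimate gives only $\nu\ge2(1-p_0)$. To reach $j(j+1)\ge6$ we must exclude $j=1$, and the natural tool is the exchange symmetry of the two copies. The swap of identical pure fixed-number copies returns $\Psi^{\otimes2}$ with trivial total phase, as noted after Eq.~\eqref{eq:polaraverage} ($\mathcal F_\Psi(\pi)=1$). This swap is implemented, up to the fermionic phase convention, by the rotation $e^{-i\pi J_y}$. On $\ket{j,0}$ that rotation acts as $(-1)^j$. Since $\mathcal F_\Psi(\pi)=\sum_j(-1)^jp_j=1=\sum_jp_j$, every odd-$j$ weight must vanish. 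The smallest nonzero spin is therefore $j=2$, and
\[
\nu=\sum_{j\ \mathrm{even}}j(j+1)p_j\ge6(1-p_0).
\]
The main obstacle is getting this sign bookkeeping right. One has to check that, in the embedded Fock-space convention, $e^{-i\pi J_y}$ restricted to the doubled states equals the copy swap with exactly the phase that makes $\mathcal F_\Psi(\pi)=+1$, rather than an extra $(-1)^{n}$. I would verify this first on a single mode pair and then use multiplicativity over the modes.

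\emph{Four-mode blocks.} Here the exact fringe Eq.~\eqref{eq:emfringe} gives, through Eq.~\eqref{eq:integral}, the singlet weight
\[
\A_2=\int_0^1\prod_i\Bigl(1-\tfrac{m_i}{4}(1-z^2)\Bigr)\,dz .
\]
For one block this integral is $1-\tfrac m4\cdot\tfrac23=1-m/6$. For products, set $a_i=\tfrac{m_i}{4}(1-z^2)\in[0,\tfrac14]$ and use the two-sided Bonferroni-type inequality
\[
1-\sum_i a_i\le\prod_i(1-a_i)\le1-\sum_i a_i+\sum_{i<j}a_ia_j .
\]
Integrating the lower inequality gives $\A_2\ge1-\nu/6$, consistent with the general bound. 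Integrating the upper one, and using $\sum_{i<j}m_im_j\le\nu^2/2$ together with $\int_0^1(1-z^2)^2\,dz=8/15$, yields
\[
1-\A_2\ge\frac{\nu}{6}-\frac{\nu^2}{2}\cdot\frac1{16}\cdot\frac{8}{15}=\frac\nu6-\frac{\nu^2}{60}.
\]
This completes the claimed two-sided estimate.
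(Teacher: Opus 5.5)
Your route is the paper's own: replica-spin weights of $\Psi^{\otimes2}$, the identity $\nu=\sum_j j(j+1)p_j$, exclusion of odd $j$ via the $\pi$ rotation, and the first two Bonferroni bounds integrated against the block fringe. The equivalences and the four-mode part are complete and correct. Your $z$-integrals $\tfrac23$ and $\tfrac8{15}$ are the paper's polar averages of $\sin^3\varphi$ and $\sin^5\varphi$ after folding.

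The step that is not yet a proof is the one you flagged yourself. The main text asserts $\mathcal F_\Psi(\pi)=1$ only with a pointer to this very proposition, so citing it does not discharge the step. That step is exactly what upgrades $\nu\ge2(1-\A_2)$ to $\nu\ge6(1-\A_2)$.

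A single-mode-pair check plus multiplicativity over modes would not settle it by itself, because $\Psi^{\otimes2}$ does not factorize over modes. The relevant sign is a global reordering sign between the copies, and the paper handles it as follows. Write $\Psi^{\otimes2}=\Psi(c^{(1)\dagger})\Psi(c^{(2)\dagger})\ket{0}$, with $\Psi(\cdot)$ homogeneous of degree $n=2N$ in the creation operators. The passive rotation $e^{-i\pi J_y}$ fixes the vacuum and acts by $c_a^{(1)\dagger}\mapsto c_a^{(2)\dagger}$ and $c_a^{(2)\dagger}\mapsto-c_a^{(1)\dagger}$. It therefore sends the doubled state to $(-1)^n\Psi(c^{(2)\dagger})\Psi(c^{(1)\dagger})\ket{0}$. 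Restoring copy order costs $(-1)^{n^2}$, and $(-1)^{n+n^2}=1$. Hence $e^{-i\pi J_y}\Psi^{\otimes2}=\Psi^{\otimes2}$ for identical copies, and your parity argument goes through.

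The same remark applies, more mildly, to $\langle J^2\rangle=\nu$, which the paper derives inside this proof rather than assuming it. Two interchanges give the exact identity $c_a^{(1)\dagger}c_a^{(2)}c_b^{(2)\dagger}c_b^{(1)}=c_a^{(1)\dagger}c_b^{(1)}\,c_a^{(2)}c_b^{(2)\dagger}$. Copy factorization of even operators then yields $\langle J_+J_-\rangle=\sum_{a,b}\Gamma_{ab}(\delta_{ab}-\Gamma_{ba})=\nu$.
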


\begin{proof}
The doubled state has zero replica charge. Grouping $J_+J_-$ by copy
requires two fermionic interchanges. The resulting even operators
factorize between the independent copies, giving
\begin{equation}
  \begin{aligned}
  \langle J_+J_-\rangle
  &=\sum_{a,b}\Gamma_{a,b}(\delta_{a,b}-\Gamma_{ba})
    =\tr(\Gamma-\Gamma^2)=\nu,
  &\qquad \langle J_-J_+\rangle&=\nu,\\
  \langle J^2\rangle&=\nu,
  &\langle J_x^2\rangle=\langle J_y^2\rangle&=\frac{\nu}{2}.
  \end{aligned}
  \label{eq:emcasimir}
\end{equation}

Let $\omega_j$ be the weight of total replica spin $j$ in
$\ket{\Psi}^{\otimes2}$. On the zero-weight subspace,
$e^{-i\pi J_y}$ acts in spin $j$ as $(-1)^j$. This rotation sends
$c_a^{(1)\dagger}\mapsto c_a^{(2)\dagger}$ and
$c_a^{(2)\dagger}\mapsto-c_a^{(1)\dagger}$.
For $n$ particles per copy, the rotation and the return to copy order
contribute $(-1)^n$ and $(-1)^{n^2}$, respectively, whose product is $1$.
The embedded double is therefore invariant, so $\omega_j=0$ for odd $j$. Since
$\A_2=\omega_0$,
\begin{equation}
  \nu=\sum_{j\ {\rm even}}j(j+1)\omega_j
  \ge6(1-\A_2).
  \label{eq:emspin}
\end{equation}
It follows that $\nu=0$ if and only if the doubled state is a singlet.
For a pure fixed-number state, $\nu=0$ means $\Gamma^2=\Gamma$ and
hence characterizes a Slater determinant~\cite{Coleman1963Structure}.

For a block product, Eq.~\eqref{eq:emfringe} and the first two
Bonferroni inequalities give, pointwise in $\varphi$,
\begin{equation}
  \begin{aligned}
  \sum_i\frac{m_i}{4}\sin^2\varphi
  -\sum_{i<j}\frac{m_im_j}{16}\sin^4\varphi
  &\le
  1-\prod_i\left(1-\frac{m_i}{4}\sin^2\varphi\right)\\
  &\le
  \sum_i\frac{m_i}{4}\sin^2\varphi .
  \end{aligned}
  \label{eq:smblockbonferroni}
\end{equation}
Polar averaging uses
\begin{equation}
  \begin{aligned}
  \frac12\int_0^\pi
  \sum_i\frac{m_i}{4}\sin^3\varphi\,d\varphi
  &=\frac{\nu}{6},\\
  \frac12\int_0^\pi
  \sum_{i<j}\frac{m_im_j}{16}\sin^5\varphi\,d\varphi
  &=\frac1{30}\sum_{i<j}m_im_j
  \le\frac{\nu^2}{60},
  \end{aligned}
  \label{eq:smblockpolar}
\end{equation}
and hence
\begin{equation}
  \frac{\nu}{6}-\frac{\nu^2}{60}
  \le1-\A_2(\Psi_{\bm\lambda})
  \le\frac{\nu}{6}.
  \label{eq:emtwosided}
\end{equation}
For one block, both inequalities in
Eq.~\eqref{eq:smblockbonferroni} are equalities, and $1-\A_2=m/6$.
\end{proof}

The same spin decomposition also gives the complementary bound
\begin{equation}
 \frac{\nu}{2N(2N+1)}\le1-\A_2(\Psi)\le\frac{\nu}{6},
 \label{eq:jequantfaithfulness}
\end{equation}
since the maximal allowed replica spin is $2N$.
The lower bound is generally much weaker than the product-input asymptotic, but it makes the finite-size faithfulness quantitative.

The number-resolved beam-splitter readout follows from the same spin
weights, which give
\begin{equation}
  \begin{aligned}
  \mathcal F_\Psi(\varphi)
  &=1-\frac{\nu}{4}\varphi^2+O(\varphi^4),\\
  \Var(\Delta n)\big|_{50{:}50}
  &=4\langle J_x^2\rangle=2\nu,\\
  \Pr(\Delta n=2k)
  \big|_{50{:}50}
  &=\sum_{j\ge|k|}\omega_j|d^j_{k0}(\pi/2)|^2 .
  \end{aligned}
  \label{eq:smhistogram}
\end{equation}
Here $d^j_{k0}(\varphi):=\langle j,k|e^{-i\varphi J_y}|j,0\rangle$
is a Wigner small-$d$ matrix element in the $J_z$ basis.
The splitter rotates $\Delta n=2J_z$ to $-2J_x$.
Since
$|d^j_{j0}(\pi/2)|^2=\binom{2j}{j}4^{-j}>0$, descending in $|k|$
recovers the $\omega_j$ from this triangular system, including
$\A_2=\omega_0$. Because $\omega_j=0$ for odd $j$ and
$d^j_{k0}(\pi/2)=0$ for odd $j+k$, only even $k$ occur and the
histogram of identical pure copies is supported on
$\Delta n\in4\mathbb Z$. Weight at $\Delta n\equiv2\pmod{4}$
therefore certifies a copy mismatch or a loss of purity.
This conclusion assumes the fixed-number preparation and ideal number-resolved readout used here.
Invertibility establishes exact identifiability of $\A_2$. The bounded estimator in Sec.~\ref{sm:singletestimator} additionally gives an explicit number of shots sufficient for a prescribed additive accuracy.

For two uncorrelated fixed-number inputs, copy factorization gives
\begin{equation}
  \begin{aligned}
  \Var(\Delta n)\big|_{50{:}50}
  &=\langle J_+J_-+J_-J_+\rangle\\
  &=\tr[\Gamma_1(\id-\Gamma_2)]
    +\tr[\Gamma_2(\id-\Gamma_1)]\\
  &=\nu_1+\nu_2+\|\Gamma_1-\Gamma_2\|_{\rm F}^2,
  \qquad
  \nu_a=\tr(\Gamma_a-\Gamma_a^2).
  \end{aligned}
  \label{eq:smmixedvariance}
\end{equation}

\subsection{A bounded estimator of the replica-singlet weight}
\label{sm:singletestimator}

Reconstructing every spin weight is unnecessary if only the collision moment is needed. A bounded function of each measured number difference directly estimates the singlet weight. Its shot bound also specifies the precision cost of converting this measurement into a collision estimate.

\begin{proposition}[Number-difference estimator]
\label{smprop:singletestimator}
Let $\Omega$ be a normalized doubled density operator supported on the zero-charge subspace $J_z=0$, with total particle number $4N$. After a balanced beam splitter, write $K:=\Delta n/2$ and define
\begin{equation}
 h(k):=\begin{cases}
 (1-k^2)^{-1},&k\in2\mathbb Z,\\
 0,&k\in2\mathbb Z+1.
 \end{cases}
 \label{eq:jesingletfunction}
\end{equation}
Then
\begin{equation}
 \E_\Omega h(K)=\tr(\Pi_2\Omega).
 \label{eq:jesingletunbiased}
\end{equation}
In particular, for two identical pure half-filled input states the expectation equals $\A_2(\Psi)$. For $M$ independent shots, the sample mean $\widehat{\A}_2=M^{-1}\sum_{a=1}^M h(K_a)$ obeys
\begin{equation}
 \Pr\!\left(\left|\widehat{\A}_2-\tr(\Pi_2\Omega)\right|\ge\epsilon\right)
 \le2\exp\!\left(-\frac{9M\epsilon^2}{8}\right).
 \label{eq:jesingletshots}
\end{equation}
\end{proposition}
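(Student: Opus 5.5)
The plan is to write $h(K)$ as a function of the rotated replica operator and evaluate its expectation spin sector by spin sector. The key observation is that $h$ is, up to normalization, the sequence of Fourier coefficients of $|\sin\theta|$, which reproduces the polar measure in Eq.~\eqref{eq:polaraverage}.

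\emph{Step 1 (reduction to a spin-sector identity).} The balanced splitter $e^{-i\pi J_y/2}$ maps $\Delta n=2J_z$ to $-2J_x$. Hence the distribution of $K$ in $\Omega$ is the spectral distribution of $-J_x$ in $\Omega$. Since $h$ is even, $\E_\Omega h(K)=\tr[\Omega\,h(J_x)]$. The operator $h(J_x)$ commutes with the Casimir $J^2$, so coherences of $\Omega$ between different total spins drop out. Decompose the $J_z=0$ support into $\bigoplus_j \mathcal H_j\otimes\ket{j,0}$, where $\mathcal H_j$ carries the physical multiplicity. This gives $\E_\Omega h(K)=\sum_j\omega_j\,\matel{j,0}{h(J_x)}{j,0}$, with $\omega_j$ the weight of spin $j$ in $\Omega$. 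The total particle number $4N$ guarantees that only integer $j\le 2N$ occur, so all sums are finite. It therefore suffices to prove
\[
\matel{j,0}{h(J_x)}{j,0}=\delta_{j0},
\]
because $\omega_0=\tr(\Pi_2\Omega)$. For identical pure copies, $\omega_0=\A_2(\Psi)$.

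\emph{Step 2 (Fourier evaluation).} Write the spectral projector of $J_x$ as $P_k=(2\pi)^{-1}\int_{-\pi}^{\pi}e^{i\theta(J_x-k)}d\theta$. This gives
\[
\matel{j,0}{h(J_x)}{j,0}=\frac1{2\pi}\int_{-\pi}^{\pi}\hat h(\theta)\,\matel{j,0}{e^{i\theta J_x}}{j,0}\,d\theta,
\qquad \hat h(\theta):=\sum_k h(k)e^{-ik\theta}.
\]
The diagonal element is $d^j_{00}(\theta)=P_j(\cos\theta)$, since a rotation about $x$ and one about $y$ differ by a $z$-rotation, which is trivial on $m=0$. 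The classical Fourier series $|\sin\theta|=\frac{2}{\pi}\sum_{k\in2\mathbb Z}\frac{\cos k\theta}{1-k^2}$ identifies $\hat h(\theta)=\frac{\pi}{2}|\sin\theta|$. The series converges absolutely, so the interchange of sum and integral is harmless. Substituting,
\[
\matel{j,0}{h(J_x)}{j,0}=\frac12\int_0^\pi P_j(\cos\varphi)\sin\varphi\,d\varphi=\delta_{j0},
\]
by orthogonality of Legendre polynomials against $P_0$. This proves Eq.~\eqref{eq:jesingletunbiased}. It is the same polar average as Eq.~\eqref{eq:polaraverage}, now realized at a single angle via the number histogram. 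Note that odd $k$ receive zero weight, consistent with the support statement after Eq.~\eqref{eq:smhistogram}.

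\emph{Step 3 (concentration).} $h$ takes values in $[-1/3,1]$: $h(0)=1$, $h(\pm2)=-1/3$, and $h(k)\in(-1/3,0)$ for the remaining even $|k|\ge4$. The shots are i.i.d., so Hoeffding's inequality with range $4/3$ gives $2\exp(-2M\epsilon^2/(4/3)^2)=2\exp(-9M\epsilon^2/8)$, which is Eq.~\eqref{eq:jesingletshots}.

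The main obstacle is Step 2: recognizing $h$ as the Fourier coefficients of $|\sin\theta|$ and checking that the normalization matches the polar measure exactly. Step 1 needs only care that multiplicities and off-diagonal spin coherences do not contribute, which follows from $[h(J_x),J^2]=0$.
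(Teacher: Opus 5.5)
Your proof is correct and follows the paper's argument. Both reduce to spin sectors by conservation of $J^2$, use $\matel{j,0}{e^{i\theta J_x}}{j,0}=P_j(\cos\theta)$ and Legendre orthogonality against the polar measure, and finish with Hoeffding on $[-1/3,1]$. The only difference is that you phrase the middle step through the Fourier series of $|\sin\theta|$, which needs absolute convergence for the interchange, whereas the paper integrates the characteristic function against $h(k)=\tfrac12\int_0^\pi\sin\theta\cos(k\theta)\,d\theta$ and uses the finite outcome support.
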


\begin{proof}
Decompose the zero-charge density operator into replica-spin sectors and let $\omega_j$ be their weights. Rotation and number measurement preserve total spin, so coherences between distinct spin sectors do not affect the outcome probabilities. Within each spin-$j$ sector, the zero-charge vector has magnetic quantum number zero. Consequently the characteristic function of $K$, after the balanced splitter, is
\begin{equation}
 \chi(\theta):=\E_\Omega e^{i\theta K}
 =\sum_{j=0}^{2N}\omega_j P_j(\cos\theta).
 \label{eq:jesingletcharacteristic}
\end{equation}
Here the sign of the splitter rotation is immaterial, since the number-difference distribution is symmetric. Legendre orthogonality gives
\begin{equation}
 \frac12\int_0^\pi\chi(\theta)\sin\theta\,d\theta
 =\omega_0=\tr(\Pi_2\Omega).
\end{equation}
On the other hand, direct integration yields
\begin{equation}
 \frac12\int_0^\pi\sin\theta\cos(k\theta)\,d\theta=h(k).
\end{equation}
The outcome support is finite, so exchanging the sum and integral proves Eq.~\eqref{eq:jesingletunbiased}. Finally, $-1/3\le h(k)\le1$, and Hoeffding's inequality for this interval gives Eq.~\eqref{eq:jesingletshots}.
\end{proof}

Thus
\begin{equation}
 M\ge\frac{8}{9\epsilon^2}\log\frac{2}{\delta}
\end{equation}
is sufficient for additive singlet-weight error at most $\epsilon$ with failure probability at most $\delta$, independently of $N$. For identical pure copies, $\widehat{\CN}_N=\kappa_N\widehat{\A}_2$ is an unbiased estimator of the normalized collision moment. Additive collision error $\epsilon_{\rm C}$ instead requires
\begin{equation}
 M\ge\frac{8\kappa_N^2}{9\epsilon_{\rm C}^2}\log\frac{2}{\delta},
 \label{eq:jecollisionshots}
\end{equation}
a sufficient bound of order $N^2$ at fixed accuracy and confidence. The estimator also applies to mixed or mismatched copies with the same fixed particle number; in that case it measures $\tr(\Pi_2\Omega)$ rather than the pure-input quantity $\A_2(\Psi)$.

\subsection{Collision law and endpoint asymptotics}
\label{sm:collision}
In this subsection, we discuss in more detail the
collision law and the concomitant endpoint asymptotics.

The two endpoints each contribute $\nu^{-1}[1+o(1)]$ to the pure-state
polar integral. In the blockwise dephasing model of
Sec.~\ref{sm:dephasing}, the exchange saddle has height $\tr\rho^2$ and
a separately modified curvature.
At fixed $\nu$, the logarithm of the product in
Eq.~\eqref{eq:integral} is concave in the profile.  The uniform
profile $m_i=\nu/N$ therefore maximizes the integral, while maximal
concentration minimizes it, with
$m_1=\cdots=m_{\lfloor\nu\rfloor}=1$ and the remainder, if nonzero, on
one block. Since
$-4x/3\le\log(1-x)\le-x$ on $[0,1/4]$,
\begin{equation}
  e^{-\nu(1-z^2)/3}
  \le
  \prod_i\left[1-\frac{m_i}{4}(1-z^2)\right]
  \le
  e^{-\nu(1-z^2)/4}.
  \label{eq:smbracket}
\end{equation}
Thus $\CN_N=\Theta(N/(1+\nu))$ with absolute constants.

\begin{proposition}[Collision endpoint asymptotic]\label{smprop:hardedge}
For every $N$ and profile $0\le m_i\le1$ with
$\nu=\sum_i m_i\ge1$,
\begin{equation}
  \begin{aligned}
  \int_0^1
  \prod_i\left[1-\frac{m_i}{4}(1-z^2)\right]dz
  &=\frac{2}{\nu}\big[1+O(\nu^{-1})\big],\\
  \CN_N&=\frac{2N}{\nu}\big[1+O(\nu^{-1})\big] .
  \end{aligned}
  \label{eq:smsharp}
\end{equation}
The implicit constants are absolute.
\end{proposition}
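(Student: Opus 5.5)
The plan is to write the integrand as $e^{-\nu s}$ times a controlled correction, where $s:=(1-z^2)/4$, and then integrate. Set $I:=\int_0^1\prod_i[1-m_i s]\,dz$. The second line of Eq.~\eqref{eq:smsharp} follows from the first, because $\kappa_N=N[1+O(N^{-1})]$ and $\nu\le N$, so it suffices to prove $I=(2/\nu)[1+O(\nu^{-1})]$ with an absolute constant.

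For each factor, use the elementary bound $-x-x^2\le\log(1-x)\le-x$ on $[0,1/4]$; here $x=m_is\le1/4$. Summing over blocks gives
\begin{equation}
  e^{-\nu s}\,e^{-s^2\sum_i m_i^2}\le\prod_i(1-m_is)\le e^{-\nu s}.
\end{equation}
Since $m_i\le1$, we have $\sum_im_i^2\le\nu$. Together with $e^{-y}\ge1-y$ this yields $e^{-\nu s}(1-\nu s^2)\le\prod_i(1-m_is)\le e^{-\nu s}$. Consequently
\begin{equation}
  \left|I-\int_0^1e^{-\nu s}\,dz\right|\le\nu\int_0^1s^2e^{-\nu s}\,dz .
\end{equation}

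It remains to evaluate both integrals via the endpoint layer near $z=1$. Substitute $t=1-z$, so that $s=t(2-t)/4$ lies between $t/4$ and $t/2$ on $[0,1]$. The main term is $\int_0^1e^{-\nu t/2}e^{\nu t^2/4}\,dt$. On $t\le\nu^{-1/2}$, expand $e^{\nu t^2/4}=1+O(\nu t^2)$; the leading piece gives $(2/\nu)(1-e^{-\nu/2})$, and the correction contributes $O(\nu\int t^2e^{-\nu t/2}dt)=O(\nu^{-2})$. On $t\ge\nu^{-1/2}$, the bound $s\ge t/4$ makes the integrand at most $e^{-\nu t/4}$, so this region is exponentially small in $\sqrt\nu$ and hence $O(\nu^{-2})$. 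The error integral satisfies $\nu\int s^2e^{-\nu s}dz\le\nu\int_0^1(t/2)^2e^{-\nu t/4}dt=O(\nu^{-2})$. Combining these estimates, $I=2/\nu+O(\nu^{-2})$ uniformly for $\nu\ge1$, which is the claimed relative error $O(\nu^{-1})$.

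The main obstacle is keeping every constant absolute, that is, independent of $N$ and of the profile. This uniformity is exactly why the quadratic bound is taken in the form $\sum_im_i^2\le\nu$, rather than as a bound on $\max_im_i$, and why the argument splits at $t=\nu^{-1/2}$ instead of passing to a dominated-convergence limit as in the End Matter. I would double-check that the constant in $\log(1-x)\ge-x-x^2$ holds on $[0,1/4]$; it does, since the remainder is $\sum_{k\ge2}x^k/k\le x^2/(2(1-x))\le x^2$.
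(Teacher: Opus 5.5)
Your proof is correct and follows essentially the paper's route: an endpoint analysis at $z=1$ in the variable $t=1-z$, a quadratic bound on $\log(1-x)+x$ over $[0,1/4]$, the profile-uniform estimate $\sum_i m_i^2\le\nu$, and the envelope $s\ge t/4$. The paper merges your two comparison steps into one mean-value estimate, $|e^{F(\tau)}-e^{-\tau/2}|\le(\tau^2/2\nu)e^{-\tau/4}$ with $\tau=\nu(1-z)$, so it needs no cutoff at $t=\nu^{-1/2}$; note also the harmless slip that your leading piece on $[0,\nu^{-1/2}]$ is $(2/\nu)(1-e^{-\sqrt\nu/2})$ rather than $(2/\nu)(1-e^{-\nu/2})$, which is still $2/\nu+O(\nu^{-2})$ uniformly for $\nu\ge1$.
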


\begin{proof}
Set $\tau=\nu(1-z)$. The integral in
Eq.~\eqref{eq:integral} becomes
\begin{equation}
  \frac1\nu\int_0^\nu
  \exp\!\left\{
    \sum_i\log\!\left[
      1-\frac{m_i}{4}\frac{\tau}{\nu}
      \left(2-\frac{\tau}{\nu}\right)
    \right]
  \right\}\,d\tau .
  \label{eq:smrescaledendpoint}
\end{equation}
For $0\le\tau\le\nu$,
\begin{equation}
  \begin{aligned}
  \sum_i\frac{m_i}{4}\frac{\tau}{\nu}
      \left(2-\frac{\tau}{\nu}\right)
  &=\frac{\tau}{2}-\frac{\tau^2}{4\nu},\\
  \sum_i\left[
      \frac{m_i}{4}\frac{\tau}{\nu}
      \left(2-\frac{\tau}{\nu}\right)
    \right]^2
  &\le\frac{\tau^2}{4\nu},
  \end{aligned}
  \label{eq:smendpointmoments}
\end{equation}
and $|\log(1-x)+x|\le2x^2/3$ on $[0,1/4]$ shows that the exponent in
Eq.~\eqref{eq:smrescaledendpoint} differs from $-\tau/2$ by at most
$\tau^2/(2\nu)$.  The same exponent is at most $-\tau/4$, by
$\log(1-x)\le-x$ and $1-z^2\ge1-z$.
Writing $F(\tau)$ for this exponent, the mean value theorem gives
\begin{equation}
  \left|e^{F(\tau)}-e^{-\tau/2}\right|
  \le\frac{\tau^2}{2\nu}e^{-\tau/4}.
\end{equation}
Integrating this bound and extending the reference integral to infinity gives
\begin{equation}
  \int_0^\nu e^{F(\tau)}\,d\tau
  =\int_0^\infty e^{-\tau/2}d\tau+O(\nu^{-1})
  =2+O(\nu^{-1}),
\end{equation}
establishing the first half of Eq.~\eqref{eq:smsharp}. The factor
$\kappa_N=N[1+O(N^{-1})]$ contributes only
$O(\nu^{-1})$ to the relative error because $\nu\le N$.  The explicit
first correction is Eq.~\eqref{eq:smfirstcorr}.
\end{proof}

\smallskip
\paragraph*{Bounded blocks.}
Let $\CC^{4N}=\bigoplus_s\mathcal H_s$ with
$\dim\mathcal H_s\le b_*$, and let
$\Psi=\bigotimes_s\psi_s$ be a product of local states of fixed particle number
with total particle number $2N$.  Put
$\nu_s=\tr(\Gamma_s-\Gamma_s^2)$, so that $\nu=\sum_s\nu_s$.  If
$\omega_{s,j}$ are the replica-spin weights of $\psi_s^{\otimes2}$,
then
\begin{equation}
 \begin{aligned}
  \mathcal F_\Psi(\varphi)
  &=\prod_s\mathcal F_s(\varphi),\\
  \mathcal F_s(\varphi)
  &=\sum_{\substack{0\le j\le\dim\mathcal H_s/2\\j\ {\rm even}}}
    \omega_{s,j}P_j(\cos\varphi),\\
  \sum_j\omega_{s,j}&=1,
  &\nu_s&=\sum_jj(j+1)\omega_{s,j}.
 \end{aligned}
\end{equation}
Only even spins occur: for $k_s$ particles, the phase $(-1)^{k_s}$
from the $\pi$ replica rotation cancels the fermionic reordering sign
$(-1)^{k_s^2}$. Here $P_j$ is the Legendre polynomial.
Choose $\varphi_0>0$ sufficiently small depending only on $b_*$. For
the finitely many allowed nonzero even spins, compactness gives a
positive lower bound on $1-|P_j(\cos\varphi)|$ for
$\varphi_0\le\varphi\le\pi-\varphi_0$. Moreover,
\begin{equation}
 \nu_s\le
 \frac{b_*}{2}\left(\frac{b_*}{2}+1\right)(1-\omega_{s,0}),
\end{equation}
so this gap is uniform in the local state. Together with the uniform
endpoint Taylor expansions, this gives
\begin{equation}
  \mathcal F_s(\varphi)=1-\frac{\nu_s}{4}\varphi^2+O_{b_*}(\nu_s\varphi^4),
  \qquad
  |\mathcal F_s(\varphi)|\le
  \begin{cases}
    e^{-c_{b_*}\nu_s\varphi^2},
      &0\le\varphi\le\varphi_0,\\
    e^{-c_{\varphi_0,b_*}\nu_s},
      &\varphi_0\le\varphi\le\pi-\varphi_0,
  \end{cases}
  \label{eq:smboundedendpoint}
\end{equation}
and $\mathcal F_s(\pi-\varphi)=\mathcal F_s(\varphi)$.  Since $\nu_s\le b_*/4$, one has
$\sum_s\nu_s^2=O_{b_*}(\nu)$, and for fixed $y$
\begin{equation}
  \sum_s\log \mathcal F_s(y/\sqrt{\nu})
  =-\frac{y^2}{4}+O_{b_*}(y^4/\nu).
\end{equation}
Both endpoints of the polar integral contribute equally, while the
interior is $O(e^{-c\nu})$.  Rescaling
$y=\sqrt{\nu}\,\varphi$ gives
\begin{equation}
  \A_2(\Psi)=\frac{2}{\nu}[1+o_{b_*}(1)],
  \qquad
  \CN_N(\Psi)=\frac{2N}{\nu}[1+o_{b_*}(1)],
  \label{eq:smboundedhardedge}
\end{equation}
as $\nu\to\infty$. For fixed $b_*$, the error depends only on $\nu$
and is uniform over $N$, the block decomposition, and the local states.

\smallskip
\paragraph*{Finite-size forms.}
Let $\bar m:=\nu^{-1}\sum_i m_i^2$ be the weighted mean of the profile.
Expanding the same endpoint integral
one order further gives
\begin{equation}
  \frac{\nu\CN_N}{2\kappa_N}
  =1+\frac{2-\bar m}{\nu}+O(\nu^{-2})
  \label{eq:smfirstcorr}
\end{equation}
as $\nu\to\infty$.

For $m_i=\gamma$,
\begin{equation}
  \int_0^1
  \left[1-\frac{\gamma}{4}(1-z^2)\right]^Ndz
  =
  {}_2F_1\!\left(-N,1;\frac32;\frac{\gamma}{4}\right).
  \label{eq:sm2F1}
\end{equation}
At $\gamma=1$ this becomes
\begin{equation}
  \CN_N
  =\kappa_N I_N,
  \qquad
  I_N=\int_0^1\left(\frac{3+z^2}{4}\right)^Ndz,
  \qquad
  \sup_N\CN_N=\CN_7.
  \label{eq:smmaxmagic}
\end{equation}
Integrating the derivative of $z[(3+z^2)/4]^N$ gives
\begin{equation}
  (2N+1)I_N=1+\frac{3N}{2}I_{N-1}.
  \label{eq:smrec}
\end{equation}
Remark~4 of Ref.~\cite{Oszmaniec2022Fermion} suggested numerically,
through the combinatorial upper estimate in its Eq.~(E73), the uniform
bound $N\A_2\le2.4$. Retaining the exact Schur-sector dimension, the
Paley--Zygmund denominator is $\CN_N=\kappa_N\A_2$. Set
$\Lambda_N:=8N^3-18N^2+9N+2$ and $\Upsilon_N:=16N^2-4N-6$.  The
recursion \eqref{eq:smrec} gives
\begin{equation}
  \CN_N-\CN_{N-1}
  =-\frac{\Lambda_NI_{N-1}-\Upsilon_N}{2(4N-3)(4N+1)},
\end{equation}
so the sequence decreases at step $N$ exactly when
$I_{N-1}>\Upsilon_N/\Lambda_N$.
Iterating Eq.~\eqref{eq:smrec} from $I_0=1$ gives
$I_7=1139/3432>493/1509=\Upsilon_8/\Lambda_8$, and the threshold
propagates. If $I_{N-1}>\Upsilon_N/\Lambda_N$ for some $N\ge8$, then
\begin{equation}
  \begin{aligned}
  I_N
  =\frac{1+\frac{3N}{2}I_{N-1}}{2N+1}
  &>\frac{1+\frac{3N}{2}\,\Upsilon_N/\Lambda_N}{2N+1}\\
  &=\frac{\Upsilon_{N+1}}{\Lambda_{N+1}}
   +\frac{2(4N+1)\left(56N^3-4N^2-50N-5\right)}
         {(2N+1)\,\Lambda_N\Lambda_{N+1}}
  >\frac{\Upsilon_{N+1}}{\Lambda_{N+1}}.
  \end{aligned}
\end{equation}
Induction gives strict decrease for $N\ge8$, while direct iteration gives
the increase through $N=7$ and hence
$\sup_{N\ge1}\CN_N=\CN_7=85425/33176$ (Fig.~\ref{fig:maxmagic}).

\section{Orthogonal Slater decompositions and the cat-state obstruction}
\label{sm:purity}

Extensive one-body defect is necessary for anticoncentration, but it is not sufficient for arbitrary input states. Orthogonal Slater decompositions retain a minimum singlet weight through Pauli blocking. A complementary Slater cat state illustrates the resulting obstruction even at maximal defect density.
Pauli blocking already gives a lower bound on the singlet weight.  If
$\ket{\Psi}=\sum_kc_k\ket{f_k}$ is an expansion in pairwise orthonormal
Slater determinants, then each $\ket{f_k}^{\otimes2}$ is a replica
singlet, since $J_z$ vanishes 
on it and $J_+$ has no empty orbital to
hop into.  These doubled determinants are orthonormal, so
$\Pi_2\ge\sum_k\ket{f_k}^{\otimes2}\bra{f_k}^{\otimes2}$, and
evaluating on $\ket{\Psi}^{\otimes2}$ gives
\begin{equation}
  \A_2(\Psi)\ge\sum_k|c_k|^4,\qquad
  \CN_N(\Psi)\ge
  \kappa_N\sum_k|c_k|^4 ,
  \label{eq:smpurity}
\end{equation}
the second inequality by Eq.~\eqref{eq:order}.

By Cauchy--Schwarz, an expansion with $\chi$ terms satisfies
$\sum_k|c_k|^4\ge1/\chi$.  Hence bounded collision moment requires
$\Omega(N)$ mutually orthogonal Slater determinants.

For example, let
\begin{equation}
 \ket{\Psi_{\rm cat}}=
 \frac{\ket{f_1}+\ket{f_2}}{\sqrt2},
 \label{eq:smcat}
\end{equation}
where $f_1$ and $f_2$ fill complementary $2N$-dimensional orbital
subspaces $F_1$ and $F_2=F_1^\perp$. One-body cross terms vanish, but the collision moment
remains extensive, with
\begin{equation}
 \begin{aligned}
 \Gamma&=\frac{P_{F_1}+P_{F_2}}2=\frac{\id}{2},
 &\nu&=N,\\
 \A_2(\Psi_{\rm cat})&=\frac12+\frac1{2N+1},
 &\CN_N(\Psi_{\rm cat})&>\frac N2 .
 \end{aligned}
 \label{eq:smcatcollision}
\end{equation}
Explicitly, the diagonal doubled determinants contribute
$1/2$ to the singlet weight.  In either cross sector there is one
replica fermion per physical mode, and the direct and exchange singlet
matrix elements are
$\int_0^1t^{2N}dt$ and $\int_0^1(1-t)^{2N}dt$, with
$t=|g_{11}|^2$ uniform on $[0,1]$ for Haar $\SU(2)$.  The
exchange and fermionic-reordering signs cancel (each is an even power
of $-1$ for groups of $2N$ fermions), giving the remaining
$1/(2N+1)$ in Eq.~\eqref{eq:smcatcollision}.

The cat state also fails to anticoncentrate.  If
$\ket P=V^\dagger\ket x$ is Haar on $\Gr(2N,4N)$, then for fixed
$0<r<1$,
\begin{equation}
 \begin{aligned}
 p_x&=\frac12
 \left|\braket P{f_1}+\braket P{f_2}\right|^2,\\
 \E_VX_x^r
 &\le C_rD_N^r\sum_{k=1}^2
 \E_P|\braket P{f_k}|^{2r}
 =O_r\!\left(N^{-r(1-r)/2}\right),
 \end{aligned}
 \label{eq:smcatfractional}
\end{equation}
by the Slater-determinant law \eqref{eq:smfreelaw}.  Markov's inequality then gives
$\Pr_V[X_x\ge\alpha]\to0$ for every fixed $\alpha>0$.

\section{Higher replicas and the Porter--Thomas threshold}
\label{sm:higher}

In this section, we extend the two-copy sector reduction to a fixed number $\ell\ge2$ of replicas and identify the permutation saddles that determine the moment hierarchy. Their contributions distinguish collision anticoncentration from agreement with all Porter--Thomas moments. We also control fractional moments at asymptotically maximal density, where the limiting probability law is fixed by the integer moments.

\begin{proposition}[Flag reduction and Laplace asymptotic]\label{smprop:flag}
For the four-mode product family,
\begin{equation}
  \A_\ell(\Psi_{\bm\lambda})
  =
  \int_{\Flag_\ell}
  \prod_{i=1}^N\phi_{\lambda_i}^{(\ell)}(g)\,
  d\mu_{\Flag_\ell}(g),
  \qquad
  \Flag_\ell=\U(\ell)/\U(1)^\ell ,
  \label{eq:smflag}
\end{equation}
where
\begin{equation}
  \phi_\lambda^{(\ell)}(g)
  =
  \sum_{j=0}^{\ell}
  \lambda^j(1-\lambda)^{\ell-j}
  \sum_{\substack{A,B\subseteq[\ell]\\|A|=|B|=j}}
  |\det g_{B,A}|^4 .
\end{equation}
Its large-$\nu$ behavior is
\begin{equation}
  \A_\ell(\Psi_{\bm\lambda})
  =
  \ell!\,\GB(\ell+1)\,
  \nu^{-\binom{\ell}{2}}[1+o_\ell(1)]
  \qquad (\nu\to\infty).
  \label{eq:smlaplace}
\end{equation}
\end{proposition}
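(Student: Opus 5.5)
The proof has two parts: an exact integral representation of $\A_\ell$, and a Laplace evaluation of that integral. For the representation I would write the singlet projector as a replica average, $\Pi_\ell=\int_{\U(\ell)}\mathcal R(g)\,dg$. Here $\mathcal R(g)$ is the Gaussian replica rotation $c_a^{(s)\dagger}\mapsto\sum_t g_{ts}c_a^{(t)\dagger}$. The singlet sector of the collective rotations is the trivial isotypic component, so Haar averaging projects onto it. The doubled-up product input is a product over blocks. Each block has a fixed particle number, so the replica-embedding signs factor, and
\[
\A_\ell=\int_{\U(\ell)}\prod_i\matel{\psi_{\lambda_i}^{\otimes\ell}}{\mathcal R(g)}{\psi_{\lambda_i}^{\otimes\ell}}\,dg .
\]

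Next I would compute the single-block return amplitude. Expand $\ket{\psi_\lambda}^{\otimes\ell}$ by choosing, in each replica, either the pair $(1,2)$ (weight $\sqrt\lambda$) or the pair $(3,4)$ (weight $\sqrt{1-\lambda}$). Let $A$ be the set of replicas that choose the first pair. Each mode $a$ carries the row $c_a^\dagger$ over replicas. A replica rotation acting on the occupied set $A$ of mode $a$ produces the minor $\det g_{B,A}$ with $|B|=|A|$. Modes $1,2$ share the set $A$, and modes $3,4$ share the complement $A^c$. The overlap is therefore $\sum_{A,B}\lambda^{|A|/2}\lambda^{|B|/2}(1-\lambda)^{\cdots}(\det g_{B,A})^2(\det g_{B^c,A^c})^2$. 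Jacobi's complementary-minor identity gives $\det g_{B^c,A^c}=\pm\overline{\det g_{B,A}}\det g$. Combined with the fact that diagonal phases act trivially, this turns the overlap into $\phi^{(\ell)}_\lambda(g)$ with $|\det g_{B,A}|^4$, up to the factor $|\det g|^2=1$. I would check the sign bookkeeping the same way as in the $\ell=2$ exchange argument: an even number of reorderings for each pair. Because $\phi$ is invariant under $g\mapsto g\,\mathrm{diag}(e^{i\theta})$ and $g\mapsto\mathrm{diag}(e^{i\theta})\,g$, the integral descends to $\Flag_\ell$. This proves Eq.~\eqref{eq:smflag}.

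For the Laplace asymptotic, first note that $0\le\phi_\lambda^{(\ell)}\le1$, by Cauchy--Binet and unitarity. Equality $\phi=1$ holds on the whole family exactly when every minor $|\det g_{B,A}|\in\{0,1\}$, which forces $g$ to be a monomial (permutation times phases) matrix. On $\Flag_\ell$ these give $\ell!$ isolated maxima. Near the identity I would parametrize $g=\exp(iH)$ with off-diagonal Hermitian $H$, which gives $2\binom{\ell}{2}$ real coordinates. Expanding to second order, only $j=1$ minors and their complements contribute, and
\[
\phi_\lambda^{(\ell)}=1-\tfrac{m}{4}\sum_{s<t}|H_{st}|^2\cdot c+O(|H|^3),
\]
with the constant $c$ fixed by matching the $\ell=2$ case, $1-\tfrac m4\sin^2\varphi$. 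The product over blocks then yields the Gaussian $\exp(-\tfrac{\nu}{4}c\sum|H_{st}|^2)$. Each of the $\binom{\ell}{2}$ complex directions contributes an area proportional to $1/\nu$. Summing over the $\ell!$ permutation saddles gives $\ell!\,K_\ell\,\nu^{-\binom{\ell}{2}}$. I would not compute $K_\ell$ directly from the Haar density on the flag manifold. Instead I would identify it as $\GB(\ell+1)$ by a consistency check: at $\gamma=1$ the moments of an independent input must match, or more simply I would read off the Jacobian $\prod_{j<\ell}j!$ from the normalized flag volume in exponential coordinates. As a check, $\ell=2$ reproduces $2/\nu$.

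The main obstacle is uniformity over profiles, where the defect may be spread over many weak blocks. Away from the saddles I need $\prod_i\phi_{\lambda_i}\le e^{-c_\ell\nu\,\mathrm{dist}^2}$. To get this I would prove a per-block bound $\phi_\lambda(g)\le 1-c_\ell\,m\,\delta(g)$, where $\delta(g)$ is the squared distance of $g$ to the monomial set. The point is that $\phi$ is affine-like in $m$ at fixed $g$, and it reduces to a compact statement for $m\in[0,1]$; then I would use $\log(1-x)\le-x$. Near the saddles, the cubic and quartic Taylor errors are controlled by $\sum_i m_i|H|^3$, which after rescaling by $\nu^{-1/2}$ is $O(\nu^{-1/2})$, and dominated convergence finishes the proof as in Proposition~\ref{smprop:hardedge}.
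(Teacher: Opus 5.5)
Your route is the paper's: a replica character average, block overlaps via minors and the complementary-minor identity, and Laplace's method at the $\ell!$ Weyl points with compactness bounds. However, two steps in the flag reduction are wrong as written, and the correct formula you reach is not supported by them.

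First, on the replicated sector with $\ell n$ fermions ($n=2N$), the centre $e^{i\theta}\id\in\U(\ell)$ acts as $e^{i\theta\ell n}$, so $\int_{\U(\ell)}\mathcal R(g)\,dg$ vanishes there. The singlet sector of collective $\SU(\ell)$ rotations is the $\det^{n}$-isotypic component of $\U(\ell)$, not the trivial one. The correct projector is $\Pi_\ell=\int_{\U(\ell)}\overline{\det(g)^{n}}\,\mathcal R(g)\,dg$, or equivalently the Haar average over $\SU(\ell)$.

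Second, your own identity $\det g_{B^c,A^c}=\pm\overline{\det g_{B,A}}\,\det g$ gives $(\det g_{B,A})^2(\det g_{B^c,A^c})^2=(\det g)^2|\det g_{B,A}|^4$. That is a phase $(\det g)^2$ per block, not $|\det g|^2=1$. Your two slips cancel each other. In a correct derivation, it is the product $(\det g)^{2N}=(\det g)^{n}$ of block phases that cancels the character. Only then does the integrand become $\prod_i\phi^{(\ell)}_{\lambda_i}$ and descend to $\Flag_\ell$.

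The Hessian step fails for $\ell\ge4$. It is not true that only $j=1$ minors and their complements contribute at second order; every diagonal minor $A=B$ of every size does. Set $d_j(g):=\binom{\ell}{j}-\sum_{|A|=|B|=j}|\det g_{B,A}|^4\ge0$. Then the exact factorization is $1-\phi^{(\ell)}_\lambda=\lambda(1-\lambda)\sum_{j=1}^{\ell-1}\lambda^{j-1}(1-\lambda)^{\ell-j-1}d_j(g)$. Near a Weyl point, $d_j=4\binom{\ell-2}{j-1}\sum_{s<t}|H_{st}|^2+O(|H|^3)$, and the binomial sum collapses to $1-\phi^{(\ell)}_\lambda=m\sum_{s<t}|H_{st}|^2+O(m|H|^3)$. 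Keeping only $j\in\{1,\ell-1\}$ would give the coefficient $m[(1-\lambda)^{\ell-2}+\lambda^{\ell-2}]$. That is not a function of $m$ alone, so the Gaussian exponent would not be $\nu\sum_{s<t}|H_{st}|^2$.

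Matching to $\ell=2$ is legitimate only after you show the coefficient is $\ell$-independent. One way is the exact identity $\phi^{(\ell)}_\lambda(g_2\oplus\id_{\ell-2})=\phi^{(2)}_\lambda(g_2)$, together with torus and permutation invariance of the Hessian. The same factorization is what your uniformity step needs in place of ``affine-like in $m$''. It makes the Taylor error proportional to $m$, and it gives $1-\phi^{(\ell)}_\lambda\ge2^{-\ell}m\,d_1(g)$, with $d_1$ vanishing exactly on monomial matrices.

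Finally, fixing $K_\ell$ by a ``consistency check at $\gamma=1$'' is circular, because Theorem~\ref{thm:PT} obtains the $\gamma=1$ moments from this proposition. You need the flag-measure normalization itself: density $\GB(\ell+1)/\pi^{\binom{\ell}{2}}$ in these coordinates, with the same norm as in the Hessian. This follows by induction over the fibration $\Flag_\ell\to\CC P^{\ell-1}$ with fibre $\Flag_{\ell-1}$, using the Fubini--Study factor $(\ell-1)!/\pi^{\ell-1}$.
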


\begin{proof}
The projector onto replica $\det^n$, $n=2N$, is the character average
\begin{equation}
  \Pi_\ell
  =\int_{\U(\ell)}
  \overline{\det(g)^n}\,\pi_\ell(g)\,dg ,
\end{equation}
where $\pi_\ell$ is the replica-$\U(\ell)$ action on the $\ell$-fold
replicated Fock space.  For one replicated block, let $A\subseteq[\ell]$ label the
replicas in configuration $\ket{1100}$.  If $|A|=|B|=j$, then
\begin{equation}
  \matel{B}{\pi_\ell(g)}{A}
  =
  \det(g_{B,A})^2\det(g_{B^c,A^c})^2 .
  \label{eq:smminors}
\end{equation}
The reordering sign between replica-first and mode-first conventions
depends only on $j=|A|$. Since $\pi_\ell(g)$ preserves the total replica
occupation of every physical mode, a nonzero matrix element has
$|B|=|A|$, so the bra and ket signs cancel.
For unitary $g$, the complementary-minor identity turns this into
$\det(g)^2|\det(g_{B,A})|^4$.  Across $N$ blocks,
$\det(g)^{2N}$ cancels the character, and diagonal phases drop out,
leaving Eq.~\eqref{eq:smflag}.

The integral is evaluated by Laplace's method. The maxima of every
nonfree block symbol are the $\ell!$ Weyl points represented by
permutation matrices. At a permutation matrix $P$, use
$g=P e^{K(Z)}$, where $K_{a,b}=z_{a,b}$ and
$K_{ba}=-\overline{z}_{a,b}$ for $a<b$, $K_{aa}=0$, and
$\|Z\|^2=\sum_{a<b}|z_{a,b}|^2$.
The normalized invariant measure has
\begin{equation}
  d\mu_{\Flag_\ell}(Z)
  =
  \frac{\GB(\ell+1)}{\pi^{\binom{\ell}{2}}}
  [1+O_\ell(\|Z\|^2)]
  \prod_{a<b}d^2z_{a,b}.
  \label{eq:smflagmeasure}
\end{equation}
The normalization follows by induction on $\ell$ from the affine
Fubini--Study factor $(\ell-1)!/\pi^{\ell-1}$.

For the symbol, set
\begin{equation}
  d_j(g)=\binom{\ell}{j}
  -\sum_{\substack{|A|=|B|=j}}|\det g_{B,A}|^4 .
\end{equation}
Since $\wedgepow jg$ is unitary, $d_j\ge0$, and
\begin{equation}
  1-\phi_\lambda^{(\ell)}(g)
  =
  \lambda(1-\lambda)
  \sum_{j=1}^{\ell-1}
  \lambda^{j-1}(1-\lambda)^{\ell-j-1}d_j(g).
  \label{eq:smstifffactor}
\end{equation}
Near a Weyl point,
\begin{equation}
  \begin{aligned}
  d_j(Z)
  &=4\binom{\ell-2}{j-1}\|Z\|^2
    +O_\ell(\|Z\|^3),\\
  \phi_\lambda^{(\ell)}(Z)
  &=1-m\|Z\|^2+O_\ell(m\|Z\|^3),
  \qquad m=4\lambda(1-\lambda).
  \end{aligned}
  \label{eq:smhessian}
\end{equation}
Moreover $d_1=\ell-\sum_{a,b}|g_{a,b}|^4$ vanishes exactly on monomial
unitaries.  Its coefficient in Eq.~\eqref{eq:smstifffactor} is at
least $2^{-\ell}m$ for $0\le\lambda\le1/2$.
Compactness supplies constants $c_\ell,r_\ell>0$ such that
\begin{equation}
  1-\phi_\lambda^{(\ell)}(Z)
  \ge c_\ell m\|Z\|^2
  \quad(\|Z\|\le r_\ell),
  \qquad
  1-\phi_\lambda^{(\ell)}(g)
  \ge c_\ell m
\end{equation}
away from the $r_\ell$-neighborhood of the Weyl set.

Since $\phi_\lambda^{(\ell)}\ge(1-\lambda)^\ell\ge2^{-\ell}$,
$\log\phi_\lambda^{(\ell)}$ is uniformly bounded.
Equations~\eqref{eq:smhessian} and
$\sum_im_i^2\le\nu$ imply
\begin{equation}
  \sum_i\log\phi_{\lambda_i}^{(\ell)}(Z)
  =
  -\nu\|Z\|^2+O_\ell(\nu\|Z\|^3).
  \label{eq:smuniformlog}
\end{equation}
The complement of the Weyl neighborhoods contributes
$O(e^{-c_\ell\nu})$, and the annulus
$\nu^{-2/5}\le\|Z\|\le r_\ell$ is
$O(e^{-c_\ell\nu^{1/5}})$.  On $\|Z\|\le\nu^{-2/5}$,
Eq.~\eqref{eq:smflagmeasure} gives, at one Weyl point,
\begin{equation}
  \frac{\GB(\ell+1)}{\pi^{\binom{\ell}{2}}}
  \int_{\CC^{\binom{\ell}{2}}}
  e^{-\nu\|Z\|^2}\prod_{a<b}d^2z_{a,b}
  [1+o_\ell(1)]
  =
  \frac{\GB(\ell+1)}
       {\nu^{\binom{\ell}{2}}}[1+o_\ell(1)] .
  \label{eq:smgauss}
\end{equation}
Multiplying by the $\ell!$ Weyl points yields
Eq.~\eqref{eq:smlaplace}.
\end{proof}

Combining Eqs.~\eqref{eq:smRq} and \eqref{eq:smlaplace} gives
\begin{equation}
  \E_V[X_x^\ell]
  =
  \ell!\left(\frac{N}{\nu}\right)^{\binom{\ell}{2}}
  [1+o_\ell(1)].
  \label{eq:smPT}
\end{equation}
When $\nu/N\to\bar\gamma\in(0,1]$, the limit is
$\ell!\bar\gamma^{-\binom{\ell}{2}}$.  Hence $\gamma\to1$ implies the full
fixed-order Porter--Thomas hierarchy.  Conversely, its second moment
must tend to $2$. Equation~\eqref{eq:smbracket} first gives
$\nu\to\infty$, and Eq.~\eqref{eq:smsharp} then forces $\nu/N\to1$.
At $N=1$ and $\ell=3$, using
$\E_{\U(3)}|g_{a,b}|^4=1/6$ gives
\begin{eqnarray}
  \phi_\lambda^{(3)}(g)
  &=&1-\frac{m}{4}
    \left(3-\sum_{a,b=1}^3|g_{a,b}|^4\right),\\
  \A_3&=&1-\frac{3m}{8},\\
  \left.\E_V[X_x^3]\right|_{m=1}
  &=&\frac{108}{25}\frac58=\frac{27}{10}.
  \label{eq:sm2710}
\end{eqnarray}
At fixed $0<\gamma\le1$, the limiting integer moments coincide with
those of $EL_\gamma$, where $E\sim\operatorname{Exp}(1)$ is
independent of $\zeta\sim\mathcal N(0,1)$ and
\begin{equation}
  \begin{aligned}
  L_\gamma
  &:=\exp\!\left[
    \sqrt{\log(1/\gamma)}\,\zeta
    -\frac12\log(1/\gamma)\right],\\
  \E[(EL_\gamma)^\ell]
  &=\ell!\,\gamma^{-\ell(\ell-1)/2}.
  \end{aligned}
  \label{eq:smlognormal}
\end{equation}
For $0<\gamma<1$ the lognormal factor is moment
indeterminate~\cite{Heyde1963Property}, and this indeterminacy survives
multiplication by $E$: for any positive $L$ independent of $E$,
$\E[e^{-tEL}]=\E[(1+tL)^{-1}]$ for $t>0$, the Stieltjes transform of $L$.
Thus the integer moments identify the limit law only at $\gamma=1$.

\subsection{Fractional moments at asymptotically maximal density}
\label{sm:fracpt}

At $\gamma\to1$ the fractional moments follow from the integer ones.
Equation~\eqref{eq:smPT} sends every
integer moment to $\ell!$, the moments of $\mathrm{Exp}(1)$, which
satisfy Carleman's condition, so
$X_x\Rightarrow\mathrm{Exp}(1)$.  Since
$\sup_N\E[(X_x^r)^{1/r}]=\sup_N\E X_x=1$ and $1/r>1$, the family
$\{X_x^r\}$ is uniformly integrable.  Hence, for the four-mode product
family with $\nu\to\infty$ and $\nu/N\to1$,
\begin{equation}
  \E_V[X_x^r]\longrightarrow\Gamma(1+r),
  \qquad 0<r<1 .
  \label{eq:smfracpt}
\end{equation}

\section[Fractional moments]{Universal fractional-moment bounds and a Mellin asymptotic conjecture}
\label{sm:nogo}

Fractional moments detect whether typical rescaled probabilities vanish even when rare outcomes keep their mean fixed. We prove a universal bound by resolving the input into natural-orbital excitation sectors and estimating random Slater overlaps. We then separate exact padding identities and established distributional limits from the remaining Mellin asymptotic conjecture.
In this section $n$ particles occupy $2n$ modes and
$D=\binom{2n}{n}$; the main-text convention is $n=2N$.
We identify $V^\dagger\ket{x}$ with the Slater determinant
$\ket P$ of a Haar-random plane $P\in\Gr(n,2n)$. Thus
\begin{equation}
  \E_V[X_x^r]
  =D^r\E_P|\braket{P}{\Psi}|^{2r}.
  \label{eq:smgrassmannmoment}
\end{equation}

\subsection{Proof of Theorem~\ref{thm:noroute}}

The proof reduction is given in End Matter, Sec.~\ref{em:nogo}.
Here we establish its ingredients: the free Slater law, the bound on
the mean excitation number, and the uniform Jacobi estimate of
Lemma~\ref{smlem:dyadic}. Proposition~\ref{smprop:sparse} also treats
a general frozen Slater core.

\smallskip
\paragraph*{A frozen Slater core.}

\begin{proposition}[Frozen-core bound]\label{smprop:sparse}
Let $\ket{\Psi}=\ket f\wedge\ket\eta$, where $\ket f$ is an
$(n-q)$-particle Slater determinant and $\ket\eta$ is an arbitrary
normalized $q$-fermion state vector on its one-particle orthogonal
complement. For every $0<r<1$,
\begin{equation}
  D^r\E_P|\braket{P}{\Psi}|^{2r}
  \le C_r\left(\frac{q+1}{n}\right)^{r(1-r)/2}.
  \label{eq:smsparse}
\end{equation}
\end{proposition}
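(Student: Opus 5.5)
Let $W\subset\CC^{2n}$ be the $(n-q)$-dimensional orbital span of $f$ and $W^\perp$ its complement, so that $\ket\eta\in\wedgepow{q}W^\perp$. The plan is to reduce the overlap $\braket{P}{\Psi}$ to an overlap inside a smaller Grassmannian and then use the free Slater law \eqref{eq:smfreelaw} together with an explicit control of the random geometry.

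\textbf{Step 1: factorize the overlap.} Write $\ket P$ as the wedge of an orthonormal basis of $P$. Since $\ket\Psi=\ket f\wedge\ket\eta$, a Laplace-type expansion gives
\[
\braket P\Psi=\det(B)\,\braket{P'}{\eta},
\]
where $B$ is the $(n-q)\times(n-q)$ matrix of the projection $P\to W$, restricted to a suitable subspace. Here $P'\subset W^\perp$ is the $q$-plane obtained by projecting onto $W^\perp$ the part of $P$ orthogonal to $P\cap$(preimage of $W$); concretely, one uses a CS/principal-angle decomposition of $P$ relative to $W$. Equivalently, $\braket P\Psi=\braket{P}{f\wedge\cdot}$ acts as an interior product. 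This produces an amplitude $\alpha(P)$ times a $q$-particle overlap, with $\sum$-type normalization $\E\,[\ldots]$ inherited from $\E_P|\braket P\Psi|^2=1/D$.

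\textbf{Step 2: condition and use concavity.} Conditioning on the relative position of $P$ and $W$ (the principal angles), the remaining randomness is a Haar rotation inside $\U(W^\perp)$ acting on $P'$. Jensen's inequality in that inner average bounds $\E|\braket{P'}{\eta}|^{2r}\le(\E|\braket{P'}{\eta}|^2)^r=\binom{n+q}{q}^{-r}$, uniformly in $\eta$. This is exactly where arbitrary correlations in $\eta$ are tolerated. The problem is thereby reduced to the $r$-th moment of the Slater-core amplitude $|\alpha(P)|^2$, a determinant of a truncated Jacobi/MANOVA matrix. This quantity depends only on $(n,q)$ and not on $\eta$, so one may take $\eta$ to be a Slater determinant and obtain a Slater state $\Psi$.

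\textbf{Step 3: the Slater moment with a $q$-dimensional free block.} What remains is to show
\[
D^r\binom{n+q}{q}^{-r}\E|\alpha|^{2r}\le C_r\bigl((q+1)/n\bigr)^{\beta},\qquad \beta=r(1-r)/2.
\]
The squared amplitude $|\alpha|^2$ is a product of Jacobi-ensemble eigenvalues of cosine type, i.e.\ squared cosines of principal angles between $P$ and a fixed $(n-q)$-plane. Their joint law has an explicit Selberg-integral normalization, so the $r$-th moment is a ratio of Selberg integrals, $\prod_j\Gamma(\cdot+r)/\Gamma(\cdot)$. I would evaluate this ratio exactly and then take logarithms. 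Stirling's formula and the Gamma-ratio bounds $\Gamma(x+r)/\Gamma(x)\asymp x^r$ should turn the sum of logarithmic corrections into $-\beta\log(n/(q+1))+O_r(1)$, in analogy with the way the free law $F_r(n)\le C_rn^{-\beta}$ arises at $q=0$.

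\textbf{Main obstacle.} The delicate step is the uniform (in $q$ and $n$) Selberg/Barnes-$G$ asymptotics. One must check that the $\log$-correlated second-order term $\tfrac{r(1-r)}{2}\log$ depends on $n/(q+1)$ rather than on $n$ alone, with constants independent of $q$ up to $q\sim n$. For $q\ge n/8$ I would simply use Jensen, $\E X^r\le1$, and restrict the detailed asymptotics to $q\le n/8$, where the hypergeometric parameters stay in a compact ratio regime.
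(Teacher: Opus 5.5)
Your proposal is correct and matches the paper's proof. The paper contracts $\ket P$ against the core to write $|\braket{P}{\Psi}|^2=\Delta\,|\braket{\zeta}{\eta}|^2$, with $\zeta$ Haar on $\Gr(q,n+q)$ and independent of $\Delta$. It then applies Jensen plus Schur to get $\binom{n+q}{q}^{-r}$, notes that the $s=1$ value $\E\Delta=\binom{n+q}{q}/D$ of the Jacobi--Selberg product reduces the target to $\E\Delta^r/(\E\Delta)^r$, and uses $\log[\Gamma(\ell+r)/(\Gamma(\ell)\ell^r)]=-r(1-r)/(2\ell)+O_r(\ell^{-2})$; the resulting harmonic sums equal $\log(n/(q+1))+O(1)$ uniformly in $q$, so your separate treatment of $q\ge n/8$ is harmless but unnecessary.
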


\begin{proof}
Put $\Delta=\|\iota_fP\|^2$, the squared norm of contracting the
Pl\"ucker state vector $\ket P$ against the frozen core $\ket f$.
Conditional on $\Delta>0$, the
normalized residual is a Haar-random $q$-plane
$\zeta\in\Gr(q,n+q)$, independent of $\Delta$, and
\begin{equation}
  |\braket{P}{f\wedge\eta}|^2
  =\Delta|\braket{\zeta}{\eta}|^2.
\end{equation}
Jensen's inequality for $t\mapsto t^r$ and Schur's lemma on the
irreducible module $\wedgepow{q}\CC^{n+q}$ give
\begin{equation}
  \E_P|\braket{P}{\Psi}|^{2r}
  \le \E\Delta^r
  \left(\E_\zeta|\braket{\zeta}{\eta}|^2\right)^r
  =\frac{\E\Delta^r}{\binom{n+q}{q}^{\,r}}.
\end{equation}
The Jacobi--Selberg moment of the contraction is
\begin{equation}
  \E\Delta^s
  =\prod_{j=0}^{n-q-1}
  \frac{\Gamma(j+1+q+s)\Gamma(n+j+1+q)}
       {\Gamma(j+1+q)\Gamma(n+j+1+q+s)}.
  \label{eq:smselberg}
\end{equation}
At $s=1$, this telescopes to
$\E\Delta=\binom{n+q}{q}/D$. Therefore the left-hand side of
Eq.~\eqref{eq:smsparse} is at most
$\E\Delta^r/(\E\Delta)^r$. Using, uniformly for integers
$\ell\ge1$,
\begin{equation}
 \log\frac{\Gamma(\ell+r)}{\Gamma(\ell)\ell^r}
 =-\frac{r(1-r)}{2\ell}+O_r(\ell^{-2}),
\end{equation}
in Eq.~\eqref{eq:smselberg}, the $O_r(\ell^{-2})$ terms sum to
$O_r(1)$, whereas the first-order terms give
\begin{equation}
 \sum_{j=0}^{n-q-1}
 \left(\frac1{j+1+q}-\frac1{n+j+1+q}\right)
 =\log\frac{n}{q+1}+O(1),
\end{equation}
and the claim follows.
\end{proof}

At $q=0$, Eq.~\eqref{eq:smselberg} gives, for every fixed $r>0$, the
free moment
\begin{equation}
 \begin{aligned}
  \E_Pp_x^r
  &=\prod_{j=0}^{n-1}
    \frac{\Gamma(j+1+r)\Gamma(n+j+1)}
         {\Gamma(j+1)\Gamma(n+j+1+r)},\\
  F_r(n)
  &:=D^r\E_Pp_x^r
    =\Theta_r(n^{-\beta}),
  \qquad \beta:=\frac{r(1-r)}2,\quad n\to\infty .
 \end{aligned}
  \label{eq:smfreelaw}
\end{equation}

\smallskip
\paragraph*{Natural-orbital reduction.}
Use the natural-orbital split $F\oplus G$, sector weights $w_j$,
and mean excitation number $\mu$ defined in End Matter,
Sec.~\ref{em:nogo}.
If $\theta_1\ge\cdots\ge\theta_{2n}$ are the natural occupations,
then either $\theta_{n+1}\le1/2$ or $\theta_n>1/2$.  Applying
$t(1-t)\le1-t$ on $F$, $t(1-t)\le t$ on $G$, and
$\min(t,1-t)\le2t(1-t)$ to the lower or upper half, respectively, gives
\begin{equation}
  \begin{aligned}
  \mu:=\sum_jjw_j
  &=\sum_{i>n}\theta_i
   =\sum_{i\le n}(1-\theta_i),\\
  \nu&\le
  \sum_{i\le n}(1-\theta_i)+\sum_{i>n}\theta_i=2\mu,\\
  \mu&\le\sum_i\min(\theta_i,1-\theta_i)
  \le2\sum_i\theta_i(1-\theta_i)=2\nu .
  \end{aligned}
  \label{eq:smmunu}
\end{equation}

Let $y_1,\ldots,y_n$ be the squared principal sines of $P$ from
$F$. Write $Y=(y_1,\ldots,y_n)$ and
$\Delta(Y)=\prod_{a<b}(y_a-y_b)$ for the Vandermonde determinant.
Their flat Jacobi unitary law is $d\mathbb P_0\propto\Delta(Y)^2dY$. With
$x_i=y_i/(1-y_i)$, the normalized sector likelihoods obey
\begin{equation}
 \begin{aligned}
 Z_j(Y)
 &:=\frac{D}{\binom nj^2}
 \prod_i(1-y_i)e_j(x_1,\ldots,x_n),\\
 \E_0Z_j&=1,
 &D\,p_x(\Tw_H\rho)&=\sum_{j=0}^n w_jZ_j .
 \end{aligned}
  \label{eq:smsectorlikelihood}
\end{equation}
Here $e_j$ is the $j$th elementary symmetric polynomial.
\begin{lemma}[Dyadic Jacobi bound]\label{smlem:dyadic}
For $0<r<1$ and $1\le L\le n/8$,
\begin{equation}
 \E_0\left[\sup_{L\le j<2L}Z_j^r\right]
 \le C_r\left(\frac Ln\right)^{\beta}.
 \label{eq:smdyadic}
\end{equation}
Here $\beta=r(1-r)/2$.
\end{lemma}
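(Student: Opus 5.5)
The plan is to control the supremum over the dyadic window through an explicit integral representation of $Z_j$ in terms of the Jacobi eigenvalues. Then we compare each $Z_j^r$ to the free Slater law at an effective particle number $n/L$. The basic identity is the generating function
\begin{equation}
 \prod_i(1-y_i)\prod_i(1+t x_i)=\prod_i\bigl(1-(1-t)y_i\bigr)=\sum_j t^j\prod_i(1-y_i)e_j(x),
\end{equation}
so $Z_j\binom nj^2/D$ is the $j$th Taylor coefficient of $\Phi(t):=\prod_i(1-(1-t)y_i)$. Since all coefficients are nonnegative, for every $t>0$ and every $j$ in the window,
\begin{equation}
 \binom nj^2 Z_j/D\le t^{-j}\Phi(t)\le t^{-L}\max(1,t^{-L})\,\Phi(t).
\end{equation}
This removes the supremum at the cost of a single $t$-dependent quantity. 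We will choose $t=t_L$ of order $L/n$ (or its reciprocal), so that $\Phi(t)$ is dominated by the eigenvalues $y_i$ near $1$.

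The key steps are as follows. (i) Bound $\sup_{L\le j<2L}Z_j^r\le C^{rL}\,\bigl(D\,t^{-L}\Phi(t)/\binom nL^2\bigr)^r$, using $\binom nj\ge\binom nL$ for $L\le j<2L\le n/4$ together with the ratio control above. (ii) Compute $\E_0\Phi(t)^r$ exactly via a Selberg/Aomoto-type integral. We have $\Phi(t)=\det(1-(1-t)Y)$, and fractional powers of a characteristic polynomial under the Jacobi unitary ensemble have product-of-Gamma closed forms at the special points, and otherwise via the Kaneko hypergeometric ${}_2F_1^{(1)}$. Rather than evaluate $\E_0\Phi^r$ directly, it is cleaner to interpolate the $j$th sector as a frozen-core-like object. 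The sector $\mathcal E_j$ is irreducible under $H$, so $Z_j$ equals the likelihood of a random overlap with $\wedgepow{n-j}F\otimes\wedgepow jG$, and Proposition~\ref{smprop:sparse} with $q\simeq j$ gives $\E_0 Z_j^r\le C_r(j/n)^\beta$ for each fixed $j$ via the Jacobi--Selberg moment. (iii) Upgrade this fixed-$j$ bound to the window supremum. Log-concavity of $j\mapsto e_j(x)/\binom nj$ (Newton's inequalities) shows that $Z_j$ for $L\le j<2L$ is controlled by $Z_L$ and $Z_{2L}$ up to a factor $\bigl(\binom nL^2/\binom nj^2\bigr)$-type corrections. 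More precisely, Newton gives $Z_j\le C\max(Z_L,Z_{2L})\cdot$ a bounded factor after normalization, since the normalized sequence $e_j/\binom nj$ is log-concave and the normalizations $\binom nj^2/D$ are handled by the exact means $\E_0Z_j=1$. Then $\E\sup Z_j^r\le C(\E Z_L^r+\E Z_{2L}^r)\le C_r(L/n)^\beta$.

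The main obstacle is step (iii). Newton's inequalities control the interior of the window only relative to the endpoints, and the normalizing constants $\binom nj^2$ are not log-linear in $j$. The result is a multiplicative error of the form $\exp(O(L^2/n))$, which is bounded precisely because $L\le n/8$ keeps $j/n$ bounded away from $1/2$. If this bookkeeping fails, the fallback is a chaining/union bound: use step (ii) at $r'=(1+r)/2$ and Hölder, which gives $\E\sup_j Z_j^r\le\bigl(\sum_{L\le j<2L}\E Z_j^{r'}\bigr)^{r/r'}$. The factor $L^{r/r'}$ would then have to be absorbed by the faster decay $(j/n)^{\beta'}$ with $\beta'$ adjusted. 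This only works if the Selberg asymptotic actually yields decay in $L$ at fixed $L/n$, so I would first verify the Newton route.
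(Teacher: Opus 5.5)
Both load-bearing steps of your plan have genuine gaps.

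\textbf{The fixed-$j$ estimate.} The bound $\E_0Z_j^r\le C_r(j/n)^\beta$ in (ii) does not follow from Proposition~\ref{smprop:sparse}. That proposition needs a \emph{fixed} $(n-q)$-particle Slater core $f$. Here $Z_j$ is the likelihood of the maximally mixed state $Q_j/d_j$ on $\wedgepow{n-j}F\otimes\wedgepow jG$. For $j\ge1$ this space has no common occupied orbital, since every $(n-j)$-dimensional subspace of $F$ occurs. You could write $Q_j/d_j$ as an $H$-average of Slater or frozen-core states, but applying the proposition would then require pulling that average outside the concave map $t\mapsto t^r$. Jensen goes the other way and gives only the lower bound $\E_0Z_j^r\ge F_r(n)$.

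This per-$j$ bound is the heart of the lemma, and the paper builds it from three factors:
\begin{itemize}
\item Tilting by $\prod_i(1-y_i)^r$ splits off $F_r(n)\le C_rn^{-\beta}$.
\item Writing $h_j=e_j(x)/\binom nj^2$ through the Poisson--binomial law $\pi_{s_j}$ at the $j$-dependent saddle $s_j=j/(n-j)$, and keeping the local bound $\sup_k\pi_s(k)\le C(1+V_s)^{-1/2}$, gives $h_j^r\le C_rj^rB_{s_j}$ with $B_s=\overline A_s^r(1+V_s)^{-r/2}$.
\item Ward identities in the tilted orthogonal-polynomial ensemble $\mathbb Q_s$ give $\E_r\overline A_s^r\le C_rJ(s)^{-r^2/2}$, and bulk spreading gives $\E_s(1+V_s)^{-r/2}\le C_rJ(s)^{-r/2}$.
\end{itemize}
The exponent $\beta=r-r/2-r^2/2$ is the sum of the contributions $j^r$, $J^{-r/2}$ and $J^{-r^2/2}$. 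A Chernoff/Selberg route through $\E_0\Phi(t)^r$ discards the local factor. Even with $t$ optimized separately for each $j$, its natural output is $n^{-\beta}L^{r-r^2/2}$, a factor $L^{r/2}$ too large.

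\textbf{The supremum over the window.} Your generating function is the right object (it is the paper's with $t=s^2$). However, none of your devices controls the supremum at constant cost.
\begin{itemize}
\item Step (i): since $Dt^{-L}\Phi(t)/\binom nL^2\ge Z_L$ for every $t>0$, your bound has expectation at least $C^{rL}F_r(n)\ge c_rC^{rL}n^{-\beta}$, far above $(L/n)^\beta$ for large $L$. The loss is intrinsic to using one $t$ for the whole window. The tilted Poisson--binomial variance never exceeds its mean, so the tilt matching mean $j$ moves by at least a factor $2$ across $[L,2L)$. A fixed $t$ is therefore off-saddle by $\Omega(L)$ in the exponent at one end.
\item Step (iii): Newton's inequalities point the wrong way. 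Log-concavity of $e_j/\binom nj$ bounds interior terms \emph{from below} by geometric interpolation of the endpoints. Since $\binom nj^2Z_j/D=\prod_i(1-y_i)e_j(x)$ is the probability that a Poisson--binomial variable with parameters $y_i$ equals $j$, putting $k\approx3L/2$ of the $y_i$ near $1$ and the rest near $0$ makes $Z_k/\max(Z_L,Z_{2L})$ arbitrarily large.
\item Your anticipated $\exp(O(L^2/n))$ correction is unbounded for $L\asymp n$.
\item The H\"older fallback loses $L^{r/r'}$, which is fatal at $L\asymp n$, where the target is $O(1)$.
\end{itemize}

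\textbf{The missing idea.} Replace the discrete supremum over $j$ by a supremum over the continuous saddle parameter $s\in[s_L,s_{2L}]$, an interval of bounded length in $\log s$. The random $C^1$ function $B_s$ satisfies $|\partial_{\log s}B_s|\le C_rB_s(1+|M_s-J(s)|)$. After the change of measure $\E_r[\overline A_s^rW]=H(s)\E_sW$, the bounds $\Var_sM_s\le1$ and $0\le\delta_s\le r+J(s)^{-1}$ give this derivative the same mean bound $C_rJ(s)^{-(r+r^2)/2}$ as $B_s$. The fundamental theorem of calculus then yields $\E_r\sup_sB_s\le C_rL^{-(r+r^2)/2}$.
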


\begin{proof}
All constants in this proof depend only on fixed $r$, uniformly in
$n,L,j$, and $s$ over the stated ranges. The exponent
$\beta=r-r/2-r^2/2$ combines the prefactor $j^r$, the
bulk-spreading factor $J^{-r/2}$, and the tilted partition-function factor
$J^{-r^2/2}$.

\medskip\noindent\emph{Free factor and fugacity.}
Tilt $\mathbb P_0$ by $\prod_i(1-y_i)^r$, and write $\mathbb P_r$
and $\E_r$ for the resulting law and expectation. Since
$\prod_i(1-y_i)$ is the squared overlap with the Slater determinant
filling $F$, the tilt normalization is $F_r(n)/D^r$. With
$h_j=e_j(x_1,\ldots,x_n)/\binom nj^2$, the tilt identity and
Eq.~\eqref{eq:smfreelaw} give
\begin{equation}
 \begin{aligned}
 \E_0\left[\sup_{L\le j<2L}Z_j^r\right]
 &=F_r(n)\,\E_r\left[\sup_{L\le j<2L}h_j^r\right],\\
 F_r(n)&\le C_rn^{-\beta}.
 \end{aligned}
 \label{eq:smflattilt}
\end{equation}
For fixed $Y$, $e_j(x_1,\ldots,x_n)$ is the coefficient of $s^{2j}$
in $\prod_i(1+s^2x_i)$. Define
\begin{equation}
 \begin{aligned}
 p_s(y)&:=\frac{s^2y}{1-y+s^2y},
 &J(s)&:=\frac{ns}{1+s},\\
 M_s&:=\sum_i p_s(y_i),
 &V_s&:=\sum_i p_s(y_i)\bigl(1-p_s(y_i)\bigr),\\
 \overline A_s&:=
 \frac{\prod_i(1+s^2x_i)}{(1+s)^{2n}}.
 \end{aligned}
\end{equation}
Let $\pi_s$ be the Poisson--binomial law with parameters $p_s(y_i)$.
Set $s_j=j/(n-j)$, so that $J(s_j)=j$. Coefficient extraction,
Stirling's bound, and Fourier inversion give, with an absolute constant
uniform in $s$, the parameters $Y$, and $j\le n/4$,
\begin{equation}
 \begin{aligned}
 h_j
 &=\frac{s_j^{-2j}(1+s_j)^{2n}}{\binom nj^2}\,
   \overline A_{s_j}\pi_{s_j}(j),
 &\qquad
 \frac{s_j^{-2j}(1+s_j)^{2n}}{\binom nj^2}
 &\le Cj,\\
 \sup_k\pi_s(k)
 &\le\frac{C}{\sqrt{1+V_s}},
 &\qquad
 B_s&:=\overline A_s^r(1+V_s)^{-r/2},\\
 h_j^r&\le C_rj^rB_{s_j}.
 \end{aligned}
 \label{eq:smhjB}
\end{equation}

\medskip\noindent\emph{Tilted Jacobi ensemble.}
Multiplication by $\overline A_s^r$ changes $\mathbb P_r$ to the
measure $\mathbb Q_s$ defined by
\begin{equation}
 \begin{aligned}
 d\mathbb Q_s
 &=
 \frac{\Delta(Y)^2\prod_i(1-y_i+s^2y_i)^r\,dY}
      {\mathcal Z_n(s)},
 &\qquad
 \mathcal Z_n(s)
 &:=
 \int_{[0,1]^n}\Delta(Y)^2
 \prod_i(1-y_i+s^2y_i)^r\,dY,\\
 H(s)
 &:=\E_r\overline A_s^r
 =(1+s)^{-2rn}\frac{\mathcal Z_n(s)}{\mathcal Z_n(0)}.
 \end{aligned}
\end{equation}
Write $\E_s$ for expectation under $\mathbb Q_s$. Then
$\E_r[\overline A_s^r\Phi]=H(s)\E_s\Phi$.
Put $d_s(z):=s^2+(1-s^2)z$. The inverse of $z=p_s(y)$ satisfies
\begin{equation}
 y=\frac{z}{d_s(z)},\qquad
 dy=\frac{s^2}{d_s(z)^2}\,dz,\qquad
 1-y+s^2y=\frac{s^2}{d_s(z)},
\end{equation}
and $y_i-y_j=s^2(z_i-z_j)/[d_s(z_i)d_s(z_j)]$. Consequently,
\begin{equation}
 d\mathbb Q_s(z)\propto
 \Delta(z)^2\prod_i[d_s(z_i)]^{-(2n+r)}dz_i,
 \qquad M_s=\sum_i z_i,\quad V_s=\sum_i z_i(1-z_i).
 \label{eq:smmobiusope}
\end{equation}
This is a projection orthogonal-polynomial ensemble. Let
$\phi_k$ be the degree-$k$ orthonormal polynomial in
$L^2([0,1],d_s(z)^{-(2n+r)}dz)$, and let $K_s$ be the associated
rank-$n$ projection kernel. The linear-statistic variance formula and the
three-term recurrence then give~\cite{Forrester2010LogGases}
\begin{equation}
 \Var_sM_s
 =\lVert(\id-K_s)zK_s\rVert_{\rm HS}^2
 =|\langle\phi_n,z\phi_{n-1}\rangle|^2\le1.
 \label{eq:smrankonevar}
\end{equation}
\medskip\noindent\emph{Ward estimates.}
The loop equations generated by $z\mapsto z+t\,v(z)$ control the gap
between the tilted mean and its reference value. For $0<s<1$, write
\begin{equation}
 \delta_s:=J(s)-\E_sM_s,
 \qquad
 \chi_s:=s^{-2}-1.
\end{equation}
The boundary term vanishes because $v(0)=v(1)=0$, while the
Vandermonde factor makes the symmetrized pair term regular at
collisions. Integration by parts in Eq.~\eqref{eq:smmobiusope} with
$v(z)=z(1-z)[s^2+(1-s^2)z]$, followed by division by $s^2$ and subtraction of
$\chi_sJ^2+2nJ=n^2$, gives
\begin{equation}
 \begin{aligned}
 (1-s^2)\E_s[M_s^2]+2ns^2\E_sM_s
 +r(1-s^2)\E_sV_s
 &=s^2n^2,\\
 \delta_s\bigl[2n+\chi_s(J(s)+\E_sM_s)\bigr]
 &=\chi_s\bigl(\Var_sM_s+r\E_sV_s\bigr).
 \end{aligned}
 \label{eq:smwardgap}
\end{equation}
The second line implies $\E_sM_s\le J(s)$. Together with
Eq.~\eqref{eq:smrankonevar} and $V_s\le M_s$, it also gives
\begin{equation}
 0\le\delta_s\le r+J(s)^{-1}.
 \label{eq:smdeltaupper}
\end{equation}

Let primes denote differentiation with respect to $\log s$, with
the original variables $Y$ held fixed in the samplewise derivatives.
Differentiation under the integral is justified locally for $s>0$,
since $1-y+s^2y$ is bounded away from zero on compact $s$-intervals. Then
\begin{equation}
 \begin{aligned}
 M_s'&=2V_s,
 &J'(s)&=\frac{J(s)}{1+s},\\
 \left(\log\frac{d\mathbb Q_s}{dY}\right)'
 &=2r\bigl(M_s-\E_sM_s\bigr),
 &(\log H)'&=-2r\delta_s,\\
 \bigl(\E_sM_s\bigr)'
 &=2\E_sV_s+2r\Var_sM_s\\
 &\le\frac2r\bigl(\Var_sM_s+r\E_sV_s\bigr)\\
 &=\frac2r\,\chi_s^{-1}\delta_s
   \bigl[2n+\chi_s(J(s)+\E_sM_s)\bigr]\\
 &\le\frac{4J(s)}{r(1-s)}\,\delta_s.
 \end{aligned}
\end{equation}
Here the first inequality uses $0<r<1$, the middle equality uses the
second line of Eq.~\eqref{eq:smwardgap}, and the last uses
$\E_sM_s\le J(s)$ together with
\begin{equation}
 2n+\chi_s(J(s)+\E_sM_s)\le\frac{2n}{s},
 \qquad
 \chi_s^{-1}\frac{2n}{s}=\frac{2J(s)}{1-s}.
\end{equation}
Since $\delta_s'=J'(s)-(\E_sM_s)'$,
\begin{equation}
 \delta_s'
 \ge\frac{J(s)}{1+s}
 -\frac{4J(s)}{r(1-s)}\,\delta_s.
 \label{eq:smgapODE}
\end{equation}
It follows that
\begin{equation}
 \delta_s\ge
 \frac{r(1-s)}{4(1+s)}-\frac1{J(s)},
 \qquad 0<s\le\frac13.
 \label{eq:smdeltalower}
\end{equation}
Indeed, the difference between the two sides is positive for small
$s$, since the right-hand side tends to $-\infty$. At a first contact,
nonnegativity of $\delta_s$ gives
$J(s)^{-1}\le r(1-s)/[4(1+s)]$. Equation~\eqref{eq:smgapODE} then gives
$\delta_s'\ge4/[r(1-s)]>r/4$. At the same contact point, the $\log s$
derivative of the right-hand side of Eq.~\eqref{eq:smdeltalower} obeys
\begin{equation}
 -\frac{rs}{2(1+s)^2}+\frac1{J(s)(1+s)}\le\frac r4.
\end{equation}
This contradiction establishes Eq.~\eqref{eq:smdeltalower}.

Since $H(0)=1$ and $(\log H)'=-2r\delta_s\le0$, we have
$H(1/n)\le1$. Using this bound, Eq.~\eqref{eq:smdeltalower}, and
\begin{equation}
 \begin{aligned}
 \int_{1/n}^s\frac{1-\sigma}{1+\sigma}\frac{d\sigma}{\sigma}
 &=\log(ns)-2\log\frac{1+s}{1+1/n},\\
 \int_{1/n}^s\frac{d\sigma}{J(\sigma)\sigma}&=O(1),
 \end{aligned}
\end{equation}
integration of $(\log H)'=-2r\delta_s$ gives
\begin{equation}
 H(s)=\E_r\overline A_s^r\le C_rJ(s)^{-r^2/2},
 \qquad 1\le J(s)\le n/4.
 \label{eq:smsharponepoint}
\end{equation}
\medskip\noindent\emph{Bulk spreading.}
The
second identity in Eq.~\eqref{eq:smwardgap} gives
\begin{equation}
 r\E_sV_s
 =\delta_s\chi_s^{-1}
 \bigl[2n+\chi_s(J(s)+\E_sM_s)\bigr]
 -\Var_sM_s
 \ge \delta_sJ(s)-1,
\end{equation}
where we used
$\chi_s^{-1}[2n+\chi_s(J(s)+\E_sM_s)]\ge J(s)$
and Eq.~\eqref{eq:smrankonevar}. Since $J(s)\le n/4$ implies
$s\le1/3$, Eqs.~\eqref{eq:smdeltalower} and $J(s)\ge32/r$ yield
\begin{equation}
 \delta_s\ge\frac r8-\frac1{J(s)}\ge\frac{3r}{32},
 \qquad
 \E_sV_s\ge\frac{3J(s)}{32}-\frac1r\ge\frac{J(s)}{16}.
\end{equation}
With $h(z)=1-e^{-z(1-z)}\ge cz(1-z)$, the determinantal Laplace
functional gives~\cite{Forrester2010LogGases}
\begin{equation}
 \E_se^{-V_s}
 =\det(\id-\sqrt hK_s\sqrt h)
 \le e^{-\tr(hK_s)}
 \le\exp[-c\,\E_sV_s]
 \le e^{-c_rJ(s)},
 \qquad J(s)\ge32/r.
\end{equation}
Markov's inequality gives
$\mathbb Q_s(V_s\le c_0J(s))\le e^{-c_rJ(s)}$ for some $c_0>0$
and $J(s)\ge32/r$. Splitting on this event, and enlarging the constant
over the remaining compact range $1\le J(s)<32/r$, gives
\begin{equation}
 \E_s(1+V_s)^{-r/2}\le C_rJ(s)^{-r/2},
 \qquad
 \E_s(1+V_s)^{-r}\le C_rJ(s)^{-r},
 \qquad 1\le J(s)\le n/4.
 \label{eq:smnegmom}
\end{equation}

\medskip\noindent\emph{Dyadic maximum.}
The interval $[\log s_L,\log s_{2L}]$ has length bounded above and
below by absolute constants, and $L\le J(s)\le2L$ throughout. Along it,
\begin{equation}
 \begin{aligned}
 \partial_{\log s}\log\overline A_s
 &=2(M_s-J(s)),
 &\qquad
 \left|\partial_{\log s}\log(1+V_s)\right|
 &\le2,\\
 \left|\partial_{\log s}B_s\right|
 &\le C_rB_s\bigl(1+|M_s-J(s)|\bigr).
 \end{aligned}
\end{equation}
Since $M_s-J(s)=(M_s-\E_sM_s)-\delta_s$, Cauchy--Schwarz and
Eqs.~\eqref{eq:smdeltaupper}, \eqref{eq:smrankonevar}, and
\eqref{eq:smnegmom} give
\begin{equation}
 \begin{aligned}
 &\E_s\!\left[
 (1+V_s)^{-r/2}\bigl(1+|M_s-J(s)|\bigr)\right]\\
 &\quad\le
 (1+\delta_s)\E_s(1+V_s)^{-r/2}
 +(\Var_sM_s)^{1/2}
  \bigl[\E_s(1+V_s)^{-r}\bigr]^{1/2}\\
 &\quad\le C_rJ(s)^{-r/2}.
 \end{aligned}
 \label{eq:smderivativecontrol}
\end{equation}
The change-of-measure identity
$\E_r[\overline A_s^r\Phi]=H(s)\E_s\Phi$, together with
Eqs.~\eqref{eq:smsharponepoint} and
\eqref{eq:smderivativecontrol}, gives
\begin{equation}
 \begin{aligned}
 \E_rB_s
 &\le C_rJ(s)^{-(r+r^2)/2},
 &\qquad
 \E_r\left|\partial_{\log s}B_s\right|
 &\le C_rJ(s)^{-(r+r^2)/2},\\
 \E_r\sup_{s_L\le s\le s_{2L}}B_s
 &\le C_rL^{-(r+r^2)/2},
 \end{aligned}
\end{equation}
where the last line follows from the samplewise fundamental theorem of
calculus and Tonelli, since $B_s$ is almost surely $C^1$ on this interval
and $\mathbb P_r$ is independent of $s$.
Equation~\eqref{eq:smhjB} therefore yields
\begin{equation}
 \E_r\sup_{L\le j<2L}h_j^r
 \le C_rL^r
 \E_r\sup_{s_L\le s\le s_{2L}}B_s
 \le C_rL^{\,r-(r+r^2)/2}
 =C_rL^{\beta}.
\end{equation}
Substituting this into Eq.~\eqref{eq:smflattilt} gives
Eq.~\eqref{eq:smdyadic}.
\end{proof}

Together with the Slater law~\eqref{eq:smfreelaw} and the
occupation bound~\eqref{eq:smmunu}, Lemma~\ref{smlem:dyadic}
supplies the estimates used in End Matter, Sec.~\ref{em:nogo}.
The summation in Eq.~\eqref{eq:emnogoreduction} then completes
the proof of Theorem~\ref{thm:noroute}.

\subsection{The Mellin asymptotic conjecture}

The comparison law $EL_\gamma$ in Eq.~\eqref{eq:smlognormal} has
Mellin transform
$\Gamma(1+r)\gamma^{r(1-r)/2}$.  This suggests the following
fractional-moment asymptotic.

\begin{conjecture}[Lognormal Mellin asymptotic]\label{smconj:mellin}
For every fixed $0<r<1$ and every four-mode product profile,
\begin{equation}
  \E_V[X_x^r]
  =\Gamma(1+r)
   \left(\frac{\nu}{N}\right)^{r(1-r)/2}[1+o(1)]
   \qquad (\nu\to\infty).
  \label{eq:smmellin}
\end{equation}
\end{conjecture}

Theorem~\ref{thm:noroute} proves a profile-independent upper bound with
the same exponent, but not the equality, since the natural-orbital twirl
discards coherent interference. Direct amplitudes for uniform profiles
at $N=12,14,\ldots,22$ are consistent with both the exponent and the prefactor
(Fig.~\ref{smfig:mellin}). The maximally concentrated, free-padded profiles are treated exactly below.

\begin{figure}[t]
\centering
\includegraphics[width=1\linewidth]{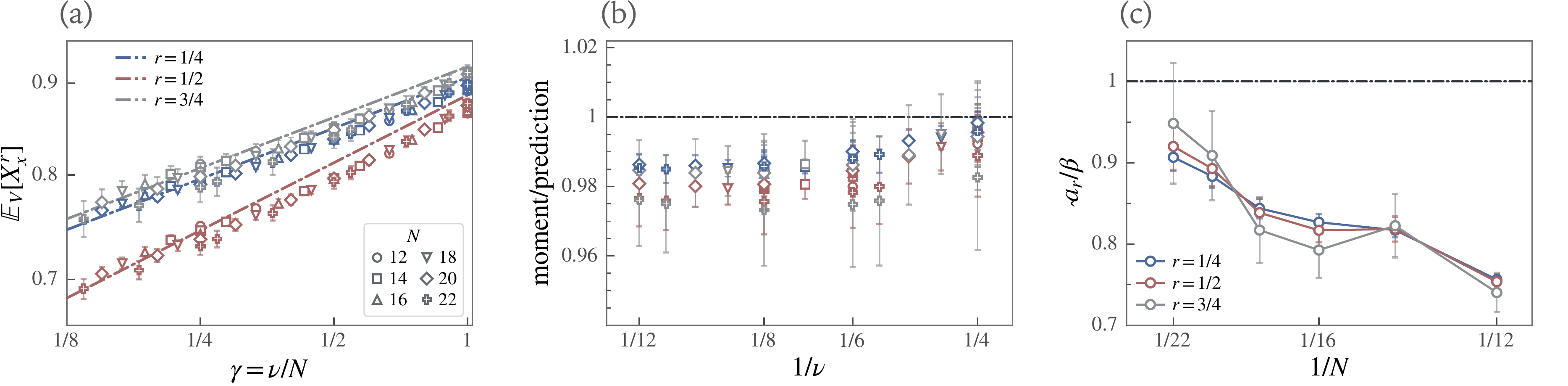}
\caption{\textbf{Finite-size tests of fractional-moment scaling.}
Direct amplitude evaluations use $6.6\times10^5$ independent Haar
draws at $N=12,14,\ldots,22$.
(a) Fractional moments for uniform profiles at
$r=1/4,1/2,3/4$, compared with
$\Gamma(1+r)\gamma^{r(1-r)/2}$ (dash-dotted).
(b) Ratios to this prediction over the fitting window
$4\le\nu\le0.6N$. The horizontal guide is one.
(c) Effective log--log slopes, divided by
$\beta=r(1-r)/2$, from the same window. Thin bars in (a),(b) are
pointwise $95\%$ batch intervals. Bars in (c) show Gaussian-equivalent
scales obtained from pointwise $95\%$ joint batch-bootstrap intervals;
fitting and resampling procedures are described in Sec.~\ref{sm:numerics}.
The finite-size trend is consistent with
Conjecture~\ref{smconj:mellin}. The maximally concentrated profiles are fixed
analytically by Corollary~\ref{smcor:padding}.}
\label{smfig:mellin}
\end{figure}

\subsection{Exact free padding}
\label{sm:padding}

Adjoining frozen occupied and empty modes separates the fractional moments into an active-state factor and an exactly computable Slater factor.
For a normalized state $\Psi\in\wedgepow{2N}\CC^{4N}$, define
\begin{equation}
 \mathsf M_r^{(N)}(\Psi)
 :=D_N^r\E_{P\sim\Gr(2N,4N)}
 |\langle P|\Psi\rangle|^{2r}
 =\E_V[X_x^r],
 \label{eq:smpaddef}
\end{equation}
so $\mathsf M_r^{(N)}(\text{Slater})=F_r(2N)$ by
Eq.~\eqref{eq:smfreelaw}. For
$\eta\in\wedgepow{2K}\CC^{4K}$, let $f_O\wedge\eta$ denote the state 
obtained by adjoining $2(N-K)$ frozen occupied and $2(N-K)$ frozen
empty modes, equivalently $N-K$ free ($m_i=0$) blocks.

\begin{theorem}[Exact free-padding factorization]\label{smthm:padding}
For every $r>0$, $N\ge K\ge1$, and a normalized state 
$\eta\in\wedgepow{2K}\CC^{4K}$, let $f_O$ be the Slater determinant
filling the $2(N-K)$ frozen occupied modes. Then
\begin{equation}
 \mathsf M_r^{(N)}(f_O\wedge\eta)
 =\frac{F_r(2N)}{F_r(2K)}\mathsf M_r^{(K)}(\eta).
 \label{eq:smpadratio}
\end{equation}
\end{theorem}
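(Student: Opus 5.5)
The plan is to factor the overlap $|\braket{P}{f_O\wedge\eta}|^2$ into an $\eta$-independent radial part and an independent Haar-random overlap on the active $4K$ modes. The $\eta$-independent part is then calibrated by the Slater case, as in the frozen-core bound of Proposition~\ref{smprop:sparse}.

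\emph{Geometry.} Write $\CC^{4N}=O\oplus E\oplus W$, with $O$ the $2(N-K)$ frozen occupied modes, $E$ the $2(N-K)$ frozen empty modes, and $W$ the $4K$ active modes carrying $\eta$. Let $P\in\Gr(2N,4N)$ be Haar.

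\emph{Step 1: contract against the core.} As in the proof of Proposition~\ref{smprop:sparse}, set $\Delta=\|\iota_{f_O}P\|^2$. For almost every $P$, $\Delta>0$, and the normalized residual is a $2K$-plane $\zeta\subset E\oplus W$. This gives
\[
\braket{P}{f_O\wedge\eta}=\pm\sqrt{\Delta}\,\braket{\zeta}{\eta}.
\]

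\emph{Step 2: project onto $W$.} Because $\eta$ is supported on $\wedgepow{2K}W$, only the $W$-component of $\zeta$ matters. Pick an orthonormal basis of $\zeta$ and let $A$ be the $4K\times 2K$ matrix of its projection onto $W$. Almost surely $A$ has full rank, so the polar decomposition $A=QS$ is well defined, with $Q$ an isometry and $S\ge0$. Then
\[
\braket{\zeta}{\eta}=\det S\,\braket{\operatorname{ran}Q}{\eta}
\]
up to a phase. Putting the two steps together,
\[
|\braket{P}{f_O\wedge\eta}|^2=\Theta\,|\braket{Q}{\eta}|^2,\qquad \Theta:=\Delta\,|\det S|^2 .
\]
Fermionic reordering signs and phases disappear under the modulus.

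\emph{Step 3: independence.} The subgroup $\U(W)\subset\U(4N)$ acts trivially on $O$ and $E$, so it fixes $f_O$, $\Delta$ and $S$, and maps $\operatorname{ran}Q\mapsto g\operatorname{ran}Q$. Haar invariance of $P$ under $\U(W)$ then implies that, conditional on $\Theta$, the plane $\operatorname{ran}Q$ is $\U(W)$-invariant in law. Hence it is Haar on $\Gr(2K,4K)$ and independent of $\Theta$. I would formalize this by noting that the joint law of $(\Theta,\operatorname{ran}Q)$ is invariant under $g\in\U(W)$ acting on the second coordinate only, and then disintegrating. This independence step is the main point needing care; the rest is bookkeeping. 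It follows that
\[
\E_P|\braket{P}{f_O\wedge\eta}|^{2r}=\E\Theta^r\cdot\E_{Q}|\braket{Q}{\eta}|^{2r}
=\E\Theta^r\, D_K^{-r}\,\mathsf M_r^{(K)}(\eta),
\]
by the definition \eqref{eq:smpaddef}.

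\emph{Step 4: calibration.} The factor $\E\Theta^r$ does not depend on $\eta$. Take $\eta$ to be a Slater determinant; then $f_O\wedge\eta$ is a Slater determinant on $4N$ modes. Using $\mathsf M_r^{(N)}(\mathrm{Slater})=F_r(2N)$ and $\mathsf M_r^{(K)}(\mathrm{Slater})=F_r(2K)$ from Eq.~\eqref{eq:smfreelaw}, Step 3 gives
\[
F_r(2N)\,D_N^{-r}=\E\Theta^r\,D_K^{-r}\,F_r(2K).
\]
Solving for $\E\Theta^r$ and substituting back, then multiplying by $D_N^r$, yields Eq.~\eqref{eq:smpadratio}.

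The argument needs $F_r(2K)>0$, which holds because $F_r$ is a product of positive Gamma-function ratios. It is valid for every $r>0$, since no concavity was used.
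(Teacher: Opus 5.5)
Your proposal is correct and takes essentially the same approach as the paper. Your contraction against $f_O$ followed by the polar decomposition of the $W$-projection is an explicit form of the paper's decomposition $\operatorname{proj}_{\wedgepow{2K}A}(\iota_{f_O}p(P))=\varrho(P)\ket{Q(P)}$ with $\Theta=\varrho^2$; from there both arguments use invariance under unitaries acting only on the active modes to make $Q$ Haar and independent of the radial factor, fix $\E\Theta^r$ by the Slater case, and take the ratio.
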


\begin{proof}
Write $\CC^{4N}=O\oplus A\oplus E$, with
$\dim O=\dim E=2(N-K)$ and $\dim A=4K$.  If $p(P)$ is a normalized
Pl\"ucker state  representing $P$, define
\begin{equation}
  \operatorname{proj}_{\wedgepow{2K}A}
  \!\left(\iota_{f_O}p(P)\right)
  =\varrho(P)\ket{Q(P)},\qquad Q(P)\in\Gr(2K,A).
\end{equation}
Exterior contraction gives
\begin{equation}
  |\langle P|f_O\wedge\eta\rangle|^2
  =\varrho(P)^2|\langle Q(P)|\eta\rangle|^2 .
\end{equation}
Under $\id_O\oplus u\oplus\id_E$, the radius $\varrho$ is fixed and
$Q\mapsto uQ$.  Haar invariance makes $Q$ Haar on
$\Gr(2K,A)$ and independent of $\varrho$.  Taking an arbitrary $\eta$ and
then a Slater determinant gives, respectively,
\begin{equation}
  \begin{aligned}
  \frac{\mathsf M_r^{(N)}(f_O\wedge\eta)}{D_N^r}
  &=\E\varrho^{2r}\,
    \frac{\mathsf M_r^{(K)}(\eta)}{D_K^r},\\
  \frac{F_r(2N)}{D_N^r}
  &=\E\varrho^{2r}\,\frac{F_r(2K)}{D_K^r}.
  \end{aligned}
\end{equation}
The ratio of these two identities is Eq.~\eqref{eq:smpadratio}.
\end{proof}

The Gamma product in Eq.~\eqref{eq:smfreelaw}, at $n=2N$, and the
standard Barnes-$G$ expansion give
\begin{equation}
 \begin{aligned}
 F_r(2N)
 &=\frac{N^{-\beta}}{\GB(1+r)}
  [1+O_r(N^{-1})],\\
 \frac{F_r(2N)}{F_r(2K)}
 &=\left(\frac KN\right)^{\beta}
  [1+O_r(K^{-1})]
 \end{aligned}
 \label{eq:smpadfreeasymp}
\end{equation}
for fixed $0<r<1$ and $N\ge K\to\infty$.

\begin{corollary}[Concentrated Mellin law]\label{smcor:padding}
Let $K=\lceil\nu\rceil\to\infty$, take
$\lfloor\nu\rfloor$ blocks with $m_i=1$ and at most one partial block,
and pad them by $N-K$ Slater blocks. Then, for $0<r<1$,
\begin{equation}
 \E_V[X_x^r]
 =\Gamma(1+r)
  \left(\frac{\nu}{N}\right)^{r(1-r)/2}[1+o(1)].
 \label{eq:smpadlaw}
\end{equation}
\end{corollary}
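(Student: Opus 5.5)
The plan is to reduce the corollary to the exact free-padding factorization, Theorem~\ref{smthm:padding}, together with the Porter--Thomas fractional-moment limit of Eq.~\eqref{eq:smfracpt}. The concentrated profile is exactly of the form $f_O\wedge\eta$. Here $\eta\in\wedgepow{2K}\CC^{4K}$ is the product of the $K=\lceil\nu\rceil$ active blocks: $\lfloor\nu\rfloor$ blocks with $m_i=1$ and at most one partial block. The remaining $N-K$ Slater blocks are, after the block-local passive rotation absorbed into the Haar unitary, $2(N-K)$ frozen occupied and $2(N-K)$ frozen empty modes. Theorem~\ref{smthm:padding} then gives
\begin{equation}
 \E_V[X_x^r]=\frac{F_r(2N)}{F_r(2K)}\,\mathsf M_r^{(K)}(\eta).
\end{equation}

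Next I will evaluate the two factors separately. By Eq.~\eqref{eq:smpadfreeasymp}, the free ratio equals $(K/N)^{\beta}[1+O_r(K^{-1})]$ with $\beta=r(1-r)/2$. Since $K=\lceil\nu\rceil$ and $\nu\to\infty$, we have $K/\nu\to1$, so $(K/N)^\beta=(\nu/N)^\beta[1+o(1)]$. This holds uniformly in $N\ge K$; in particular $\nu/N$ may tend to zero.

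For the active factor, $\mathsf M_r^{(K)}(\eta)$ is the fractional moment of a four-mode product on $4K$ modes with defect $\nu_\eta=\nu$. Its defect density is $\nu/K\in[1-1/K,1]$, which tends to $1$ as $K\to\infty$. Equation~\eqref{eq:smfracpt} applies to any four-mode product sequence with $\nu\to\infty$ and $\nu/N\to1$, here with $N$ replaced by $K$. It gives $\mathsf M_r^{(K)}(\eta)\to\Gamma(1+r)$. Multiplying the two factors yields Eq.~\eqref{eq:smpadlaw}.

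The main point to check is that Eq.~\eqref{eq:smfracpt} really covers the sequence $\eta$, which contains one partial block whose $\lambda$ varies with $\nu$. Equation~\eqref{eq:smfracpt} rests on the integer-moment law \eqref{eq:smPT}, which is uniform over profiles, so the sequence need not be uniform. From \eqref{eq:smPT} with $\nu/K\to1$, every integer moment tends to $\ell!$. Carleman's condition then gives weak convergence to $\mathrm{Exp}(1)$, and uniform integrability of $X^r$ from $\E X=1$ gives convergence of the fractional moments. I will also confirm that the padding theorem needs no condition on $\eta$ beyond normalization, which holds here. No step beyond combining these cited results seems delicate.
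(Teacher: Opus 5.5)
Your proposal is correct and follows essentially the same route as the paper's proof. You factor via Theorem~\ref{smthm:padding}, and you evaluate the free ratio by Eq.~\eqref{eq:smpadfreeasymp} together with $K=\nu[1+o(1)]$. You then obtain $\mathsf M_r^{(K)}(\eta)\to\Gamma(1+r)$ from the profile-uniform moment law \eqref{eq:smPT} and Eq.~\eqref{eq:smfracpt}, using $\nu/K\to1$. Your added checks are sound details that the paper leaves implicit: profile uniformity covers the partial block, and the free ratio is uniform in $N\ge K$.
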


\begin{proof}
The active $K$-block state  $\eta$ has $\nu/K\to1$.
Theorem~\ref{thm:PT} and Eq.~\eqref{eq:smfracpt} therefore give
$\mathsf M_r^{(K)}(\eta)\to\Gamma(1+r)$. Apply
Theorem~\ref{smthm:padding}, Eq.~\eqref{eq:smpadfreeasymp}, and
$K=\nu[1+o(1)]$.
\end{proof}

\subsection{Distributional limits and remaining mathematical questions}
\label{sm:jeopen}

Fractional moments can resolve the distributional ambiguity left by the integer hierarchy. The following observation strengthens the concentrated-profile result without assuming Conjecture~\ref{smconj:mellin}.

\begin{lemma}[Fractional-moment determination]
\label{smlem:jefracunique}
Let $Y_N\ge0$ satisfy $\sup_N\E Y_N<\infty$, and let $Y\ge0$ have finite mean.
If $\E Y_N^r\to\E Y^r$ for every $r$ in a nonempty open interval $I\subset(0,1)$, then $Y_N\Rightarrow Y$.
\end{lemma}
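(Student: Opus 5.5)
The plan is the standard Mellin-transform uniqueness argument, in three steps: tightness, identification of subsequential limits through their fractional moments, and uniqueness of a positive measure whose Mellin transform is known on a strip.

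\emph{Tightness.} Since $\sup_N\E Y_N<\infty$, Markov's inequality makes $\{Y_N\}$ tight on $[0,\infty)$. By Prokhorov, every subsequence therefore has a further subsequence converging in distribution to some $Y'\ge0$. It then suffices to show that every such subsequential limit $Y'$ has the law of $Y$.

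\emph{Identifying the limit's moments.} Fix $r\in I$ and choose $r<p<1$. Then
\[
\E (Y_N^r)^{p/r}=\E Y_N^p\le(\E Y_N)^p
\]
by Jensen, so this is uniformly bounded. Hence $\{Y_N^r\}$ is uniformly integrable, since $p/r>1$. The continuous mapping theorem gives $Y_{N_k}^r\Rightarrow Y'^r$, and uniform integrability upgrades this to $\E Y_{N_k}^r\to\E Y'^r$. Therefore $\E Y'^r=\E Y^r$ for all $r\in I$, and both sides are finite.

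\emph{Uniqueness from the strip.} Atoms at zero need care. First show $\Pr(Y'=0)=\Pr(Y=0)$: for $0<r<1$ the functions $r\mapsto\E[Y^r\mathbf 1_{Y>0}]$ are real-analytic on $(0,1)$. Restricted to $\{Y>0\}$, write $Y=e^{T}$. Then $g(s):=\E[e^{sT}\mathbf 1_{Y>0}]$ is finite for $\Re s\in[0,1]$, because $Y$ has finite mean, and analytic in the open strip $0<\Re s<1$. The same holds for $Y'$, where finiteness of $\E Y'$ follows from Fatou. Now $\E Y^r=g(r)$ for $r>0$, since $0^r=0$. Analytic continuation from the interval $I$ gives $g_Y=g_{Y'}$ on the whole strip. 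Taking $r\downarrow0$ with monotone convergence yields $\Pr(Y>0)=\Pr(Y'>0)$. Taking $s=r_0+it$ with fixed $r_0\in I$ shows that the finite measures $e^{r_0T}\,d\mathbb P$ on $\{Y>0\}$, for $Y$ and $Y'$, have identical Fourier transforms in $t$. Fourier uniqueness makes these measures equal, and dividing by $e^{r_0T}$ gives equality of the laws of $Y$ and $Y'$ on $(0,\infty)$. Combined with the equal atoms at $0$, this gives $Y'\overset{d}{=}Y$.

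\emph{Conclusion and main obstacle.} Since every subsequential limit equals $Y$ in law, $Y_N\Rightarrow Y$. The only delicate point is the analyticity and continuation step. In particular, the equality on $I$ must propagate to vertical lines, which requires finiteness of $\E|Y^{s}|=\E Y^{\Re s}$ on the strip. This follows from $\E Y<\infty$ and $\E Y'\le\liminf\E Y_N<\infty$, together with $Y^{\Re s}\le1+Y$.
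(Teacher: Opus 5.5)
Your proof is correct and follows essentially the paper's route: tightness and uniform integrability of $Y_N^r$ from the mean bound, then uniqueness of the tilted measure $W^{r_0}\,d\mathbb P$ pushed to $\log W$ from its transform near $r_0$. You prove that uniqueness by analytic continuation and Fourier uniqueness, whereas the paper cites Laplace-transform uniqueness directly. Two small remarks: the limit $r\downarrow0$ needs dominated (not monotone) convergence with dominator $1+Y$, since $Y^r$ is not monotone in $r$ on $\{Y>1\}$. That step is in any case redundant, because equal laws on $(0,\infty)$ already force equal atoms at $0$ by normalization.
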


\begin{proof}
The mean bound gives tightness, and for each $r\in I$ it gives uniform integrability of $Y_N^r$ by the bound on its $1/r$ moment.
Every subsequential weak limit $Z$ therefore satisfies $\E Z^r=\E Y^r$ on $I$.
Choose $r_0$ in the interior of $I$ and, for $W=Z,Y$, define the finite measure
$\mu_W(B)=\E[W^{r_0}\mathbf1_{\{W>0,\,\log W\in B\}}]$ on $\mathbb R$.
Its bilateral Laplace transform is $\E W^{r_0+s}$ for $s$ in a neighborhood of zero.
Equality of these transforms implies equality of the measures, by uniqueness of a Laplace transform finite near zero.
Undoing the strictly positive weight $e^{r_0 x}$ identifies the laws on $(0,\infty)$; their remaining mass at zero then agrees as well.
Thus every subsequential limit has the law of $Y$, proving convergence.
\end{proof}

\begin{proposition}[Distributional limit for concentrated profiles]
\label{smprop:jeconcentratedlaw}
For the concentrated profiles of Corollary~\ref{smcor:padding}, if
$\nu/N\to\bar\gamma\in(0,1]$, then
\begin{equation}
 X_x\Rightarrow E L_{\bar\gamma},
 \label{eq:jeconcentratedlaw}
\end{equation}
with the independent exponential and mean-one lognormal factors defined in Eq.~\eqref{eq:smlognormal}.
More generally, Conjecture~\ref{smconj:mellin}, if established for all $0<r<1$, implies the same limit for every four-mode profile with this limiting density.
\end{proposition}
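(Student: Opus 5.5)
The plan is to combine Lemma~\ref{smlem:jefracunique} with Corollary~\ref{smcor:padding} and Theorem~\ref{smthm:padding}. First note that $\E_V X_x=1$ for every $N$, so the mean hypothesis $\sup_N\E Y_N<\infty$ holds trivially with $Y_N=X_x$. The candidate limit $Y=EL_{\bar\gamma}$ has mean $\E E\,\E L_{\bar\gamma}=1$, finite as required. Its Mellin transform is $\E Y^r=\Gamma(1+r)\,\E L_{\bar\gamma}^r$. A Gaussian computation gives $\E L_{\bar\gamma}^r=\exp[\tfrac12(r^2-r)\log(1/\bar\gamma)]=\bar\gamma^{r(1-r)/2}$, matching the comparison formula quoted before Conjecture~\ref{smconj:mellin}. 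At $\bar\gamma=1$ we have $L_1=1$, so $Y=E$.

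For the concentrated profiles, Corollary~\ref{smcor:padding} gives, for each fixed $0<r<1$,
\[
\E_V[X_x^r]=\Gamma(1+r)(\nu/N)^{r(1-r)/2}[1+o(1)].
\]
This applies because $\nu/N\to\bar\gamma>0$ forces $\nu\to\infty$, since $N\to\infty$ and hence $K=\lceil\nu\rceil\to\infty$. Since $(\nu/N)^{\beta}\to\bar\gamma^{\beta}$, this yields $\E_V[X_x^r]\to\E Y^r$ for every $r\in(0,1)$. The interval $I=(0,1)$ is a nonempty open interval, so Lemma~\ref{smlem:jefracunique} gives $X_x\Rightarrow EL_{\bar\gamma}$, which is Eq.~\eqref{eq:jeconcentratedlaw}.

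The main point needing care is uniformity in the corollary. Its $o(1)$ rests on $\mathsf M_r^{(K)}(\eta)\to\Gamma(1+r)$ and on the padding ratio $F_r(2N)/F_r(2K)=(K/N)^\beta[1+O_r(K^{-1})]$. The first step needs $\nu/K\to1$ for the active state. This holds because the $\lfloor\nu\rfloor$ full blocks plus at most one partial block give $\nu\ge K-1$. Convergence of $\mathsf M_r^{(K)}(\eta)$ then follows from Theorem~\ref{thm:PT}, via Eq.~\eqref{eq:smfracpt}. Equation~\eqref{eq:smfracpt} itself uses Carleman's condition for $\mathrm{Exp}(1)$ together with uniform integrability of $X_x^r$. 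I expect this step to be the only nontrivial input, and it is already established earlier.

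For the general statement, suppose Conjecture~\ref{smconj:mellin} holds for all $0<r<1$. Then for any four-mode profile with $\nu/N\to\bar\gamma\in(0,1]$, again $\nu\to\infty$, and the conjectured asymptotic gives the same limiting Mellin transform $\Gamma(1+r)\bar\gamma^{r(1-r)/2}$ on $(0,1)$. With $\E X_x=1$, Lemma~\ref{smlem:jefracunique} applies verbatim and gives the same limit law. Fractional moments on an open interval suffice here even though the integer moments are indeterminate for $\bar\gamma<1$, which is exactly the ambiguity this argument resolves.
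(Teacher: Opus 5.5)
Your proposal is correct and follows the paper's proof: Corollary~\ref{smcor:padding} gives $\E_V[X_x^r]\to\Gamma(1+r)\bar\gamma^{r(1-r)/2}=\E[(EL_{\bar\gamma})^r]$ for every $0<r<1$. Since both $X_x$ and $EL_{\bar\gamma}$ have mean one, Lemma~\ref{smlem:jefracunique} yields the weak limit, and the conjectural extension to general profiles goes through by the same argument.
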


\begin{proof}
Corollary~\ref{smcor:padding} gives convergence of every fractional moment to
$\Gamma(1+r)\bar\gamma^{r(1-r)/2}=\E[(EL_{\bar\gamma})^r]$.
Both $X_x$ and $EL_{\bar\gamma}$ have mean one, so Lemma~\ref{smlem:jefracunique} applies.
The same argument uses Conjecture~\ref{smconj:mellin} for general profiles.
\end{proof}

For general weak-block profiles, the missing step is therefore a fractional-moment or distributional estimate; the integer moments alone cannot select this law.
One possible route is to control the Laplace transform, since for $0<r<1$
\begin{equation}
 \E X_x^r=\frac{r}{\Gamma(1-r)}
 \int_0^\infty\bigl(1-\E e^{-tX_x}\bigr)t^{-r-1}\,dt,
 \qquad
 \E e^{-tEL_\gamma}=\E(1+tL_\gamma)^{-1}.
 \label{eq:jelaplaceroute}
\end{equation}
The first identity follows from the corresponding scalar integral and Tonelli's theorem. Establishing convergence of these Laplace transforms would select the candidate law, while the mean bound would supply the required fractional-moment uniform integrability.

Beyond product states, the replica reduction gives an exact target for a sufficient condition on correlations:
for every fixed $\ell\ge2$, the Porter--Thomas moment hierarchy is equivalent to
\begin{equation}
 \A_\ell(\Psi_N)=\ell!\,\GB(\ell+1)
 N^{-\binom{\ell}{2}}[1+o_\ell(1)].
 \label{eq:jecorrelationscriterion}
\end{equation}
This follows directly from Eq.~\eqref{eq:smRq}.
The open problem is to derive these sector-weight estimates from physically useful assumptions on the many-body correlations. Extensive defect alone cannot do so, as the cat example shows.

A separate computational question is whether the average-case reduction in Conjecture~\ref{smconj:achard} can be established for relative-error estimates on the specified finite ensemble.
The exact worst-case embedding of Sec.~\ref{sm:sharpP} does not supply such a reduction, so the approximate-sampling conclusion remains conditional even where the limiting probability law is known.

\section{Preparation cost}
\label{sm:preparation}

The output obstruction becomes a preparation bound once the defect is compared with a Gaussian resource monotone. We bound the contributions of non-Gaussian resource modes and parity-preserving local gates, and then quantify the stability of the bound when the prepared pure state only approximates the target.
For a pure fixed-parity state, let $G_\Psi$ be the Gaussian state
with the same full Majorana covariance matrix and set
\begin{equation}
 M(\Psi):=S(G_\Psi).
\end{equation}
If $r_a$ are the Williamson values of that covariance, then
$M(\Psi)=\sum_a h_2((1+r_a)/2)$. It is additive and strongly monotone
under pure-state Gaussian protocols~\cite{Turkeshi2026Williamson}. For a
fixed-number state with natural occupations $\theta_a$, anomalous
correlators vanish and $r_a=|1-2\theta_a|$. Hence
\begin{equation}
 M(\Psi)=\sum_a h_2(\theta_a)
 \ge4\sum_a\theta_a(1-\theta_a)=4\nu(\Psi),
 \label{eq:MboundSM}
\end{equation}
where $h_2$ is the binary entropy in bits. Here
$4\nu=\sum_a(1-r_a^2)$ is the fixed-number restriction of the quadratic
full-covariance antiflatness~\cite{Leone2026FermionicEntropy,
Turkeshi2026Williamson}. Equality in Eq.~\eqref{eq:MboundSM} holds
precisely when every $\theta_a\in\{0,1/2,1\}$.

\begin{lemma}[Entropy change under a local gate]
\label{smlem:localgate}
Let \(U_A\) be a parity-preserving unitary supported on \(q\)
fermionic modes.  For every pure fixed-parity state vector \(\ket{\Psi}\),
\begin{equation}
 \big|M(U_A\Psi)-M(\Psi)\big|\le2q .
 \label{eq:smlocalgate}
\end{equation}
The coefficient \(2\) is optimal, already for \(q=2\).
\end{lemma}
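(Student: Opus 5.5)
The plan is to write $M(\Psi)=S(G_\Psi)$ as an entropy of a Gaussian state whose covariance matrix changes only in a controlled way under $U_A$. Let $\gamma_\Psi$ be the full $2L\times 2L$ Majorana covariance matrix on all $L$ modes, with $\gamma_{jk}=\tfrac{i}{2}\langle[w_j,w_k]\rangle$. Split the Majorana operators into the $2q$ operators of the support $A$ and the $2(L-q)$ of the complement $B$. A parity-preserving $U_A$ commutes with every Majorana operator on $B$; the even support is what makes the fermionic signs cancel. Hence the $BB$ block $\gamma_{BB}$ is unchanged. The $AA$ and $AB$ blocks may change arbitrarily, subject only to the constraint that the result is again a covariance matrix, with $\|\gamma\|\le1$, real and antisymmetric. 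So $M(\Psi)$ and $M(U_A\Psi)$ are Gaussian entropies of two covariance matrices sharing the compression $\gamma_{BB}$ to a subspace of codimension $2q$.

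Next I would compare both with the Gaussian state $G_B$ on $B$ whose covariance is $\gamma_{BB}$. Its entropy is the reduced entropy $S(\tr_A G_\Psi)$, since the reduced state of a Gaussian state is Gaussian with the compressed covariance. The Araki--Lieb and subadditivity inequalities for the bipartition $A|B$ of the fermionic Gaussian state $G_\Psi$ give
\begin{equation}
 |S(G_\Psi)-S(\tr_AG_\Psi)|\le S(\tr_BG_\Psi)\le q .
\end{equation}
These inequalities are valid for fermions because $G_\Psi$ is even, so the partial traces are well defined. The bound $\le q$ holds because the reduced state lives on a Fock space of $q$ modes, dimension $2^q$. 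The same bound applies to $G_{U_A\Psi}$, which has the same $B$ marginal. The triangle inequality then yields $|M(U_A\Psi)-M(\Psi)|\le2q$.

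The main obstacle is the optimality claim for $q=2$, which requires an explicit example with a jump of exactly $4$ bits. The natural candidate takes four modes $1,2,3,4$, with modes $1,2$ in $A$ and $3,4$ in $B$. Start from a Gaussian (Slater) state in which modes $1,2$ are maximally entangled with modes $3,4$ through pairing or hopping. One example is the product of two Bell-like pairs $\tfrac{1}{\sqrt2}(\ket{10}+\ket{01})$ on pairs $(1,3)$ and $(2,4)$. This state is Gaussian, so $M=0$, and its $B$ marginal has maximal entropy $2$ bits.

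Then apply a parity-preserving two-mode gate on $A$ that makes the full covariance vanish, so $\gamma=0$ and $M=4$, the maximum for four modes. Such a gate maps the state to one with all one-body and pairing correlators equal to zero, for example a state locally equivalent to a fermionic version of a four-qubit absolutely maximally entangled-type state on the even sector. I would try the controlled-phase-type gate $e^{i\pi n_1n_2/2}$ first and check the two-point functions directly; if that fails, I would search over the one-parameter family $e^{i\theta n_1 n_2}$ combined with passive rotations. Since $M$ is invariant under Gaussian unitaries, freedom in the Gaussian input simplifies this search.
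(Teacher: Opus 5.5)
Your proof of the bound is the paper's argument. The $BB$ block of the covariance is unchanged, so $G'_B=G_B$. Your Araki--Lieb/subadditivity step is exactly $-q\le S(A|B)\le q$ applied to $G$ and to $G'$, and the triangle inequality turns this into $2q$. Your input for the sharpness claim is also the right kind of state: a Gaussian state maximally entangled across the cut has $S(A|B)_G=-2$, so a jump of $4$ occurs precisely when the gate drives the full covariance to zero.

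The gap is that the optimality claim, which is part of the statement, is left as a search, and your first candidate fails. Take $\ket\Psi=\tfrac12(c_1^\dagger+c_3^\dagger)(c_2^\dagger+c_4^\dagger)\ket0$. The gate $e^{i\theta n_1n_2}$ only multiplies the $c_1^\dagger c_2^\dagger$ component by $e^{i\theta}$, so the output still has two particles. Its correlation matrix has $\Gamma_{aa}=\tfrac12$, $\Gamma_{13}=\Gamma_{24}=(1+e^{-i\theta})/4$, and all other entries zero. The natural occupations are therefore $\tfrac12(1\pm|\cos\tfrac\theta2|)$, each twice, and
\[
M\big(e^{i\theta n_1n_2}\Psi\big)=4\,h_2\!\Big(\tfrac12\big(1+|\cos\tfrac\theta2|\big)\Big).
\]
This is about $2.4$ at $\theta=\pi/2$ and reaches $4$ only at $\theta=\pi$. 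So the witness is the fermionic controlled-$Z$ gate $(-1)^{n_1n_2}$, which gives $\Gamma=\id/2$ and $M=4=2q$.

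The structural reason is as follows. Your input has vanishing $AA$ covariance block, so Wick's theorem kills every correlator between a cubic operator on $A$ and a Majorana operator on $B$. Hence the output covariance vanishes if and only if conjugation by the gate sends every Majorana operator on $A$ to an operator with no linear term in its Majorana expansion. The gate $(-1)^{n_1n_2}$ does this, since $c_1\mapsto c_1(-1)^{n_2}$ and $c_2\mapsto(-1)^{n_1}c_2$. For $\theta\neq\pi$, $e^{i\theta n_1n_2}$ leaves a linear part. Nothing AME-like is needed (four-qubit AME states do not exist); only the two-point functions must vanish.

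Up to Gaussian unitaries, which leave $M$ invariant, this is the paper's witness. The paper applies the qubit $\mathrm{SWAP}_{23}$ to the pairing state $\ket{\Phi^+}_{12}\ket{\Phi^+}_{34}$. For adjacent Jordan--Wigner modes, the qubit SWAP equals the passive fermionic swap times $(-1)^{n_2n_3}$.
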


\begin{proof}
Let \(G\) and \(G'\) be the Gaussian states with the covariances of
\(\Psi\) and \(U_A\Psi\), respectively, and write \(B=A^c\).
Parity preservation leaves the \(B\times B\) covariance block
unchanged, hence \(G'_B=G_B\).  Therefore
\begin{equation}
 M(U_A\Psi)-M(\Psi)=S(A|B)_{G'}-S(A|B)_G .
\end{equation}
Since \(A\) contains \(q\) modes, \(-q\le S(A|B)\le q\), proving the
bound.  Sharpness is witnessed by the ordinary qubit
\(\mathrm{SWAP}_{23}\), which maps the Gaussian state vector
\(\ket{\Phi^+}_{12}\ket{\Phi^+}_{34}\) to
\(\ket{\Phi^+}_{13}\ket{\Phi^+}_{24}\), whose covariance vanishes,
and raises \(M\) from \(0\) to \(4\).
\end{proof}

\begin{proposition}[Preparation with resource states and local gates]
\label{smprop:hybridcost}
Consider a deterministic exact preparation of a pure fixed-number
target state \(\Psi\). The protocol may start from pure Gaussian states, consume a possibly joint pure fixed-parity resource state on \(q_0\) modes, apply parity-preserving non-Gaussian gates supported on \(q_t\) modes, \(t\ge1\), and
use arbitrary pure-state Gaussian operations between them.  Then
\begin{equation}
 q_0+2\sum_{t\ge1}q_t\ge M(\Psi)\ge4\nu(\Psi).
 \label{eq:costSM}
\end{equation}
\end{proposition}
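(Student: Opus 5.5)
We track the monotone $M$ along the protocol, treating it as a budget that is charged only at the non-Gaussian steps. Order the protocol as a sequence of steps. The admissible steps are: adjoining pure Gaussian ancillas, adjoining the resource state on $q_0$ modes, pure-state Gaussian operations (unitaries, measurements with feedforward, and discarding of modes), and the non-Gaussian gates $U_t$ on $q_t$ modes. The initial state is Gaussian and has $M=0$.

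The resource step is handled by additivity. The resource is adjoined in tensor product with the current state, so $M$ increases by $M(\text{resource})=S(G_{\rm res})$. A Gaussian state on $q_0$ modes has entropy at most $q_0$ bits, so this increase is at most $q_0$. If the resource is injected at a later time, the same additivity applies at that moment.

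The gate and Gaussian steps are handled by the earlier results. Each gate changes $M$ by at most $2q_t$, by Lemma~\ref{smlem:localgate}. Pure-state Gaussian protocols do not increase $M$ on average, by strong monotonicity~\cite{Turkeshi2026Williamson}. Here the key point is determinism: the preparation outputs $\Psi$ exactly on every branch. We therefore apply monotonicity branch by branch and take the branch-averaged $M$ as the running quantity. Since each bound holds pointwise on branches, averaging them is harmless.

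Chaining the steps gives
\begin{equation}
 M(\Psi)=\E_{\rm branches}M_{\rm final}\le q_0+2\sum_{t\ge1}q_t .
\end{equation}
The second inequality in Eq.~\eqref{eq:costSM} is Eq.~\eqref{eq:MboundSM}. It follows termwise from $h_2(\theta)\ge4\theta(1-\theta)$ on $[0,1]$, which holds because the difference vanishes at $\theta=0,1/2,1$ and is symmetric about $1/2$. One checks the interval $[0,1/2]$ directly.

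The main obstacle is the interaction between measurements and the gates. After a Gaussian measurement, the subsequent gate and Gaussian choices can depend on the outcome. The gate bound is deterministic per branch, but Gaussian monotonicity holds only on average, so we must make sure these two kinds of bound compose.

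We resolve this by an induction on the protocol tree, with the conditional expectation of $M$ as the running quantity. At a gate, the bound $|\Delta M|\le 2q_t$ holds on every branch and therefore survives the conditional expectation. At a Gaussian step, the conditional expectation does not increase. Because the final state is $\Psi$ on every branch, the terminal expectation equals $M(\Psi)$ exactly. A smaller point is that discarding Gaussian-decoupled modes must not increase $M$; this is included in the monotonicity statement we cite. Finally, when the resource is joint across several injection times, we inject all $q_0$ modes at the start and apply additivity once, which yields the same budget $q_0$.
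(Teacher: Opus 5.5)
Your proposal is correct and is essentially the paper's argument. The resource adds at most $q_0$ by additivity, Gaussian segments do not increase the branch-averaged $M$, each gate costs at most $2q_t$ by Lemma~\ref{smlem:localgate}, and iterating from $M=0$ gives the first inequality, with the second from Eq.~\eqref{eq:MboundSM}. Your tower-property bookkeeping over the measurement tree makes explicit what the paper calls ``iterating these bounds''; the one nit is that with leftover ancilla or catalyst modes the terminal value is $M(\Psi)+M(\text{junk})\ge M(\Psi)$ rather than exactly $M(\Psi)$, which is the paper's additivity remark and is all the inequality needs.
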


\begin{proof}
The resource contributes at most \(q_0\) bits of \(M\), Gaussian
segments do not increase its branch average, and gate \(t\) changes it
by at most \(2q_t\) by Lemma~\ref{smlem:localgate}.  Iterating these
bounds proves Eq.~\eqref{eq:costSM}.  A pure catalyst returned
unchanged and in product with the output cancels by additivity.
\end{proof}

For \(R\) four-mode resources and \(n_{\rm SWAP}\) ordinary qubit
SWAPs, possibly used together,
\begin{equation}
 R+n_{\rm SWAP}\ge\frac{M(\Psi)}4\ge\nu(\Psi).
 \label{eq:smswapcombined}
\end{equation}
The SWAP-only specialization recovers
Ref.~\cite{Tarabunga2026Computable}; Eq.~\eqref{eq:costSM} extends it
to arbitrary resource sizes and parity-preserving mode-local gates.
For outcome-dependent gates, Eq.~\eqref{eq:costSM} holds with
\(\sum_t q_t\) replaced by \(\E_b[\sum_t q_t(b)]\), where \(b\)
is the Gaussian measurement record and \(q_t(b)=0\) after termination.

\begin{lemma}[Trace-distance stability of the defect]
\label{smlem:lipschitz}
Density operators \(\rho_\Phi,\rho_\Psi\) of \(2N\) fermions on \(4N\) modes with
\(\tfrac12\|\rho_\Phi-\rho_\Psi\|_1\le\epsilon\) obey
\(|\nu(\rho_\Phi)-\nu(\rho_\Psi)|\le4N\epsilon\).
For pure inputs, \(\rho_\Phi=\ket\Phi\bra\Phi\) and
\(\rho_\Psi=\ket\Psi\bra\Psi\).
\end{lemma}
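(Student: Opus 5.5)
The plan is to write $\nu(\rho)=\tr\Gamma_\rho-\tr\Gamma_\rho^2$, where $(\Gamma_\rho)_{a,b}=\tr(\rho\, c_a^\dagger c_b)$, and to bound the linear and quadratic pieces separately through the one-body difference $\Delta:=\Gamma_\Phi-\Gamma_\Psi$. Both states carry $2N$ fermions, so $\tr\Gamma_\Phi=\tr\Gamma_\Psi=2N$ and the linear terms cancel exactly. What remains is
\begin{equation}
 \nu(\rho_\Phi)-\nu(\rho_\Psi)=\tr\Gamma_\Psi^2-\tr\Gamma_\Phi^2=-\tr\big[\Delta(\Gamma_\Phi+\Gamma_\Psi)\big].
\end{equation}
Hölder's inequality then gives $|\nu(\rho_\Phi)-\nu(\rho_\Psi)|\le\|\Delta\|_1\,\|\Gamma_\Phi+\Gamma_\Psi\|_\infty\le 2\|\Delta\|_1$, since $0\le\Gamma\le\id$ for every density operator.

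The key step is to show that the one-body map is contractive in trace norm up to a particle-number factor, namely $\|\Delta\|_1\le 2N\,\|\rho_\Phi-\rho_\Psi\|_1$. I would use the duality $\|\Delta\|_1=\sup_{\|h\|_\infty\le1}|\tr(h\Delta)|$ over one-body Hermitian matrices $h$. For such $h$ one has $\tr(h\Gamma_\rho)=\tr(\rho\,\hat h)$ with $\hat h=\sum_{a,b}h_{b,a}c_a^\dagger c_b$; keeping track of the transpose convention in $\Gamma$ does not change operator norms. On the $2N$-particle sector, $\hat h=\mathrm d\Gamma(h)$ has eigenvalues equal to sums of $2N$ eigenvalues of $h$. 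Hence $\|\hat h\|_\infty\le 2N$ there, and $|\tr(h\Delta)|=|\tr[(\rho_\Phi-\rho_\Psi)\hat h]|\le 2N\|\rho_\Phi-\rho_\Psi\|_1\le 4N\epsilon$.

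This yields only $|\Delta\nu|\le 8N\epsilon$, so I need to recover a factor $2$. The estimate $\|\Gamma_\Phi+\Gamma_\Psi\|_\infty\le2$ is the lossy step, and I would fix it by centering. Because $\tr\Delta=0$, the difference can be rewritten as $-\tr[\Delta(\Gamma_\Phi+\Gamma_\Psi-\id)]$. The matrix $\Gamma_\Phi+\Gamma_\Psi-\id$ has spectrum in $[-1,1]$, so its operator norm is at most $1$. This gives $|\Delta\nu|\le\|\Delta\|_1\le 4N\epsilon$, as claimed.

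For pure inputs nothing changes, since the argument only used $\tr\rho=1$, positivity and fixed particle number. I expect the only real subtlety to be the duality step on the fixed-number sector; the bound on $\|\mathrm d\Gamma(h)\|_\infty$ restricted to $\wedgepow{2N}$ is the point to verify carefully.
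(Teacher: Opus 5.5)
Your proof is correct and follows the paper's argument essentially step for step. Both use the duality bound $\|\Gamma_\Phi-\Gamma_\Psi\|_1\le 2N\|\rho_\Phi-\rho_\Psi\|_1$ via the second-quantized one-body operator restricted to the $2N$-particle sector, and then the centered identity $\nu(\rho_\Phi)-\nu(\rho_\Psi)=\tr[(\Gamma_\Phi-\Gamma_\Psi)(\id-\Gamma_\Phi-\Gamma_\Psi)]$, which follows from $\tr(\Gamma_\Phi-\Gamma_\Psi)=0$ together with $\|\id-\Gamma_\Phi-\Gamma_\Psi\|_\infty\le1$.
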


\begin{proof}
Let \(\Gamma_\Phi,\Gamma_\Psi\) be the one-body correlation matrices of \(\rho_\Phi,\rho_\Psi\), respectively. Write \(\Delta\Gamma:=\Gamma_\Phi-\Gamma_\Psi\) and
\(\Delta\rho:=\rho_\Phi-\rho_\Psi\), so \(\|\Delta\rho\|_1\le2\epsilon\) and
\(\tr\Delta\Gamma=0\) at fixed particle number. For a Hermitian
single-particle contraction \(W\) with \(\|W\|_\infty\le1\), the second-quantized
operator \(d\Gamma(W)=\sum_{a,b}W_{ba}c_a^\dagger c_b\) has spectrum in
\([-2N,2N]\) on the \(2N\)-particle sector, each eigenvalue being a
sum of \(2N\) eigenvalues of \(W\).  Hence
\begin{equation}
 \|\Delta\Gamma\|_1
 =\sup_{\substack{W=W^\dagger\\\|W\|_\infty\le1}}
   \big|\tr(W\Delta\Gamma)\big|
 =\sup_{\substack{W=W^\dagger\\\|W\|_\infty\le1}}
   \big|\tr(d\Gamma(W)\,\Delta\rho)\big|
 \le2N\|\Delta\rho\|_1\le4N\epsilon .
\end{equation}
Since \(\tr\Delta\Gamma=0\),
\begin{equation}
 |\nu(\Phi)-\nu(\Psi)|
 =\big|\tr[\Delta\Gamma\,(\id-\Gamma_\Phi-\Gamma_\Psi)]\big|
 \le\|\Delta\Gamma\|_1
 \le4N\epsilon ,
\end{equation}
using \(0\preceq\Gamma_\Phi+\Gamma_\Psi\preceq2\,\id\), so that
\(\|\id-\Gamma_\Phi-\Gamma_\Psi\|_\infty\le1\).
\end{proof}

Hence, if a protocol of Proposition~\ref{smprop:hybridcost} outputs a
pure fixed-number state \(\Phi\) with
\(\tfrac12\|\ket\Phi\bra\Phi-\ket\Psi\bra\Psi\|_1\le\epsilon\) and \(\nu(\Psi)\ge cN\), then
\(q_0+2\sum_{t\ge1}q_t\ge4(c-4\epsilon)N\), which is linear for every
fixed \(\epsilon<c/4\).

\section{Low-defect inputs and Gaussian-rank compression}
\label{sm:easy}

At low defect, natural-orbital excitation sectors also give useful approximations by short Slater expansions. We first compare product inputs with a Slater reference and then derive a general Gaussian-rank compression bound. The distinction between the existence of a compressed expansion and efficient access to that expansion is essential for sampling conclusions.
For the product family the Slater reference $\Psi_0$ has overlap
$\braket{\Psi_0}{\Psi_{\bm\lambda}}
=\prod_i\sqrt{1-\lambda_i}$, preserved by every passive circuit.
Fuchs--van de Graaf~\cite{Fuchs1999Cryptographic} and data processing
therefore give, for every $V$,
\begin{equation}
 \begin{aligned}
 \dTV(p_{V,\bm\lambda},p_{V,0})
 &\le\sqrt{1-\prod_i(1-\lambda_i)}\\
 &\le\sqrt{\sum_i\lambda_i}
 \le\sqrt{\nu/2}.
 \end{aligned}
 \label{eq:smeasyTV}
\end{equation}
The reference output is determinantal and polynomially
samplable~\cite{Valiant2002Quantum,Terhal2002Classical}. Hence
$\nu=o(1)$ gives an unconditional vanishing-error sampler for an
explicit product input.

For an arbitrary pure input, use the natural-orbital split
$\CC^{2n}=F\oplus G$ of Sec.~\ref{sm:nogo}.  The mean number of
excitations into $G$ is $\mu\le2\nu$ by Eq.~\eqref{eq:smmunu}, so the
weight outside the zero-excitation sector is at most $\mu$.  Thus the
input density operator is within trace distance $\sqrt{2\nu}$ of that of the Slater determinant
filling $F$; the next proposition gives its higher-rank extension.

The fermionic Gaussian rank of a pure state is the minimum number of
normalized pure Gaussian states in a linear expansion, without an
orthogonality requirement~\cite{Cudby2023Gaussian,Dias2024Classical}.

\begin{proposition}[Information-theoretic Gaussian-rank compression]
\label{smprop:compress}
Let $\ket\Psi\in\wedgepow n\CC^{2n}$ be pure and half filled.  For
$0<\epsilon<1$, put
\begin{equation}
 L=\min\!\left\{n,
 \left\lfloor\frac{2\nu}{\epsilon^2}\right\rfloor\right\}.
\end{equation}
There is a normalized state $\Phi$ whose density operator is within trace distance $\epsilon$
of that of $\Psi$ whose fermionic Gaussian rank is at most
\begin{equation}
 \sum_{j=0}^L\binom nj^2
 \le(L+1)n^{2L}
 =n^{O(\nu/\epsilon^2)}.
\end{equation}
Their passive-circuit output distributions are therefore within
$\epsilon$ in total variation.
\end{proposition}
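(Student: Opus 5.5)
The plan is to truncate $\Psi$ to the low-excitation sectors of the natural-orbital split and then count Slater determinants. Use the split $\CC^{2n}=F\oplus G$ of Sec.~\ref{sm:nogo}, with $F$ spanned by the $n$ most occupied natural orbitals. The full space decomposes orthogonally as $\wedgepow n\CC^{2n}=\bigoplus_{j=0}^n\mathcal E_j$, where $\mathcal E_j=\wedgepow{n-j}F\otimes\wedgepow jG$ has projector $Q_j$ and dimension $\binom nj^2$. Write $w_j=\matel{\Psi}{Q_j}{\Psi}$, so that $\mu=\sum_j jw_j\le2\nu$ by Eq.~\eqref{eq:smmunu}.

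Let $Q_{\le L}=\sum_{j\le L}Q_j$ and take $\ket\Phi=Q_{\le L}\ket\Psi/\|Q_{\le L}\Psi\|$.

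\emph{Error bound.} If $L=n$, then $\Phi=\Psi$ and there is nothing to prove. Otherwise $L=\lfloor2\nu/\epsilon^2\rfloor$, so $L+1>2\nu/\epsilon^2$. Markov's inequality on the excitation number gives
\[
\delta:=\sum_{j>L}w_j\le\frac{\mu}{L+1}\le\frac{2\nu}{L+1}<\epsilon^2 .
\]
For pure states the trace distance equals $\sqrt{1-|\braket\Phi\Psi|^2}$, and here $|\braket\Phi\Psi|^2=1-\delta$. The trace distance is therefore $\sqrt\delta<\epsilon$. If $\mu$ happens to vanish, then $\Psi$ is itself the Slater determinant filling $F$ and the claim is trivial. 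Normalization is never an issue, since $\delta<\epsilon^2<1$ ensures $Q_{\le L}\Psi\neq0$.

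\emph{Rank bound.} Fix orthonormal natural-orbital bases of $F$ and $G$. Each $\mathcal E_j$ is spanned by the $\binom nj^2$ basis Slater determinants obtained by choosing $n-j$ orbitals from $F$ and $j$ from $G$, and each of these is a normalized pure Gaussian state. Expanding $\Phi$ in this basis shows that its fermionic Gaussian rank is at most $\sum_{j\le L}\binom nj^2$. Since $\binom nj\le n^j$, each term is at most $n^{2L}$, which gives the bound $(L+1)n^{2L}$. Finally, $L\le 2\nu/\epsilon^2$ gives $n^{O(\nu/\epsilon^2)}$; the factor $L+1$ is absorbed for $n\ge2$, and the case $n=1$ is trivial.

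\emph{Output distributions.} Passive FLO followed by occupation measurement is a quantum channel. By data processing it does not increase trace distance, so the two output distributions are within total variation $\epsilon$.

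All of this is routine. The only step needing care is the tail estimate, and it is a single Markov bound fed by Eq.~\eqref{eq:smmunu}. The statement is intentionally information-theoretic: it asserts that the compressed expansion exists, not that it can be computed efficiently.
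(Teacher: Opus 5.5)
Your proof is correct and follows essentially the same route as the paper: project onto the natural-orbital excitation sectors $j\le L$, bound the discarded weight by $\mu/(L+1)<\epsilon^2$ via Markov's inequality together with $\mu\le2\nu$, and count the $\sum_{j\le L}\binom nj^2$ Slater basis states of the retained sectors. Your explicit pure-state trace-distance computation $\sqrt{\delta}$ and the data-processing step simply spell out details that the paper states without elaboration.
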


\begin{proof}
With the same natural-orbital split, write
\begin{equation}
 \begin{aligned}
 \Psi&=\sum_{j=0}^n\Psi_j,
 &\Psi_j&\in\wedgepow{n-j}F\otimes\wedgepow jG,\\
 \sum_{j=0}^n j\|\Psi_j\|^2&=\mu\le2\nu,
 &\sum_{j>L}\|\Psi_j\|^2
 &\le\frac{\mu}{L+1}\le\epsilon^2
 \qquad(L<n).
 \end{aligned}
\end{equation}
For $L=n$ the tail vanishes.  The density operator of the normalized projection onto the first
$L+1$ sectors is therefore within trace distance $\epsilon$ of
$\ket\Psi\bra\Psi$, and those sectors have a Slater basis of size
$\sum_{j=0}^L\binom nj^2$.
\end{proof}

Algorithmic use requires the truncated natural-orbital expansion as
input; given it, rank-$R$
Gaussian superpositions are simulable in $\operatorname{poly}(R,n)$
time~\cite{Dias2024Classical}.

This compression is not implied by Gaussian extent $\xi_{\rm FLO}$: for
four-mode products, $\xi_{\rm FLO}=\prod_i(1+\sqrt{m_i})\ge2^\nu$, while the
profile $m_i=1/N$ has $\nu=1$ but
$\log_2\xi_{\rm FLO}=\Theta(\sqrt N)$~\cite{ReardonSmith2024FLOExtent,
Dias2026Optimal}.

\section{Hardness statements and explicit constants}
\label{sm:hardness}

Exact worst-case hardness and approximate-sampling hardness require different ingredients. We first construct a permanent embedding, then specify a finite ensemble with hiding, and finally combine the average-case hardness assumption with anticoncentration to obtain explicit sampling-error tolerances. This order makes the conditional part of the sampling conclusion precise.

\subsection{Worst-case hardness at fixed block parameter}
\label{sm:sharpP}

We embed a permanent into a physical passive-FLO amplitude to establish exact worst-case hardness at every fixed nonzero algebraic block parameter.
For equal blocks write
$\vartheta=\sqrt{\lambda/(1-\lambda)}\in(0,1]$. Let $\mathcal I_x$ be the
$2N$ occupied rows of an outcome $x$. For
$y\in\{0,1\}^N$, let $Y_y$ contain the first orbital pair of block
$i$ when $y_i=1$ and the second pair when $y_i=0$. Then
\begin{equation}
 \matel{x}{\wedgepow{2N}u}{\psi_\lambda^{\otimes N}}
 =(1-\lambda)^{N/2}\mathcal P_x(\vartheta;u),
 \qquad
 \mathcal P_x(\vartheta;u)
 :=\sum_{y\in\{0,1\}^N}\vartheta^{|y|}\det u_{\mathcal I_x,Y_y}.
 \label{eq:smPpoly}
\end{equation}

\begin{lemma}[Orthogonal realization of the selector sum]
\label{smlem:physicaldilation}
Let $\mathbb F$ be either $\mathbb Q$ or a fixed real algebraic number
field. Let $A\in\mathbb F^{4B\times2B}$ have $B$ four-row blocks and
set
\begin{equation}
 \Delta_B(A):=\sum_{y\in\{0,1\}^B}\det A_{Y_y,:}.
 \label{eq:smDB}
\end{equation}
In polynomial time, using a fixed basis of $\mathbb F$, one can
construct $u_A\in O(8B,\mathbb F)$ and a half-filled outcome $x_A$
such that, for $N=2B$,
\begin{equation}
 \mathcal P_{x_A}(1;u_A)
 =2^{2B}\det(I+A^{\mathsf T}A)^{-1}\Delta_B(A).
 \label{eq:smphysicaldilation}
\end{equation}
\end{lemma}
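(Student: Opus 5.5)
The plan is to realize the selector sum through a Cayley transform, which produces an orthogonal matrix over $\mathbb F$ with no square roots. The complementary-minor (Jacobi) identity will then turn the minors of $u_A$ appearing in $\mathcal P_{x_A}(1;u_A)$ into minors of a sparse matrix built from $A$. Concretely, split the $8B$ modes as $\mathcal R\oplus\mathcal S$, where $\mathcal R$ consists of the first $B$ four-mode blocks ($4B$ modes, indexed like the rows of $A$) and $\mathcal S$ of the remaining $B$ blocks ($4B$ modes). I would place $A$ in the off-diagonal of a skew-symmetric matrix
\[
K=\begin{pmatrix}0 & \tilde A\\ -\tilde A^{\mathsf T} & 0\end{pmatrix},\qquad \tilde A=\begin{pmatrix}A & 0\end{pmatrix}\in\mathbb F^{4B\times4B},
\]
so that $\tilde A$ couples $\mathcal R$ to a $2B$-dimensional subspace of $\mathcal S$. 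I would then set $u_A=(I-K)(I+K)^{-1}=2(I+K)^{-1}-I$. Since $K$ is real skew, $I+K$ is invertible and $u_A\in O(8B,\mathbb F)$. All entries follow from one Gaussian elimination over $\mathbb F$, which is polynomial in the bit size of the fixed basis representation.

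Next, I would compute $(I+K)^{-1}$ blockwise by Schur complements. The diagonal blocks involve $(I+\tilde A\tilde A^{\mathsf T})^{-1}$ and $(I+\tilde A^{\mathsf T}\tilde A)^{-1}$, and the determinant $\det(I+K)=\det(I+A^{\mathsf T}A)$ is what will produce the prefactor $\det(I+A^{\mathsf T}A)^{-1}$. For the outcome I would choose $x_A$ so that the occupied rows $\mathcal I_x$ are all of $\mathcal R$ together with none, or a fixed complementary half, of $\mathcal S$; this is the choice to be tuned. Each column set $Y_y$ picks one orbital pair in each of the $2B$ blocks.

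The core step is to apply Jacobi's identity, which for a minor of $M^{-1}$ gives $\det (M^{-1})_{I,J}=\pm\det M_{J^c,I^c}/\det M$, to the minors of $2(I+K)^{-1}-I$. Here I would first peel off the $-I$ using the Laplace/Cauchy--Binet expansion of a minor of a rank-structured sum. Each minor should become a factor $2^{\#}$ times a minor of $I+K$ divided by $\det(I+A^{\mathsf T}A)$. Because $I+K$ is the identity plus the single off-diagonal block $\tilde A$, its minors are, up to sign, minors of $A$. The goal is to show that the selector variables on the $\mathcal S$ blocks are forced: only one pair choice in each $\mathcal S$ block gives a nonzero term, and the resulting minor is exactly $\det A_{Y_{y'},:}$ for the $\mathcal R$-part selector $y'\in\{0,1\}^B$. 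The factors of $2$ should collect to $2^{2B}$, one per $2\times2$ pair in the $2B$ blocks touched by the inverse. Summing over $y'$ would then give $\Delta_B(A)$.

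I expect the main obstacle to be this bookkeeping. First, the signs from Jacobi's identity and from the block ordering must agree across all $y$, so that the sum is exactly $\Delta_B(A)$ rather than a signed variant. Second, the $-I$ term must not generate spurious cross terms. If the naive placement of $\tilde A$ fails on either count, I would first try reordering or reflecting $\mathcal S$: multiply $u_A$ by a signed permutation in $O(8B,\mathbb Q)$, which leaves orthogonality and rationality intact, so that the forced pair in each $\mathcal S$ block sits at a consistent parity. Alternatively, I would replace $u_A$ by $u_A$ times a diagonal $\pm1$ matrix chosen to cancel the $y$-dependent sign $(-1)^{|y|}$ if one appears.
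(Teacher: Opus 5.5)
Your matrix is the right one, but the step you leave ``to be tuned'' is the whole content of the lemma, and the placement you actually write down fails. Let $\mathcal S_1\subset\mathcal S$ be the $2B$ modes reached by $\tilde A$, and $\mathcal S_2$ the other $2B$. On $\mathcal R\oplus\mathcal S_1$ your $K$ equals $-\Omega_A$, so $u_A=\mathcal C_A^{\mathsf T}\oplus I_{2B}$, which is the paper's matrix before it permutes the input modes; $u_A$ acts as the identity on $\mathcal S_2$.

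With $\tilde A=(A\ 0)$ in the natural order and $B\ge2$, $\mathcal S_2$ contains an entire four-mode block. For the outcome $\mathcal R$, whose rows vanish on $\mathcal S_2$, every selector then picks a zero column. For the outcome $\mathcal S$, whose $\mathcal S_2$ rows are $(0\ 0\ I)$, a nonzero minor needs all of $\mathcal S_2$ among its columns, which one pair per block cannot supply. Either way $\mathcal P_{x_A}(1;u_A)=0$ identically. Your alternative, all of $\mathcal R$ plus half of $\mathcal S$, has $6B$ occupied modes and is not half filled.

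The reordering of $\mathcal S$ you hold in reserve is the fix, but its job is not sign or parity. Padding block $i$ must consist of the $\mathcal S_1$ modes $2i-1,2i$ followed by the $\mathcal S_2$ modes $2i-1,2i$. In the paper's transposed convention this is block $B+i$, with rows $(\mathcal L_A)_{2i-1,:}$, $(\mathcal L_A)_{2i,:}$, $e_{2i-1}^{\mathsf T}$, $e_{2i}^{\mathsf T}$.

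With this arrangement, take the outcome to be all of $\mathcal S$ (the paper's ``last $4B$ rows''). A selector that uses an $\mathcal S_1$ pair in some padding block misses two $\mathcal S_2$ columns and contributes zero. So only the selectors $(y,0^B)$ survive, and
\[
 \det\begin{pmatrix}2(HA^{\mathsf T})_{:,Y_y}&0\\0&I_{2B}\end{pmatrix}
 =2^{2B}\det H\,\det A_{Y_y,:},\qquad H=(I+A^{\mathsf T}A)^{-1},
\]
with a sign independent of $y$. No Jacobi identity and no peeling of $-I$ is needed. Rows and columns overlap only on $\mathcal S_2$, where $u_A=I$, and the block $(u_A)_{\mathcal S_1,\mathcal R}=2[(I+K)^{-1}]_{\mathcal S_1,\mathcal R}=2HA^{\mathsf T}$ contains no diagonal term. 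The factor $2^{2B}$ is one $2$ per row of this $2B\times2B$ block, not one per four-mode block.

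Your plan for the outcome $\mathcal R$ (Jacobi on $(I+K)^{-1}$, then peeling off $-I$) does not produce a single minor. Rows and columns overlap on the $2B$ selected $\mathcal R$ modes, so you get a sum over subsets of that overlap. If you want to keep the outcome $\mathcal R$, apply the complementary-minor identity to the orthogonal matrix $u_A$ itself. It sends $\det(u_A)_{\mathcal R,Y_y\cup\mathcal S_1}$ to $\pm\det(u_A)_{\mathcal S,Y_{\bar y}\cup\mathcal S_2}$, where $\bar y$ is the complementary selector. The sign is independent of $y$ because both pairs of a block have odd index sum. Summing over $\bar y$ then recovers $\Delta_B(A)$ up to a global sign, which you can remove by flipping one row.
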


\begin{proof}
Write
\begin{equation}
 G_A=A^{\mathsf T}A,\qquad H_A=(I+G_A)^{-1},\qquad
 \Omega_A=
 \begin{pmatrix}0&-A\\ A^{\mathsf T}&0\end{pmatrix},
 \qquad
 \mathcal C_A=(I-\Omega_A)(I+\Omega_A)^{-1}.
 \label{eq:smcayley}
\end{equation}
Since $\Omega_A$ is skew-symmetric,
$\mathcal C_A\in O(6B,\mathbb F)$. Its last $2B$ columns are
\begin{equation}
 \mathcal V_A=
 \begin{pmatrix}2AH_A\\ (I-G_A)H_A\end{pmatrix},
 \qquad
 \mathcal V_A^{\mathsf T}\mathcal V_A
 =H_A[4G_A+(I-G_A)^2]H_A=I.
 \label{eq:smrationalstiefel}
\end{equation}
Append $I_{2B}$ and reorder the rows of
$\mathcal V_A\oplus I_{2B}$ into $2B$ four-row blocks: with
$\mathcal L_A:=(I-G_A)H_A$ and $(AH_A)_{\langle i\rangle}$ the $i$th four-row
block of $AH_A$, block $i\le B$ and padding block $B+i$ are
\begin{equation}
 \begin{pmatrix}2\,(AH_A)_{\langle i\rangle}&0\end{pmatrix}
 \qquad\text{and}\qquad
 \begin{pmatrix}
  (\mathcal L_A)_{2i-1,:}&0\\
  (\mathcal L_A)_{2i,:}&0\\
  0&e_{2i-1}^{\mathsf T}\\
  0&e_{2i}^{\mathsf T}
 \end{pmatrix},
 \qquad 1\le i\le B.
\end{equation}

After this row permutation, $\mathcal V_A\oplus I_{2B}$ is the last
$4B$ columns of a row-permuted copy
$\widetilde{\mathcal C}_A$ of $\mathcal C_A\oplus I_{2B}$.
Set $u_A=\widetilde{\mathcal C}_A^{\mathsf T}$ and let $x_A$ occupy its last $4B$ rows.
The selected original-block pairs precede the padding identity pairs, so the determinant below has the displayed positive prefactor; no additional row-sign correction is needed. Any selector using the first pair of a
padding block has zero determinant, because the first $B$ selected
pairs already give $2B$ rows supported on the first summand. Thus
only selectors $(y,0^B)$ survive, and
\begin{equation}
 \det (u_A)_{\mathcal I_{x_A},Y_{(y,0^B)}}
 =2^{2B}\det(H_A)\det A_{Y_y,:}.
\end{equation}
Summation over $y$ gives Eq.~\eqref{eq:smphysicaldilation}, and all
matrix operations have polynomial bit complexity over the fixed field
$\mathbb F$.
\end{proof}

\begin{proposition}[Worst-case hardness at fixed $\lambda$]
\label{smprop:sharpP}
Fix $\lambda\in(0,1/2]$ such that
$\vartheta=\sqrt{\lambda/(1-\lambda)}$ is nonzero and algebraic. Exact
evaluation of either
\begin{equation}
 \matel{x}{\wedgepow{2N}u}{\psi_\lambda^{\otimes N}}
 \quad\text{or}\quad
 p_x=\left|\matel{x}{\wedgepow{2N}u}
 {\psi_\lambda^{\otimes N}}\right|^2
\end{equation}
for worst-case $x$ and
$u\in O(4N,\mathbb Q(\vartheta))\subset\U(4N)$, represented in a fixed
basis of $\mathbb Q(\vartheta)$, is $\#\mathrm P$-hard under polynomial-time Turing
reductions.
\end{proposition}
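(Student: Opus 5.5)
The plan is a reduction from the $\#\mathrm P$-hard problem of computing $\Delta_B(A)$ for rational (or field-valued) $A$, going through Lemma~\ref{smlem:physicaldilation} and Eq.~\eqref{eq:smPpoly}. The difficulty is that the lemma realizes the selector sum only at $\vartheta=1$. For general fixed $\vartheta$ I would absorb the weight $\vartheta^{|y|}$ by a block-local rescaling of the input matrix.

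\emph{Step 1: hardness of the selector sum.} I would show that $\Delta_B(A)$ is $\#\mathrm P$-hard over $\mathbb Q$. The idea is to follow Ref.~\cite{Ivanov2017Computational}: choose each four-row block so that the two pairs carry rows encoding a permanent-type gadget. Then the sum over $2^B$ pair-selections of $2B\times2B$ determinants expands, via Laplace expansion in the block rows, into a permanent of a related $0/1$ matrix up to an explicit nonzero factor. I would take this embedding as given from that reference and only check that the prefactor is polynomially computable and nonzero.

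\emph{Step 2: absorbing $\vartheta$.} Let $A'$ be obtained from $A$ by multiplying the two rows of the \emph{second} pair of each original block by $\vartheta^{1/2}$ and leaving the first pair unchanged. Then $\det A'_{Y_y,:}=\vartheta^{\,B-|y|}\det A_{Y_y,:}$. This has the wrong orientation and involves a square root, so instead I would scale the first pair by $\vartheta^{-1}$ on one of its two rows only. That gives $\det A'_{Y_y,:}=\vartheta^{-|y|}\det A_{Y_y,:}$, which stays in $\mathbb Q(\vartheta)$. Next, apply Lemma~\ref{smlem:physicaldilation} over $\mathbb F=\mathbb Q(\vartheta)$ (a real algebraic field, since $\vartheta$ is real algebraic) to the matrix $A''$ whose first-pair rows are scaled so that $\mathcal P_{x}(\vartheta;u_{A''})=c\,\Delta_B(A)$. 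The point is that in $\mathcal P_x(\vartheta;u)$ the factor $\vartheta^{|y|}$ multiplies exactly the determinants selecting first pairs. Since the padding blocks always select their second pair, they contribute no $\vartheta$-weight. Concretely, the rows of the first pair of block $i$ in $A''$ should carry a combined factor $\vartheta^{-1}$ (one row scaled by $\vartheta^{-1}$), giving $\vartheta^{|y|}\det A''_{Y_y,:}=\det A_{Y_y,:}$. Here $c=2^{2B}\det(I+A''^{\mathsf T}A'')^{-1}$ is nonzero and computable in polynomial time. Dividing by the explicit factor $(1-\lambda)^{N/2}c$ recovers $\Delta_B(A)$ from the amplitude. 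An amplitude oracle therefore yields a $\#\mathrm P$-hard quantity with one query.

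\emph{Step 3: probabilities.} Since $u$ is real orthogonal and the input is real, the amplitude is real. The probability $p_x$ therefore gives its square, and the amplitude is determined up to sign. To remove the sign I would use a standard shift: query $p$ for $A$ and for a modified instance whose amplitude is $\mathcal P+\kappa$ with known $\kappa\neq0$. One way to get this is to append one extra block whose contribution adds a known constant selector term, or to use linearity in one row. Comparing $\mathcal P^2$ and $(\mathcal P+\kappa)^2$ then fixes $\mathcal P$. Alternatively, $\#\mathrm P$-hardness of computing $\mathrm{Perm}^2$ already suffices, since the permanent of a $0/1$ matrix is nonnegative.

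\emph{Main obstacle.} I expect Step 2's bookkeeping to be the hard part: the lemma's Cayley construction mixes rows through $H_A$, so I must check that the row scaling commutes with the construction. It does, because the lemma applies to an arbitrary $A$ and its output identity holds for $A''$ directly. The remaining cost is to verify the fermionic sign conventions between Eq.~\eqref{eq:smPpoly} and the lemma's ordering, which the lemma asserts need no correction.
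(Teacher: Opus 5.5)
Your proposal is correct and follows the paper's proof: you scale one row of each original first pair by $\vartheta^{-1}$ so that the weights $\vartheta^{|y|}$ cancel, apply the orthogonal dilation lemma over $\mathbb Q(\vartheta)$ with the padding blocks fixed on their unweighted second pairs, and divide out the explicit positive prefactor. For probabilities, your fallback is also the paper's route: use $\operatorname{per}(M)\ge0$ and take the integer square root. The sign-fixing shift trick you sketch first is unnecessary.
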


\begin{proof}
Ivanov's reduction maps a $0$--$1$ matrix $M$ to
$A\in\mathbb Z^{4B\times2B}$, with $B$ polynomial in the size of
$M$, such that
$\Delta_B(A)=\operatorname{per}(M)\ge0$~\cite{Ivanov2017Computational,Valiant1979Permanent}.
Multiply one row of the first pair in every block by $\vartheta^{-1}$
and call the resulting matrix $A^{(\vartheta)}$. Then
$\det A^{(\vartheta)}_{Y_y,:}=\vartheta^{-|y|}\det A_{Y_y,:}$.

Apply Lemma~\ref{smlem:physicaldilation} over $\mathbb Q(\vartheta)$. Put the
scaled pairs in the $\ket{1100}$ slots of the first $B$ blocks and the
padding identity pairs in the $\ket{0011}$ slots of the remaining
blocks. The factors $\vartheta^{|y|}$ in Eq.~\eqref{eq:smPpoly} cancel the row
scalings, while every surviving padding choice has weight one. Hence
\begin{equation}
 \mathcal P_{x^{(\vartheta)}}(\vartheta;u^{(\vartheta)})
 =2^{2B}\det\!\left(I+(A^{(\vartheta)})^{\mathsf T}
   A^{(\vartheta)}\right)^{-1}
 \operatorname{per}(M).
 \label{eq:smhardembedding}
\end{equation}
Here $N=2B$ and
$(1-\lambda)^{N/2}=(1+\vartheta^2)^{-B}\in\mathbb Q(\vartheta)$. The prefactor is
positive and exactly computable, so the amplitude recovers
$\operatorname{per}(M)$. From the probability one recovers its square
and then its nonnegative integer square root. The field degree is
fixed, and all exact data have polynomial bit length.
\end{proof}

For algebraic $\gamma=4\lambda(1-\lambda)\in(0,1]$, one has
$\vartheta=(1-\sqrt{1-\gamma})/\sqrt\gamma$, which is nonzero and algebraic.
Thus Proposition~\ref{smprop:sharpP} applies at every fixed algebraic defect
density.

At maximal density, approximate-sampling hardness additionally uses
the average-case assumption below.

\subsection{A finite ensemble with hiding}
\label{sm:scales}

We discretize the Haar ensemble while preserving the probability moments and the permutation symmetry needed for hiding.

\smallskip
\paragraph*{Finite-precision Haar coupling.}
Use the Haar parametrization of $\U(4N)$ in the corrected
Appendix B of Ref.~\cite{Pozniak1998Composed}
(\href{https://arxiv.org/abs/chao-dyn/9707006v2}{arXiv v2}), with $O(N^2)$ independent
parameters. For each parameter one samples one of $2^{b_N}$
equal-mass bins, $b_N=\lceil cN^2\rceil$. Fix a deterministic
implementation of dyadic bisection in the stereographic parameter
toward the bin's probability midpoint, stopping when a certified CDF
interval of width at most $2^{-b_N-4}$ lies wholly in the middle half
of the bin's probability interval. For mixing angles use $s\in[0,1]$;
for phases use $s\in[-1,1]$ in the two charts centered at $\pi/2$ and
$3\pi/2$, selecting the chart by the bin's half-circle.
For a mixing angle,
\begin{equation}
 (\cos\theta,\sin\theta)
 =\left(\frac{1-s^2}{1+s^2},\frac{2s}{1+s^2}\right),
 \qquad s\in\mathbb Q.
\end{equation}
The mixing variable $w=\sin^2\theta$ has a $\mathrm{Beta}(a,b)$
cumulative distribution with integer parameters $a,b=O(N)$,
rational coefficients, and degree
$a+b-1=O(N)$. The density is at most
$B(a,b)^{-1}\le2^{O(N)}$, so a bin of mass $2^{-b_N}$ has width at
least $2^{-b_N-O(N)}$ and $O(b_N)$ bisection steps suffice. Uniform
phase bins have width $O(2^{-b_N})$. The resulting unitary $\widehat u$
uses $O(N^4)$ random bits and has rational real and imaginary parts of
polynomial bit length.

For an equal-mass beta bin $[w_-,w_+]$, set $\Delta w=w_+-w_-$ and
$d=a+b-1$. On the middle half of this interval in $w$, both $w$ and $1-w$ are at least $\Delta w/4$,
so, using $B(a,b)\le1$,
\begin{equation}
 2^{-b_N}\ge\frac{\Delta w}{2B(a,b)}\left(\frac{\Delta w}4\right)^{d-1}
 \ge\frac{\Delta w}2\left(\frac{\Delta w}4\right)^{d-1}.
\end{equation}
Together with
$|\arcsin\sqrt{w_+}-\arcsin\sqrt{w_-}|\le2\sqrt{\Delta w}$, this implies
\begin{equation}
 \begin{aligned}
 \Delta w
 &\le8\bigl(2^{1-b_N}\bigr)^{1/(a+b-1)},\\
 |\arcsin\sqrt{w_+}-\arcsin\sqrt{w_-}|
 &\le4\sqrt2\,
 \bigl(2^{1-b_N}\bigr)^{1/[2(a+b-1)]}
 =2^{-\Omega(N)}.
 \end{aligned}
\end{equation}
Phase bins are smaller.  Couple $\widehat u$ to a Haar unitary $u$ by
drawing each continuous parameter from its sampled bin.  Telescoping
over the $O(N^2)$ Givens factors and taking $c$ sufficiently large
gives
\begin{equation}
 \|\widehat u-u\|_{\rm op}
 =O\!\left(N^2 2^{-\Omega(N)}\right)
 \le D_N^{-2}.
 \label{eq:smquantcoupling}
\end{equation}

\smallskip
\paragraph*{Permutation hiding.}
Draw $\lceil\log_2((4N)!D_N^3)\rceil$ fair bits, reduce the resulting
integer modulo $(4N)!$, and convert that rank to a permutation
$\pi\in\mathfrak S_{4N}$ by Lehmer unranking.  The residue classes
differ in size by at most one, so
\begin{equation}
 d_{\rm TV}(\pi,\mathrm{Unif}(\mathfrak S_{4N}))
 \le\tau_N:=
 (4N)!\,2^{-\lceil\log_2((4N)!D_N^3)\rceil}
 \le D_N^{-3}.
\end{equation}
Set $\widetilde u=\mathsf P_\pi\widehat u$ and let $\mathcal D_N$ be
its law. Under a maximal coupling with a uniform permutation
$\pi_{\rm H}$, the unitary
$u_{\rm H}=\mathsf P_{\pi_{\rm H}}u$ is Haar and, except with
probability $\tau_N$,
\begin{equation}
 \left\|\wedgepow{2N}\widetilde u
       -\wedgepow{2N}u_{\rm H}\right\|_{\rm op}
 \le2ND_N^{-2},\qquad
 |p_x(\widetilde u)-p_x(u_{\rm H})|\le4ND_N^{-2}.
 \label{eq:smwedgecoupling}
\end{equation}
The ensemble $\mathcal D_N$ is efficiently samplable and left
mode-permutation invariant up to $O(\tau_N)$ in total variation.

\begin{lemma}[Moment transfer and hiding]
\label{smlem:fphiding}
For $X_x(v)=D_Np_x(v)$ and every fixed $x$,
\begin{equation}
 \E_{v\sim\mathcal D_N}X_x(v)=1+o(1),\qquad
 \E_{v\sim\mathcal D_N}X_x(v)^2=\CN_N+o(1).
 \label{eq:smfpmoments}
\end{equation}
If a classical sampler with output law $q_v$ satisfies
$\E_{v\sim\mathcal D_N}\|q_v-p_v\|_1\le\varepsilon$, then for every
fixed half-filled outcome $x_0$ there is an efficient randomized
experiment, with acceptance probability $\bar q_{x_0}(v)$, such that
\begin{equation}
 \E_{v\sim\mathcal D_N}
 |\bar q_{x_0}(v)-p_v(x_0)|
 \le\frac{\varepsilon+o(1)}{D_N}.
 \label{eq:smfphiding}
\end{equation}
\end{lemma}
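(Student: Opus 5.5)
The plan has two independent parts. The moment transfer comes from the maximal coupling of Eq.~\eqref{eq:smwedgecoupling} together with a crude bound on the bad event. The hiding statement comes from the Aaronson--Arkhipov randomization over output mode permutations. It uses that a mode permutation $\mathsf P_\sigma$ acts on outcomes by $p_{\mathsf P_\sigma v}(\sigma x)=p_v(x)$ and that $\mathfrak S_{4N}$ acts transitively on half-filled patterns.

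\emph{Moments.} Couple $v=\widetilde u\sim\mathcal D_N$ with the Haar unitary $u_{\rm H}$ as in Eq.~\eqref{eq:smwedgecoupling}. Write $X=X_x(v)$ and $Y=X_x(u_{\rm H})$.
\begin{itemize}
\item On the good event, which has probability at least $1-\tau_N$, one has $|X-Y|\le4N/D_N$.
\item On the bad event, $0\le X,Y\le D_N$ because probabilities are at most $1$. The bad event therefore contributes at most $D_N\tau_N\le D_N^{-2}$ to $|\E X-\E Y|$ and at most $D_N^2\tau_N\le D_N^{-1}$ to $|\E X^2-\E Y^2|$.
\item For the second moment on the good event, write $|X^2-Y^2|\le|X-Y|(2Y+|X-Y|)$. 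Using $\E Y=1$, this contributes at most $(4N/D_N)(2+4N/D_N)=o(1)$.
\end{itemize}
Since $\E Y=1$ and $\E Y^2=\CN_N$ under Haar, Eq.~\eqref{eq:smfpmoments} follows.

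\emph{Hiding.} Fix $x_0$. The experiment receives $v$ and proceeds as follows.
\begin{enumerate}
\item Draw a (near-)uniform permutation $\sigma$ by the same Lehmer-unranking procedure used for $\pi$, which has total-variation error $\le\tau_N$.
\item Run the sampler on $\mathsf P_\sigma v$.
\item Accept iff the output equals $\sigma(x_0)$.
\end{enumerate}
The acceptance probability is $\bar q_{x_0}(v)=\E_\sigma q_{\mathsf P_\sigma v}(\sigma x_0)$. Since $p_v(x_0)=p_{\mathsf P_\sigma v}(\sigma x_0)$, the triangle inequality gives
\[
\E_v|\bar q_{x_0}(v)-p_v(x_0)|\le \E_{v,\sigma}\big|q_{w}(z)-p_{w}(z)\big|,\qquad w=\mathsf P_\sigma v,\ z=\sigma x_0 .
\]

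The key step is that the pair $(w,z)$ is within total variation $O(\tau_N)$ of $(v',z')$, where $v'\sim\mathcal D_N$ and $z'$ is a uniform half-filled outcome independent of $v'$. To see this, write $v=\mathsf P_\pi\widehat u$, so that $w=\mathsf P_{\sigma\pi}\widehat u$. If $\pi$ and $\sigma$ were exactly uniform and independent of $\widehat u$, then $\rho=\sigma\pi$ and $\sigma$ would be independent uniform permutations. In that case $w$ would be independent of $z=\sigma x_0$, which is uniform by transitivity. Replacing the exact uniform laws by the actual ones costs at most $2\tau_N$, and replacing $\rho$ uniform by the law of $\pi$ costs another $\tau_N$.

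Because the integrand is bounded by $1$, we obtain
\[
\E_{v,\sigma}|q_w(z)-p_w(z)|\le \E_{v'}\frac1{D_N}\sum_z|q_{v'}(z)-p_{v'}(z)|+O(\tau_N)\le\frac{\varepsilon+O(D_N^{-2})}{D_N}.
\]
This is Eq.~\eqref{eq:smfphiding}. Efficiency holds because $\sigma$ uses polynomially many bits, $\mathsf P_\sigma v$ keeps polynomial bit length, and the sampler runs once per trial.

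The main obstacle is the decoupling step: showing that $(w,\sigma x_0)$ is close to a product law using only the approximate uniformity of $\pi$ and $\sigma$, rather than exact left invariance of $\mathcal D_N$. The first thing to try is the explicit substitution $\rho=\sigma\pi$ with a maximal coupling of both permutations to exact uniforms. This should keep every error at $O(\tau_N)$, which is negligible against $1/D_N$.
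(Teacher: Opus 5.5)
Your proposal is correct and follows the paper's proof. For the moments you use the same maximal coupling, with a good event giving $|X-Y|\le 4N/D_N$ and a bad event of probability at most $\tau_N$ on which $X,Y\le D_N$. For hiding you use the same random-relabeling experiment, accepting iff the sampler run on $\mathsf P_\sigma v$ outputs $\sigma x_0$. Your explicit decoupling via $\rho=\sigma\pi$ spells out what the paper compresses into ``approximate left invariance and uniformity of the orbit of $x_0$,'' and it gives the same $O(\tau_N)$ error.
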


\begin{proof}
Write $X=X_x(\widetilde u)$, $Y=X_x(u_{\rm H})$, and
$\delta_N=4N/D_N$, and let $\mathcal G$ be the nonexceptional event in
Eq.~\eqref{eq:smwedgecoupling}. On $\mathcal G$, $|X-Y|\le\delta_N$ and
$X\le Y+\delta_N$; since $\E Y=1$,
$\E[\mathbf 1_{\mathcal G}(X+Y)]\le2+\delta_N$. On $\mathcal G^c$, both variables are at
most $D_N$. Therefore
\begin{equation}
 \begin{aligned}
 |\E X-1|
 &\le\delta_N+\tau_ND_N=o(1),\\
 |\E X^2-\E Y^2|
 &\le\delta_N(2+\delta_N)
 +2\tau_ND_N^2=o(1).
 \end{aligned}
\end{equation}

For hiding, draw an independent approximately uniform permutation
$\sigma$, run the sampler on $\mathsf P_\sigma v$, and accept iff its
output is $\mathsf P_\sigma x_0$. Thus
\begin{equation}
 \bar q_{x_0}(v)
 =\E_\sigma q_{\mathsf P_\sigma v}(\mathsf P_\sigma x_0).
\end{equation}
Mode covariance gives
$p_{\mathsf P_\sigma v}(\mathsf P_\sigma x_0)=p_v(x_0)$. Jensen's
inequality, approximate left invariance, and uniformity of the orbit
of $x_0$ give
\begin{equation}
 \E_v|\bar q_{x_0}(v)-p_v(x_0)|
 \le\frac1{D_N}\E_{v'}\|q_{v'}-p_{v'}\|_1+O(\tau_N),
\end{equation}
which proves Eq.~\eqref{eq:smfphiding}. The experiment uses
polynomially many random bits and is therefore admissible for
Stockmeyer counting.
\end{proof}

The additive-error and coarse-marginal guarantees of
Refs.~\cite{Hakkaku2022Quantifying,ReardonSmith2024Improved,
Dias2024Classical,Oh2026Classical} do not by themselves provide the
relative-error guarantee assumed below.

\subsection{Probability estimation and sampling hardness}
\label{sm:optconst}

The following average-case conjecture supplies the additional ingredient needed to turn probability estimation into a sampling-hardness statement.

\begin{conjecture}[Average-case probability hardness at maximal density]
\label{smconj:achard}
Fix the input $\ket{\psi_{1/2}}^{\otimes N}$ and a half-filled
outcome $x$. Suppose a randomized procedure, given
$\widetilde u\sim\mathcal D_N$, estimates $p_x(\widetilde u)$ to
relative error $1/4$ with joint probability at least $2/3$ over the
instance and its internal randomness, for all sufficiently large $N$.
Then a polynomial-time randomized Turing reduction using this
procedure as an oracle computes a $\#\mathrm P$-complete function on
every input with success probability at least $2/3$.
\end{conjecture}

The success fraction is essential. At maximal density the constant
estimate $1/D_N$ already succeeds to relative error $1/4$ with limiting
probability $e^{-4/5}-e^{-4/3}\simeq0.186$. We use the high-success
assumption above separately from the worst-case result.

\begin{corollary}[Conditional sampling hardness at maximal density]
\label{cor:hardness}
Assume Conjecture~\ref{smconj:achard}. For the input
$\ket{\psi_{1/2}}^{\otimes N}$, no polynomial-time classical sampler
can satisfy
$\E_{\widetilde u\sim\mathcal D_N}
\|q_{\widetilde u}-p_{\widetilde u}\|_1\le1/64$
for all sufficiently large $N$, unless the polynomial hierarchy
collapses to its third level.
\end{corollary}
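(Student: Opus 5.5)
We argue by contradiction, following the standard Stockmeyer route, and assume a polynomial-time sampler $q_v$ with $\E_{v\sim\mathcal D_N}\|q_v-p_v\|_1\le\varepsilon:=1/64$. Fix a half-filled outcome $x_0$. By Lemma~\ref{smlem:fphiding}, the sampler yields an efficient randomized experiment whose acceptance probability $\bar q_{x_0}(v)$ obeys $\E_v|\bar q_{x_0}(v)-p_v(x_0)|\le(\varepsilon+o(1))/D_N$. Markov's inequality then gives, for any $\delta>0$,
\[
\Pr_v\big[|\bar q_{x_0}(v)-p_v(x_0)|\ge(\varepsilon+o(1))/(\delta D_N)\big]\le\delta .
\]
Stockmeyer approximate counting in $\mathrm{FBPP}^{\mathrm{NP}}$ estimates $\bar q_{x_0}(v)$ to multiplicative error $1+1/\mathrm{poly}(N)$. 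This produces an estimator $\tilde p$ with additive error close to $\varepsilon/(\delta D_N)$, except on a fraction $\delta$ of instances.

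Next we turn this additive error into relative error $1/4$, using anticoncentration to control $p_v(x_0)$ from below. At $\gamma=1$, Theorem~\ref{thm:PT} and Eq.~\eqref{eq:smfracpt} give $X_{x_0}\Rightarrow\mathrm{Exp}(1)$ over Haar $u$. By the coupling \eqref{eq:smwedgecoupling}, the same limit holds over $\mathcal D_N$, since $|X(\widetilde u)-X(u_{\rm H})|\le4N/D_N$ except with probability $\tau_N$. Consequently $\Pr_v[X_{x_0}\ge\alpha]\to e^{-\alpha}$. On the event $\{X_{x_0}\ge\alpha\}$ minus the Markov-bad set, the relative error is at most $\varepsilon/(\delta\alpha)+o(1)$. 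Requiring this to be at most $1/4$ and the joint success probability to be at least $2/3$ leads to the constraints
\[
\frac{\varepsilon}{\delta\alpha}<\frac14,\qquad e^{-\alpha}-\delta-o(1)\ge\frac23 .
\]
With $\varepsilon=1/64$, the choice $\delta=1/4$ forces $\alpha>1/4$, and then $e^{-1/4}-1/4\approx0.529$, which is too small. The choice $\delta=0.1$ gives $\alpha>0.156$, hence $e^{-0.16}-0.1\approx0.752\ge2/3$. This choice works with margin, so the $o(1)$ terms and the Stockmeyer $1/\mathrm{poly}$ slack are absorbed for large $N$. If the paper's tighter analysis uses Paley--Zygmund with $\CN_N\to2$ from \eqref{eq:smfpmoments} instead of the exponential limit, the constants are cruder. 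I would therefore use the full exponential limit.

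The resulting $\mathrm{BPP}^{\mathrm{NP}}$ procedure meets the hypothesis of Conjecture~\ref{smconj:achard}: it gives relative error $1/4$ with joint success at least $2/3$ over $\widetilde u\sim\mathcal D_N$ and the internal randomness, for all large $N$. The conjecture then places a $\#\mathrm P$-complete function in $\mathrm{BPP}^{\mathrm{BPP}^{\mathrm{NP}}}\subseteq\Sigma_3$ via a randomized reduction. Toda's theorem gives $\mathrm{PH}\subseteq\mathrm P^{\#\mathrm P}\subseteq\Sigma_3$, so the hierarchy collapses to its third level. Here the Sipser--Lautemann/Adleman arguments absorb the BPP layers.

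The main obstacle is the bookkeeping in the second step. One must check that Stockmeyer's output, which is multiplicative in $\bar q$ rather than $p$, still yields relative error $1/4$ on $p$ on the good set. One must also verify that the weak-convergence statement transfers to $\mathcal D_N$ uniformly enough to give the probability bound at the fixed threshold $\alpha$, since weak convergence to a continuous law suffices at a single point. I would handle both by inserting explicit slack, for instance targeting relative error $1/5$ before Stockmeyer.
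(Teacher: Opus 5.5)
Your architecture matches the paper's: hiding via Lemma~\ref{smlem:fphiding}, Stockmeyer counting applied to $\bar q_{x_0}$, transfer of $X\Rightarrow\mathrm{Exp}(1)$ to $\mathcal D_N$ through the coupling, then Conjecture~\ref{smconj:achard} and Toda. The gap is the quantitative step, which must certify joint success probability above $2/3$ at $\varepsilon=1/64$; as written it fails. Your constraint $\varepsilon/(\delta\alpha)<1/4$ means $\alpha>4\varepsilon/\delta$. For $\delta=0.1$ this is $\alpha>0.625$, not $0.156$: you dropped the factor $4$ here, although you kept it for $\delta=1/4$. Then $e^{-0.625}-0.1\approx0.44$. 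No other choice rescues the union bound. Setting $\delta=1/(16\alpha)$, the best guarantee is $\max_\alpha[e^{-\alpha}-1/(16\alpha)]\approx0.533$, attained near $\alpha\approx0.29$. A single Markov threshold combined with a separate anticoncentration threshold certifies hardness only for $\varepsilon\lesssim0.0076$, roughly half the tolerance the statement claims.

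The missing idea is to charge the $\ell_1$ error budget against $X$ itself rather than against a fixed level $\alpha$. The paper works in expectation throughout. It takes $\widehat X$ to be the clipped, rescaled Stockmeyer estimate, with relative error $N^{-2}$ and failure probability $D_N^{-2}$, so that $\E|\widehat X-X|\le\varepsilon+o(1)$. For $\mathcal B=\{|\widehat X-X|>X/4\}$ it then uses the pointwise inequality $t\,\mathbf{1}_{\mathcal B}\le4|\widehat X-X|+(t-X)_+$. Taking expectations gives $\Pr(\mathcal B^c)\ge[\E\min(X,t)-4\varepsilon]/t-o(1)$. Because $\min(\cdot,t)$ is bounded and continuous, $\E\min(X,t)\to1-e^{-t}$, and the choice $t=\ln(3/2)$ yields $13/(48\ln(3/2))\approx0.66796>2/3$.

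This bound is sharp at the critical tolerance $\varepsilon_*(1)=(1-2\ln(3/2))/12\approx1/63.5$, since an adversary spends its budget on the smallest-$X$ instances. The margin at $1/64$ is therefore tiny. In particular, the constant slack you propose, such as targeting relative error $1/5$, would by itself break the argument. It replaces $4\varepsilon$ by $5\varepsilon$, and the bound then falls below $2/3$ for every $t$. All slack from Stockmeyer and from the coupling must be $o(1)$, which is why the paper controls those errors in expectation.
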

\begin{proof}
For any pure input and a sampler with ensemble-averaged $\ell_1$
error at most $\varepsilon$, write
$X=D_Np_{x_0}(v)$ and $Y=D_N\bar q_{x_0}(v)$ for $v\sim\mathcal D_N$.
Lemma~\ref{smlem:fphiding} gives
$\E|Y-X|\le\varepsilon+o(1)$ and $\E Y=O(1)$.
Stockmeyer counting~\cite{Stockmeyer1985On} with relative error
$N^{-2}$ and failure probability $D_N^{-2}$, with its probability
estimate clipped to $[0,1]$ and rescaled by $D_N$, produces
$\widehat X$ satisfying
\begin{equation}
 \E|\widehat X-Y|
 \le N^{-2}\E Y+D_N^{-1}=o(1),\qquad
 \E|\widehat X-X|\le\varepsilon+o(1).
 \label{eq:smstockrisk}
\end{equation}
Amplification takes $O(\log D_N)=O(N)$ repetitions, so this is an
$\mathrm{FBPP}^{\mathrm{NP}}$ procedure. Expectations include its
internal randomness.

For $\mathcal B=\{|\widehat X-X|>X/4\}$ and any fixed $t>0$, the
pointwise bound $t\mathbf1_{\mathcal B}\le4|\widehat X-X|+(t-X)_+$, where
$(z)_+:=\max\{z,0\}$, gives
\begin{equation}
 \Pr(\mathcal B^c)\ge
 \frac{\E\min(X,t)-4\varepsilon}{t}-o(1).
 \label{eq:smrisktrunc}
\end{equation}
At $\gamma=1$, Theorem~\ref{thm:PT}, Sec.~\ref{sm:fracpt}, and
Eq.~\eqref{eq:smwedgecoupling} give
$X\Rightarrow\operatorname{Exp}(1)$ on $\mathcal D_N$.
Taking $\varepsilon=1/64$ and $t=\ln(3/2)$ yields
$\E\min(X,t)=1-e^{-t}+o(1)=1/3+o(1)$, and hence
\begin{equation}
 \Pr(\mathcal B^c)\ge\frac{13}{48\ln(3/2)}-o(1)
 =0.667957\ldots-o(1)>\frac23.
 \label{eq:smptsuccess}
\end{equation}
Conjecture~\ref{smconj:achard} puts exact $\#\mathrm P$ computation
in $\mathrm{FBPP}^{\mathrm{NP}}$. Amplifying this worst-case counting
procedure and applying Toda's theorem give
$\mathrm{PH}\subseteq\mathrm{BPP}^{\mathrm{NP}}
\subseteq\Sigma_3^{\mathrm P}$.
\end{proof}

\begin{proposition}[Density-dependent sampling tolerance]
\label{smprop:densitytolerance}
For $0<\gamma\le1$, define
\begin{equation}
 \varepsilon_*(\gamma):=\frac14
 \left[\inf_Z\E\min(Z,t)-\frac{2t}{3}\right]_+,
 \qquad t=\ln(3/2),
 \label{eq:smdensitytolerance}
\end{equation}
where the infimum is over nonnegative random variables with
$\E Z^\ell=\ell!\gamma^{-\binom{\ell}{2}}$ for every integer $\ell\ge1$.
This certified tolerance is nondecreasing in $\gamma$, strictly increasing
on its positive region, and
\begin{equation}
 \lim_{\gamma\uparrow1}\varepsilon_*(\gamma)
 =\varepsilon_*(1)
 =\frac{1-2\ln(3/2)}{12}>\frac1{64}.
 \label{eq:smtoleranceendpoint}
\end{equation}
In particular, $\varepsilon_*(\gamma)>7\times10^{-4}$ for
$9/10\le\gamma\le1$.
For a fixed algebraic $\gamma$, assume the analogue of
Conjecture~\ref{smconj:achard} for
$\ket{\psi_\lambda}^{\otimes N}$ with
$\lambda=(1-\sqrt{1-\gamma})/2$, retaining relative error $1/4$
and joint success probability $2/3$.
For any fixed $0\le\varepsilon<\varepsilon_*(\gamma)$, no polynomial-time
classical sampler can achieve ensemble-averaged $\ell_1$ error at most
$\varepsilon$ for all sufficiently large $N$,
unless the polynomial hierarchy collapses to its third level.
\end{proposition}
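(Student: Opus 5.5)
The plan is to rerun the proof of Corollary~\ref{cor:hardness} with the Porter--Thomas limit replaced by a moment-constrained worst case, and then to study $\varepsilon_*(\gamma)$ as a function of $\gamma$.

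\emph{Hardness step.} Fix an algebraic $\gamma$ and $\varepsilon<\varepsilon_*(\gamma)$, and suppose a polynomial-time sampler has ensemble error at most $\varepsilon$. Lemma~\ref{smlem:fphiding} and Stockmeyer counting give $\widehat X$ with $\E|\widehat X-X|\le\varepsilon+o(1)$, exactly as in Eq.~\eqref{eq:smstockrisk}. The truncation inequality \eqref{eq:smrisktrunc} then yields
\begin{equation}
 \Pr(\mathcal B^c)\ge t^{-1}\big[\E\min(X,t)-4\varepsilon\big]-o(1).
\end{equation}
To control $\E\min(X,t)$ I would argue along subsequences. By Theorem~\ref{thm:PT} and the coupling \eqref{eq:smwedgecoupling}, every integer moment of $X$ on $\mathcal D_N$ converges to $\ell!\gamma^{-\binom{\ell}{2}}$. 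Since $\E X=1+o(1)$, the family is tight, so some subsequence converges weakly to a limit $Z$. Because each moment of order $\ell$ is bounded uniformly, the moments of order $\ell-1$ are uniformly integrable, and hence $Z$ has the prescribed moments. Since $\min(\cdot,t)$ is bounded and continuous, it follows that $\liminf_N\E\min(X,t)\ge\inf_Z\E\min(Z,t)$. With $\varepsilon<\varepsilon_*$ this gives $\Pr(\mathcal B^c)>2/3$ for large $N$. The assumed analogue of Conjecture~\ref{smconj:achard} and Toda's theorem then collapse PH to $\Sigma_3^{\mathrm P}$.

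\emph{Properties of $\varepsilon_*$.} At $\gamma=1$ the moments are those of $\mathrm{Exp}(1)$, which is moment-determinate by Carleman. The infimum is therefore $1-e^{-t}=1/3$, and
\begin{equation}
 \varepsilon_*(1)=\tfrac14\big(\tfrac13-\tfrac{2\ln(3/2)}{3}\big)=\frac{1-2\ln(3/2)}{12}.
\end{equation}
For monotonicity I would use the lognormal representation. If $\gamma'<\gamma$ and $Z$ has the $\gamma$-moments, then $Z\,L'$ with $L'$ an independent mean-one lognormal of log-variance $\log(\gamma/\gamma')$ has the $\gamma'$-moments. Concavity of $z\mapsto\min(z,t)$ together with Jensen's inequality over $L'$ gives
\begin{equation}
 \E\min(ZL',t)\le\E\min(Z,t).
\end{equation}
Hence the infimum $I(\gamma)$ is nondecreasing in $\gamma$. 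For strictness I would show that the Jensen gap is bounded below by a positive amount depending on $\gamma/\gamma'$, uniformly over near-minimizers. The idea is that $Z$ has second moment $2/\gamma$, so $Z$ must place mass both below and above $t$; lognormal spreading across the kink at $t$ then loses a definite amount.

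\emph{Main obstacle.} The hardest parts are the continuity $\lim_{\gamma\uparrow1}I(\gamma)=1/3$ and the explicit bound $\varepsilon_*>7\times10^{-4}$ on $[9/10,1]$. Both require a lower bound on $\E\min(Z,t)$ under moment constraints that hold for every $Z$, including the indeterminate ones. My first attempt would use only finitely many moments. Writing $\min(z,t)\ge p(z)$ for a polynomial $p$ with $p\le\min(\cdot,t)$ on $[0,\infty)$, for instance a degree-two or degree-three minorant touching at chosen nodes, one gets
\begin{equation}
 \E\min(Z,t)\ge\sum_\ell p_\ell\,\ell!\,\gamma^{-\binom{\ell}{2}}.
\end{equation}
This right-hand side is continuous in $\gamma$. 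Optimizing the minorant, which is a truncated Stieltjes moment problem, at $\gamma=1$ and letting the degree grow should recover $1/3$, because $\mathrm{Exp}(1)$ is determinate. This gives the limit. A fixed low-degree minorant, evaluated numerically, should give the stated constant on $[9/10,1]$.
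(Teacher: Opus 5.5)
Your hardness reduction, the value of $\varepsilon_*(1)$ via Carleman determinacy, and the lognormal-multiplication proof that the infimum is nondecreasing all match the paper.

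\textbf{The gap.} The genuine gap is the quantitative claim $\varepsilon_*(\gamma)>7\times10^{-4}$ on $[9/10,1]$. You supply no certificate, and the low-degree minorants you suggest provably cannot supply one. Consider the Gauss--Radau rule for $\mathrm{Exp}(1)$ with a node at $0$, namely $\nu=\tfrac13\delta_0+\tfrac{2+\sqrt3}{6}\delta_{3-\sqrt3}+\tfrac{2-\sqrt3}{6}\delta_{3+\sqrt3}$. It has moments $1,1,2,6,24$. Because $3-\sqrt3>t$, it gives $\E_\nu\min(Z,t)=2t/3$ exactly. By weak duality, any polynomial minorant of degree at most four certifies at most $2t/3$ at $\gamma=1$. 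Multiplying $\nu$ by your lognormal factor gives the same cap for every $\gamma\le1$. So minorants of degree two, three, or four never certify $\varepsilon_*>0$.

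The paper uses six moments. It takes $\min(x,t)\ge t[1-\varpi(x)^2]$ with $\varpi(x)=1-9x/8+x^2/4-x^3/80$. This is valid because $\varpi(x)\ge1-9x/8\ge0$ on $[0,t]$ and $9t/4<1$. The six prescribed moments at $\gamma=9/10$ then give $\E\varpi(Z)^2<3263/10^4$. Monotonicity extends the bound to all of $[9/10,1]$, so only the single point $\gamma=9/10$ needs evaluating. The resulting margin is thin, about $7.1\times10^{-4}$, so the choice of degree and coefficients is essential rather than cosmetic.

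\textbf{The limit and strictness.} For the limit $\gamma\uparrow1$, your minorant route needs strong duality for the truncated Stieltjes problem with the nonpolynomial objective $\min(\cdot,t)$. It also needs convergence of the truncated infima, which is itself a compactness argument, and you only assert that the scheme ``should recover'' $1/3$. The uniform-integrability argument you already use in the hardness step does the job directly. It shows the admissible class is tight and weakly closed, so the infimum is attained. Minimizers at $\gamma_k\uparrow1$ have uniformly bounded moments of every order. Their subsequential limits therefore have moments $\ell!$ and are $\mathrm{Exp}(1)$, which gives $\inf_Z\E\min(Z,t)\to1/3$.

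Attainment also makes strictness a one-line strict Jensen inequality. For a minimizer $Z$ and each $z>0$, $zL'$ has full support and so straddles the kink at $t$, while $\Pr(Z>0)>0$. Your uniform-gap alternative can be completed. However, the operative input is that $\E Z=1$ and $\E Z^2=2/\gamma$ keep a definite amount of mass in a compact subset of $(0,\infty)$. It is not that $Z$ places mass on both sides of $t$, since the straddling is supplied by $L'$.

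\textbf{Moment transfer.} The coupling estimate \eqref{eq:smwedgecoupling} by itself controls only low-order moments on $\mathcal D_N$. On its exceptional event of probability $\tau_N\le D_N^{-3}$, the variable $X$ can be as large as $D_N$. It is cleaner, as in the paper, to take weak limits under the Haar ensemble, where Theorem~\ref{thm:PT} controls every moment, and then transfer those weak limits to $\mathcal D_N$.
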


\begin{proof}
The admissible laws are nonempty by Eq.~\eqref{eq:smlognormal}.
They are weakly compact. Their mean ensures tightness, and each
fixed higher moment makes the preceding moment uniformly integrable,
so all moment constraints pass to weak limits.
Bounded continuity of $\min(\cdot,t)$ ensures that the infimum is attained.

For $\gamma_1<\gamma_2$, take a minimizer $Z$ at $\gamma_2$ and an
independent $L_{\gamma_1/\gamma_2}$ from Eq.~\eqref{eq:smlognormal}.
Their product has the required moments at $\gamma_1$.
Conditional Jensen gives
$\E\min(ZL_{\gamma_1/\gamma_2},t)<\E\min(Z,t)$.
Strictness follows from $\Pr(Z>0)>0$ and the full positive support
of the mean-one lognormal factor. This orders the infima and hence
their positive parts. As $\gamma\uparrow1$, every subsequential
limit of minimizing laws has moments $\ell!$ and is exponential
by Carleman's criterion. Boundedness of $\min(Z,t)$ gives
Eq.~\eqref{eq:smtoleranceendpoint}, including positivity near one.

For an explicit bound, put $\varpi(x)=1-9x/8+x^2/4-x^3/80$.
On $[0,t]$, $\varpi(x)\ge1-9x/8\ge0$ and $9t/4<1$, so
$1-\varpi(x)^2\le9x/4\le x/t$.
For $x\ge t$, the inequality $1-\varpi(x)^2\le1$ suffices. Hence
\begin{equation}
 \min(x,t)\ge t[1-\varpi(x)^2]\qquad(x\ge0).
 \label{eq:smtruncminorant}
\end{equation}
At $\gamma=9/10$, the first six prescribed moments give
$\E\varpi(Z)^2<3263/10^4$.
Since $t>2/5$, Eq.~\eqref{eq:smdensitytolerance} yields
$\varepsilon_*(9/10)>7\times10^{-4}$.
Monotonicity extends this bound to $\gamma\ge9/10$.

By Theorem~\ref{thm:PT} and the same higher-moment argument,
every subsequential Haar limit at fixed density has the prescribed
moments. The coupling in Eq.~\eqref{eq:smwedgecoupling} gives the
same weak limits on $\mathcal D_N$. Thus, whenever
$\varepsilon_*(\gamma)>0$,
$\liminf_N\E\min(X,t)\ge2t/3+4\varepsilon_*(\gamma)$.
For $\varepsilon<\varepsilon_*(\gamma)$,
Eq.~\eqref{eq:smrisktrunc} gives success probability strictly above
$2/3$ for all large $N$.
The counting reduction of Corollary~\ref{cor:hardness} completes the proof.
\end{proof}

\subsection{A per-circuit count of heavy outputs}
\label{sm:typical}

The ensemble collision moment also bounds the fraction of heavy outcomes for most individual circuits.
For a fixed circuit define
\begin{equation}
 \operatorname{Col}(V)
 :=D_N\sum_xp_V(x)^2
 =\E_{x\sim{\rm unif}}X_x(V)^2.
\end{equation}
Since $\E_V\operatorname{Col}(V)=\CN_N$, at least a $1-\delta$
fraction of circuits obey
$\operatorname{Col}(V)\le\CN_N/\delta$. For each such circuit,
Paley--Zygmund over a uniform outcome gives
\begin{equation}
 \frac1{D_N}\big|\{x:X_x(V)\ge\alpha\}\big|
 \ge\frac{\delta(1-\alpha)^2}{\CN_N},
 \qquad 0<\alpha,\delta<1.
 \label{eq:smtypical}
\end{equation}
For bounded-block products with
$\nu/N\to\bar\gamma>0$, the right-hand side is
$\delta\bar\gamma(1-\alpha)^2/2+o(1)$.

\section{Dephasing and the noisy collision law}
\label{sm:dephasing}

The pure-input assumption matters when interpreting the defect as a non-Gaussian resource. Blockwise dephasing preserves the one-body occupations while changing the replica-singlet weight and suppressing the exchange contribution to the collision law. We derive this change explicitly and show why the defect and collision moment alone are insufficient certificates for mixed inputs.
For any mixed state $\rho$ supported on the half-filling sector,
\begin{equation}
 \E_Vp_x^2=
 \frac{\tr(\Pi_2\rho^{\otimes2})}
 {\dim\Schur_{(2^{2N})}\CC^{4N}},
\end{equation}
and zero replica charge reduces $\Pi_2$ to the same polar average as
in the pure case.  We apply blockwise dephasing in the canonical
Slater-pair basis before the passive circuit, giving
\begin{align}
 &\rho_i=(1-\lambda_i)\ket{0011}\!\bra{0011}
 +\lambda_i\ket{1100}\!\bra{1100}+(1-\eta_i)\sqrt{\lambda_i(1-\lambda_i)}
 \big(\ket{0011}\!\bra{1100}+\mathrm{h.c.}\big),
 \qquad 0\le \eta_i\le1,\nonumber\\
 &\rho=\bigotimes_i\rho_i,
  \label{eq:smdephasedblock}
\end{align}
where $\eta_i=0$ is the coherent block and $\eta_i=1$ removes its
off-diagonal coherence.  The two experimental copies are prepared
independently in the same state $\rho$.

With $z=\cos\varphi$, the polar average gives
\begin{equation}
 \CN_N=\kappa_N\,\frac12\int_{-1}^1
 \prod_{i=1}^N
 \left\{1-\frac{m_i}{8}(1-z)
 [3+z-(1-\eta_i)^2(1-z)]\right\}\,dz .
 \label{eq:smnoisy}
\end{equation}
For $\eta_i=0$ this reduces to Eq.~\eqref{eq:integral}.

The $z=1$ identity saddle has noise-independent decay rate $\nu/2$ and
contributes $\nu^{-1}[1+o(1)]$ to the polar integral.  At the $z=-1$
exchange saddle,
\begin{equation}
 \begin{aligned}
 P_-&:=\prod_i\left[1-\frac{m_i}{2}\eta_i(2-\eta_i)\right]
      =\tr\rho^2,\\
 S_-&:=\sum_i\frac{m_i(1-\eta_i)^2}
      {2-m_i\eta_i(2-\eta_i)} .
 \end{aligned}
 \label{eq:smexchangeparams}
\end{equation}
If $\nu\to\infty$ and $\eta_i\le\eta_0<1$ on every nonfree block,
then $S_-=\Theta(\nu)$. Each block factor in Eq.~\eqref{eq:smnoisy}
is quadratic in $z$ and bounded below by $1/2$, so
\begin{equation}
 \CN_N
 =N\left(\frac1\nu+\frac{P_-}{2S_-}\right)[1+o(1)] .
 \label{eq:smnoisyhardedge}
\end{equation}
If $\eta_i=\eta$ for all $i$ and $\eta\nu\to c\in(0,\infty)$,
$P_-\to e^{-c}$ and $2S_-=\nu+O(1)$, giving
\begin{equation}
 \CN_N=(1+e^{-c})\frac N\nu[1+o(1)] .
 \label{eq:smdephasingcrossover}
\end{equation}
At any fixed $\eta\in(0,1)$ and fixed positive $\nu/N$, however,
$P_-=e^{-\Theta(N)}$. The exchange saddle disappears, and the pure-state
leading coefficient is reduced from $2N/\nu$ to $N/\nu$.

At $\eta=1$ the state is convex-Gaussian: a classically samplable
mixture of product Slater determinants. A one-endpoint analysis of
Eq.~\eqref{eq:smnoisy} gives
\begin{equation}
 \CN_N=\frac N\nu[1+o(1)]\longrightarrow\frac1{\bar\gamma}
 \quad\text{if}\quad \nu/N\to\bar\gamma>0 .
 \label{eq:smfulldephasing}
\end{equation}
In particular $\bar\gamma=1/2$ reproduces the Porter--Thomas collision
value $2$. Thus for mixed states neither the covariance defect nor the
collision moment alone certifies non-convex-Gaussianity or higher-order
Porter--Thomas statistics.

\section{Numerical protocols}
\label{sm:numerics}

The numerical checks use different protocols for collision moments, higher integer moments, and fractional moments. We specify the analytic integrals or sampled amplitudes used in each case, together with the sample counts and uncertainty estimates. These details distinguish numerical evidence for the Mellin conjecture from the rigorous moment results.
All collision curves use the one-dimensional integral
\eqref{eq:integral}; for uniform profiles we use the terminating
hypergeometric form \eqref{eq:sm2F1}.

For the third moment we use
$\E_VX_x^3=R_3(N)\A_3$, with $R_3$ from Eq.~\eqref{eq:smRq}, and
sample the flag integral \eqref{eq:smflag} using Haar-random $\U(3)$ matrices.
Each block symbol reduces to
\begin{equation}
 \phi^{(3)}_{\lambda_i}(g)
 =1-\frac{m_i}{4}
 \left(3-\sum_{a,b=1}^3|g_{a,b}|^4\right).
\end{equation}
The theory curves use $4\times10^6$ flag samples, with relative
standard error below $1.2\%$ at the largest displayed $N$ and smaller
elsewhere.  The direct-Haar markers in
Fig.~\ref{fig:cost}(c) use $2\times10^5$ circuits, with the displayed
$95\%$ batch intervals.

Figure~\ref{smfig:mellin} uses direct floating-point amplitude
evaluations. Each draw uses a reduced Haar-random $2N$-frame in
$\CC^{4N}$, and the determinant sweep costs $O(2^N N^3)$. We evaluate
the $2^N$ weighted $2N\times2N$ minors in vectorized complex-double
batches and combine the batch sums with compensated summation. At fixed
$N$, the same sweep supplies all displayed values of $\nu$ by changing
only the block weights. As checks, the samples reproduce
$\E_VX_x=1$ and the closed-form collision moment.

For $N=12,14,16,18,20,22$, respectively, the numbers of independent
Haar draws are $(2.0,1.6,1.2,0.8,0.6,0.4)\times10^5$, for a total of
$6.6\times10^5$, corresponding to
$2.629632\times10^{11}$ determinant evaluations. Panels (a) and (b) use
two-sided $95\%$ Student-$t$
intervals from $20$ equal batches.  In panel (c), for each $N$ and
$r=1/4,1/2,3/4$, we fit the least-squares slope of
$\log\widehat{\E}_VX_x^r$ against $\log\gamma$ over
$4\le\nu\le0.6N$.  Its pointwise $95\%$ interval comes from $1000$
joint resamples, across $\nu$, of the same $20$ batches.  The
displayed bar is the interval width divided by $2\times1.96$, a
Gaussian-equivalent visual scale rather than a bootstrap standard
deviation.  On the sampled even-integer defect grid, the window at $N=12$ consists only of $\nu=4,6$, so the
reported slope is their secant.
\end{document}